\documentclass[11pt,a4paper]{amsart}

\usepackage{amsmath,amssymb,mathtools}
\usepackage[hmargin=3cm,vmargin=3.2cm]{geometry}
\usepackage[T1]{fontenc}
\usepackage[utf8]{inputenc}
\usepackage{enumitem}
\usepackage[dvipsnames]{xcolor}
\usepackage{cancel}
\usepackage{comment}

\numberwithin{equation}{section}

\newtheorem{theorem}{Theorem}[section]
\newtheorem{lemma}[theorem]{Lemma}
\newtheorem{definition}[theorem]{Definition}
\newtheorem{remark}[theorem]{Remark}
\newtheorem{corollary}[theorem]{Corollary}
\newtheorem{rmk}[theorem]{Remark}

\newtheorem{proposition}[theorem]{Proposition}

\newtheorem{example}[theorem]{Example}
\DeclareMathOperator{\End}{End}

\DeclareMathOperator{\lcm}{lcm}

\newcommand{\tnabla}{\widetilde\nabla}

\newcommand{\dd}{\mathrm d}

\newcommand{\calH}{\mathcal H}

\def\na{\nabla}

\def\O{\mathcal{O}_M}
\def\Le{\mathcal{L}_e}
\def\T{\mathcal{T}_M}
\def\OO{\Omega_M}

\def\na{\nabla}

\def\inw{\in\{1,\dots,n\}}

\def\na{\nabla}

\def\O{\mathcal{O}_M}
\def\Le{\mathcal{L}_e}
\def\T{\mathcal{T}_M}
\def\OO{\Omega_M}

\def\na{\nabla}

\def\proof{\noindent\hspace{2em}{\itshape Proof: }}

\def\endproof{\hspace*{\fill}~\qedsymbol\par\endtrivlist\unskip}
\newcommand{\eqa}{\begin{eqnarray}}
\newcommand{\eeqa}{\end{eqnarray}}
\newcommand{\beq}{\begin{equation}}
\newcommand{\eeq}{\end{equation}}

\newcommand{\tn}{\widetilde \nabla}

\begin{document}

\title[Cyclic F-manifolds, distinguished  connections and integrability]{Cyclic F-manifolds, distinguished  connections and integrability}

\author{Alessandro Arsie}
\address{Department of Mathematics and Statistics, The University of Toledo, 2801 W. Bancroft St., Toledo, OH 43606, USA}
\email{alessandro.arsie@utoledo.edu}

\author{Paolo Lorenzoni}
\address{Dipartimento di Matematica e Applicazioni, Universit\`a degli Studi di Milano-Bicocca, Via Roberto Cozzi 53, I-20125 Milano, Italy; and INFN Sezione di Milano-Bicocca}
\email{paolo.lorenzoni@unimib.it}

\date{}

\begin{abstract}
We show that the geometry of Hertling-Manin F-manifolds $(M,\circ,e)$ provide the appropriate theoretical framework for studying the integrability of quasilinear systems of first-order evolutionary partial differential equations of the form ${\bf u}_t=X\circ {\bf u}_x$ (F-systems) under the mild assumption that the unit vector field is cyclic with respect to the operator of multiplication by the vector field $X$. This approach is very general and allows us to treat even non-regular systems that were previously beyond the scope of existing techniques. Like in the regular case  the information about integrability is contained
 in a torsionless connection associated with the system and the integrability condition reduces to a geometric condition involving the Riemann tensor of the connection and the structure functions of the product. We prove that a locally conservative F-system is integrable and, in the analytic setting, also the converse statement, thereby providing a full characterization of integrability. Moreover, in the analytic case, we prove the existence of a family of analytic symmetries providing, in principle, the unique local analytic solution of the Cauchy problem through the generalised hodograph method. 
\end{abstract}

\maketitle
\tableofcontents

\section{Introduction}\label{SectionIntro}

\subsection{Motivation and historical background}\label{SubsectIntroMotivation}
A \emph{system of hydrodynamic type} is a first-order quasilinear evolutionary system of PDEs
\begin{equation}\label{sysHT}
u^i_t=V^i_j(u)u^j_x,\qquad i\inw.
\end{equation}
Such systems are pervasive in mathematical physics: they arise in gas dynamics, chromatography, the Whitham averaging of soliton equations, the theory of Frobenius and F-manifolds, and the geometry of Hamiltonian PDE hierarchies. Two structural properties of \eqref{sysHT} have played a central role in their study.

\medskip

\noindent\emph{Diagonalisability via Riemann invariants.} When the eigenvalues of $V$ are pairwise distinct, the existence of local coordinates $(r^1,\dots,r^n)$ in which \eqref{sysHT} takes the diagonal form
\begin{equation}\label{DsysHT}
r^i_t=v^i(r)\,r^i_x,\qquad i\inw,
\end{equation}
is equivalent to the vanishing of the \emph{Haantjes tensor} of $V$.

\medskip

\noindent\emph{Hamiltonianity.} The system \eqref{sysHT} admits a local Hamiltonian structure of Dubrovin--Novikov type if and only if there exists a flat metric $g$ such that
\begin{eqnarray}
\label{DN1intro}
g_{ik}V^k_j&=&g_{jk}V^k_i,\\
\label{DN2intro}
\nabla^g_iV^k_j&=&\nabla^g_jV^k_i,
\end{eqnarray}
where $\nabla^g$ denotes the Levi-Civita connection of $g$. In this case \eqref{sysHT} can be written in the Hamiltonian form
\begin{equation}\label{HsysHT}
u^i_t=P^{ij}\frac{\delta H}{\delta u^j},\qquad i\inw,
\end{equation}
with
\begin{equation}\label{LPBHT}
P^{ij}=g^{ij}\partial_x-\frac{1}{2}g^{ik}g^{js}\left(\partial_kg_{sl}+\partial_lg_{ks}-\partial_sg_{kl}\right)u^l_x,
\end{equation}
and $H=\int h(u)\,dx$ a local functional of hydrodynamic type, so that the variational derivatives of $H$ coincide with the partial derivatives of the density $h$.

A third notion --- \emph{integrability} --- was introduced by Tsarev \cite{Tsarev} for diagonal systems \eqref{DsysHT} as the compatibility of the linear system governing the characteristic velocities of the symmetries. Tsarev called such systems \emph{semi-Hamiltonian} and proved that all their local solutions can be obtained via the \emph{generalised hodograph method}. A foundational result, conjectured by Novikov (see  \cite{DN}, page 6) and proved by Tsarev in \cite{Tsarev}, asserts that strictly hyperbolic Hamiltonian systems of hydrodynamic type are automatically integrable: in the diagonal case, the existence of a (not necessarily flat) solution to \eqref{DN1intro}--\eqref{DN2intro} implies Tsarev's compatibility condition.

A nice characterisation of Tsarev's integrability condition was obtained by S\'evennec. He proved in  \cite{Sevennec} that semi-Hamiltonian systems are conservative systems (i.e.\ systems that can be written as systems of conservation laws) admitting Riemann invariants.

\subsection{Beyond the diagonal case}\label{SubsectIntroBeyondDiag}
In the non-diagonalisable setting the picture is considerably more delicate, because formal compatibility of the linear system for symmetries no longer implies compatibility of \eqref{DN1intro}--\eqref{DN2intro}. A productive way to organise the non-diagonalisable theory has emerged through the geometry of \emph{F-manifolds}. In a series of works \cite{LPVG1,LPVG2,LPVG3} Tsarev's theory has been extended to systems whose tensor $V$ is block-diagonal with each block of lower-triangular Toeplitz form
\begin{equation}\label{toeplitz}
V_{(\alpha)}=
\begin{bmatrix}
v^{1(\alpha)} & 0 & \dots & 0\cr
v^{2(\alpha)} & v^{1(\alpha)} & \dots & 0\cr
\vdots & \ddots & \ddots & \vdots\cr
v^{m_\alpha(\alpha)} & \dots & v^{2(\alpha)} & v^{1(\alpha)}
\end{bmatrix},
\end{equation}
which is the natural normal form arising from regular F-manifolds in the sense of David and Hertling \cite{DH}.

Non-diagonalisable hydrodynamic systems arise in many contexts: integrable hierarchies built from the Fr\"olicher--Nijenhuis bicomplex \cite{LM,LP2}; non-semisimple (bi-)flat F-manifolds and Dubrovin--Frobenius manifolds \cite{LPR,LP1}; and Hamiltonian and bi-Hamiltonian structures of hydrodynamic type and their deformations \cite{DVLS};  non-diagonalisable systems of hydrodynamic type related to integrable geodesic flows (see \cite{BKM1,BKM2}); reductions of the soliton gas kinetic equation (see \cite{FP}) and mKP hierarchy (see \cite{XF}).  It is also worth mentioning \cite{KK,KO}. Within F-manifold theory, the prototypical example (see for instance \cite{LPR,ALImrn}) is what we call an \emph{F-system}:
\beq\label{FsysIntro}
u^i_t=(X\circ u_x)^i=c^i_{jk}X^ju^k_x,\qquad i\inw,
\eeq
where $\circ$ is the commutative associative product on $\T$ and $X$ is a vector field. Hertling--Manin's compatibility condition \cite{HM} for $\circ$ implies the vanishing of the Haantjes tensor of $V:=X\circ$ and conversely, under the assumption of regularity the vanishing of the Haantjes tensor allows to write $V$ in the form $V:=X\circ$ (see \cite{BKM3}), so F-systems are the natural non-diagonalisable analogues of Tsarev's diagonal systems. In the semisimple case F-systems coincide with diagonal systems studied by Tsarev: the components of $X$ are the characteristic velocities, and the canonical coordinates that put the structure constants in the form $c^i_{jk}=\delta^i_j\delta^i_k$ play the role of Riemann invariants.

\subsection{The regular case: the geometric reformulation of integrability}\label{SubsectIntroGeomReform}
Regular F-manifolds are F-manifolds equipped by a cyclic Euler vector field.
In the non-semisimple regular case and in the holomorphic setting, David and Hertling \cite{DH} exhibit distinguished coordinates in which the structure constants take the constant form
\begin{equation*}\label{strconsts_reg}
c^{i(\alpha)}_{j(\beta)k(\gamma)}=\delta^\alpha_\beta\delta^\alpha_\gamma\delta^i_{j+k-1},
\end{equation*}
for all $\alpha,\beta,\gamma\in\{1,\dots,r\}$ and $i\in\{1,\dots,m_\alpha\}$, $j\in\{1,\dots,m_\beta\}$, $k\in\{1,\dots,m_\gamma\}$, and \eqref{FsysIntro} acquires the block lower-triangular Toeplitz shape \eqref{toeplitz}.\footnote{A similar result holds true in the real analytic and in the real smooth setting in the case of real eigenvalues, while in the case of complex conjugate eigenvalues one has to replace the block triangular Toeplitz form \eqref{toeplitz} with its coomplex analogue (we refer to \cite{BKM3} for details).}

In the regular case the integrability of an F-system \eqref{FsysIntro} admits an elegant geometric reformulation. Under mild non-degeneracy assumptions on the vector field $X$, there is a \emph{unique} torsionless connection $\tn$ associated with $X$, characterised by
\[
\tn e=0,\qquad d_{\tn}(X\circ)=0.
\]
The formal compatibility condition for the symmetries of \eqref{FsysIntro} can then be written as a curvature constraint on $\tn$ (the two equivalent  conditions below being related by the Bianchi identity):
\beq\label{shc-intro}
\tilde R^s_{lmi}c^j_{ks}+\tilde R^s_{lik}c^j_{ms}+\tilde R^s_{lkm}c^j_{is}=0,\qquad
\tilde R^j_{skl}c^s_{mi}+\tilde R^j_{smk}c^s_{li}+\tilde R^j_{slm}c^s_{ki}=0.
\eeq
An F-manifold equipped with a torsionless connection $\nabla$ that is compatible with $\circ$ and satisfies \eqref{shc-intro} is called an \emph{F-manifold with compatible connection (and flat unit)} \cite{LPR}. In this language, integrable F-systems are precisely those that locally endow their underlying F-manifold with the structure of an F-manifold with compatible connection. The notion of integrability deserves some additional explanations. Indeed, the conditions \eqref{shc-intro} can be interpreted as compatibility conditions in the sense of Darboux (see \cite{darboux} and also \cite{BJK}) only for a special class of regular F-systems called \emph{Darboux-Tsarev} systems. For this special class one can fully extend Tsarev's theory, proving that, locally and under some transversality conditions, any solution can be obtained in implicit form via the generalised hodograph method.  Tsarev's theory of integrability has been extended to regular F-systems in the papers \cite{LPVG1,LPVG2,LPVG3}. In the general case, the linear system for the symmetries and the linear system for densities of conservation laws cannot be written in Darboux form. In this case the conditions \eqref{shc-intro} can be interpreted as involutivity conditions in the sense of Cartan.

\subsection{Beyond the regular case: cyclic F-manifolds. Main results}\label{SubsectIntroMainResult}
The aim of this paper is going beyond the regular case considering F-manifolds equipped by a  cyclic vector field  (\emph{cyclic F-manifolds}). We can summarize  the main results of the paper as follows:
\begin{itemize}
\item We generalise the theory of integrability to F-systems associated with a cyclic F-manifold (\emph{cyclic F-systems}). This includes
\begin{enumerate}
\item The construction of a connection uniquely defined by the F-system and encoding its integrability properties. 
\item The study of the linear system of the symmetries and of the generalised hodograph method for this class of systems in the analytic setting. This part relies on the iterated application of the Cauchy-Kowalevski theorem, in the spirit of Cartan-K\"ahler theory.
\item  In the analytic setting,  we prove the existence of a family of symmetries providing the unique local analytic solution of the Cauchy problem through the generalised hodograph method.
\end{enumerate}
\item We prove that if a cyclic F-system \eqref{FsysIntro} admits a Hamiltonian structure of hydrodynamic type --- local or non-local --- then it is integrable, i.e.\ its associated connection $\tn$ satisfies \eqref{shc-intro}. The proof relies on a relation between F-manifolds with compatible connection
 and Riemannian F-manifolds generalising a result of \cite{ABLR}. This is a generalisation of  Novikov's Conjecture.
\item In the analytic setting, we prove that an F-system is integrable if and only if it is locally conservative. In the regular analytic case this means that an integrable system of hydrodynamic type defined by a $(1,1)$-tensor field with vanishing Haantjies tensor is integrable if and only if it is locally conservative.  This is a generalisation of S\'evennec's result on integrability of conservative systems admitting Riemann invariants. 
\end{itemize}

\noindent The restriction to the analytic (or holomorphic) category in the results above is essential rather than technical. The symmetry and hodograph theorems rest on the Cauchy--Kowalevski theorem, which is the only tool that treats all cyclic F-systems uniformly.

\medskip

\noindent\textbf{Remarks on notation}. We work indifferently in the category of $C^{\infty}$ manifolds, $C^{\omega}$ real analytic manifolds, or complex (holomorphic) manifolds. $\mathbb{K}=\mathbb{R}\text{\,or\,}\mathbb{C}$ depending on the context.  By a metric we mean a non-degenerate symmetric bilinear pairing on sections of $\T$ --- pseudo-Riemannian in the real case, complex-valued in the holomorphic case. We denote by $\O$ the structure sheaf of $M$, by $\T$ the tangent sheaf, by $\T^*$ the cotangent sheaf, and by $\OO^{\bullet}=\bigoplus_{k=0}^n\OO^k$ the graded sheaf of forms. The symbol $\circ$ always denotes the commutative associative product on $\T$ defined in \eqref{defFmani}, while $\cdot$ denotes the composition of morphisms. All sheaves involved are locally free of finite rank, so the reader may safely think in terms of vector bundles. The one exception is Theorem~\ref{thm:main-restated} in Section~\ref{SectionCKsymm}, which relies on the Cauchy--Kowalevski theorem and therefore requires the real analytic or the holomorphic setting; the proof works verbatim in both.

\medskip

\noindent\textbf{Acknowledgements}. The authors are thankful to GNFM (Gruppo Nazionale di Fisica Matematica) for supporting activities that contributed to the research reported in this paper. P. Lorenzoni is supported by funds of INFN (Istituto Nazionale di Fisica Nucleare) under IS-CSN4 Mathematical Methods of Nonlinear Physics.  We thank Sonmez \c{S}ahuto\u{g}lu for useful discussions.

\section{F-manifolds}\label{SectionFmnfs}
F-manifolds have been introduced by Hertling and Manin in \cite{HM}.
\begin{definition}\label{defFmani}
An \emph{F-manifold} is a manifold $M$ equipped with
\begin{itemize}
\item a morphism of sheaves $\circ:\T\times\T\rightarrow \T$  which is $\O$-bilinear, and it gives rise to a commutative associative product on $\T$. Moreover, it satisfies the following identity:
\[
\mathcal{L}_{X\circ Y} \circ=X\circ (\mathcal{L}_Y \circ) +Y\circ (\mathcal{L}_X\circ ),
\]
which can be shown to be equivalent (using commutativity and associativity of $\circ$ to re-order the factors) to
\begin{align}
&[X\circ Y,W\circ Z]-[X\circ Y, Z]\circ W-[X\circ Y, W]\circ Z\label{HMeq1free}\\
&-X\circ [Y, Z \circ W]+X\circ [Y, Z]\circ W +X\circ [Y, W]\circ Z\notag\\
&-Y\circ [X,Z\circ W]+Y\circ [X,Z]\circ W+Y\circ [X, W]\circ Z=0,\notag
\end{align}
for all local sections  $X,Y,W, Z$ of $\T$, where $[X,Y]$ is the Lie bracket and $\mathcal{L}_X$ is the Lie derivative. 
\newline
\item A distinguished global section $e$ of $\T$ such that 
\[e\circ X=X\] 
for all local sections $X$ of $\T$. This acts like the unit of $\circ$.
\end{itemize}
\end{definition}
In any coordinate system, the morphism $\circ$ can be expressed using its structure functions $c^k_{ij}$ via the formula $\partial_i\circ\partial_j=c^k_{ij}\partial_k$, where $\{\partial_1, \dots, \partial_{n}\}$, with $n=\text{dim}(M)$ is a local basis of $\T$ associated to the chosen coordinate system. Commutativity of $\circ$ translates into the equation 
\begin{equation}\label{comm1.eq}
X\circ Y=Y\circ X \text{ for all local sections $X,Y\in \T$ or } \;\; c^k_{ij}=c^k_{ji},
\end{equation}
while the associativity of $\circ$ is equivalent to 
\begin{equation}\label{asso1.eq}
(X\circ Y)\circ Z=X\circ (Y \circ Z) 
\end{equation}
for all local sections $X,Y\in \T$ or 
\begin{equation}\label{asso1.eqBis}
c^i_{sj}c^s_{km}=c^i_{sk}c^s_{jm}.
\end{equation}
Of course, since $X\circ \in \T\otimes_{\O}\OO^1$, we can also view the morphism $\circ$ as $\circ: \T\rightarrow \T\otimes_{\O}\T^*$.

Usually F-manifolds are equipped with additional structures. 
\begin{definition}\label{defFwithE}
An \emph{F-manifold with Euler vector field} is an F-manifold $M$ equipped with a global section of $\T$ satisfying
$$\mathcal{L}_E \circ=\circ.$$
\end{definition}

Other additional structures have been introduced motivated by the theory of integrable systems of hydrodynamic type \cite{LPR}. There the additional datum is a connection $\nabla$ satisfying suitable compatibility conditions coming from integrability
  of the associated integrable hierarchies. 

In particular, the additional geometric structure capturing integrability properties of F-systems  
\beq\label{Fsys}
u^i_t=(X\circ u_x)^i=c^i_{jk}X^ju^k_x,\qquad i\inw,
\eeq
is a torsionless  connection $\tnabla$ associated with the vector field $X$ defining  the system.

Under some additional assumptions on the product $\circ$ and on the  vector field $X$ this connection is uniquely defined. These assumptions are:
\newline
\newline
\emph{1. The regularity of the product}. 
\begin{definition}
Let $(M, \circ, e)$ be an F-manifold of dimension $n$.  The product $\circ$ is called regular at point $p$ if there exists an open neighborhood $U$ of $p$ and local coordinates (David-Hertling coordinates,  see \cite{DH}),
\[
        u^{i(\alpha)},
        \qquad
        \alpha=1,\dots,r,
        \qquad
        i=1,\dots,m_\alpha,
\]
with $\sum_{\alpha=1}^r m_{\alpha}=n$, 
in which the identity has components 
\[
        e^{i(\alpha)}=\delta^i_1
\]
and the product has the block form
\[
        \partial_{i(\alpha)}\circ \partial_{j(\beta)}=0
        \quad\text{if }\alpha\neq\beta,
\]
while, inside the block labelled by $\alpha$,
\begin{equation}\label{eq: product block}
        \partial_{i(\alpha)}\circ\partial_{j(\alpha)}
        =
        \begin{cases}
        \partial_{(i+j-1)(\alpha)}, & i+j-1\leq m_\alpha,\\
        0, & i+j-1>m_\alpha.
        \end{cases}
\end{equation}
Equivalently,
\[
        TM=\bigoplus_{\alpha=1}^r \mathcal A_\alpha,
        \qquad
        \mathcal A_\alpha
        :=
        \operatorname{span}\{\partial_{1(\alpha)},\dots,\partial_{m_\alpha(\alpha)}\},
\]
and the product is the direct sum of truncated polynomial algebras.
\end{definition}

\emph{2. The non-degeneracy of the vector field $X$}. 
\begin{definition}
 We call a vector field $X$ \emph{regularly non-degenerate} if in the David-Hertling coordinates 
  \beq\label{eq: regularly non-degenerate} X^{1(\alpha)}\neq X^{1(\beta)} \,  \text{ for all } \alpha\neq \beta,  \alpha, \beta \in \{1, \dots, r\} \, \text{  and } \, X^{2(\alpha)}\neq 0 \,  \text{ for every block with } m_{\alpha}\geq 2.\eeq
 \end{definition}

Under the above Assumptions $1$ and $2$,  there exists a unique torsionless connection $\tnabla$ satisfying 
\begin{equation}\label{eq: characterization of nable tilde}\tnabla e=0,\qquad d_{\tnabla}(X\circ)=0,\end{equation}
where the morphism $d_{\tnabla}:\T\otimes_{\O}\OO^k\rightarrow  \T\otimes_{\O}\OO^{k+1}$ is usually called exterior covariant derivative of vector-valued differential forms and is defined via:
\begin{eqnarray*}
(d_{\tnabla} \omega)(X_0, \dots, X_k)&=&\sum_{i=0}^k (-1)^i\tnabla _{X_i}(\omega(X_0, \dots, \hat{X}_i, \dots, X_k))+\\
&&\sum_{0\leq i<j\leq k}(-1)^{i+j}\omega([X_i, X_j], X_0, \dots, \hat{X}_i, \dots, \hat{X}_j, \dots X_k).
\end{eqnarray*}
 
 In particular, the following holds:
  \begin{theorem}\label{thm: LPvG1}\cite{LPVG1,LPVG3}
  Consider  a regular $n$-dimensional F-manifold, with product $\circ$, and an associated system of hydrodynamic type:
  \beq\label{SHS2}
u^i_{t}=(X\circ u_x)^i=c^i_{jk}X^ju^k_x.
\eeq
  If $X$ is regularly non-degenerate,
there exists a unique torsionless connection $\tnabla$ such that  equations \eqref{eq: characterization of nable tilde} hold. Moreover
\begin{equation}\label{nablaccomp2.eq}(\tnabla_X \circ)(Y, Z)=(\tnabla_Y \circ)(X,Z),
\end{equation}
for all local sections $X,Y, Z$ of $\T$. 
 \end{theorem}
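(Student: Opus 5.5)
Working in David--Hertling coordinates, I would show that the conditions $\tnabla e=0$ and $d_{\tnabla}(X\circ)=0$, together with symmetry of the Christoffel symbols, form a linear algebraic system for the $\tilde\Gamma^i_{jk}$ whose solution is unique when $X$ is regularly non-degenerate. Compatibility \eqref{nablaccomp2.eq} would then follow from the Hertling--Manin condition.

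\emph{Setting up the system.} Set $V=X\circ$, so $V^i_j=c^i_{jk}X^k$. In these coordinates the structure constants are constant and $e=\partial_{1(\alpha)}$ summed over blocks, so $\tnabla e=0$ reads $\tilde\Gamma^i_{j\,1(\alpha)}=0$ for all $i,j,\alpha$. The condition $d_{\tnabla}V=0$ reads
\[
\partial_jV^i_k-\partial_kV^i_j+\tilde\Gamma^i_{js}V^s_k-\tilde\Gamma^i_{ks}V^s_j=0,
\]
using torsion-freeness. Since the $c$ are constant, $\partial_jV^i_k=c^i_{ks}\partial_jX^s$.

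\emph{Solving block by block.} For fixed $i$, the unknowns $\tilde\Gamma^i_{js}$ form a symmetric matrix $A$, and the equation says that $AV-V^TA$ is a prescribed antisymmetric matrix $B^{(i)}$. The map $A\mapsto AV-V^TA$ from symmetric to antisymmetric matrices has kernel $\{A \text{ symmetric}: AV \text{ symmetric}\}$. This kernel is large, of dimension $n$ when $V$ is cyclic, so the extra constraints $\tilde\Gamma^i_{j1(\alpha)}=0$ are needed to kill it.

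\emph{Off-diagonal blocks.} For indices in distinct blocks $\alpha\ne\beta$, the leading parts $X^{1(\alpha)}-X^{1(\beta)}$ appear as coefficients, and non-vanishing of these differences lets me solve uniquely for the mixed Christoffel symbols.

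\emph{Within a block.} Here $V$ is the Toeplitz matrix $X^{1}I+X^{2}N+\dots$ with $N$ nilpotent. Subtracting $X^1 I$ does not change the equation, since $A$ is symmetric. Solving then becomes a triangular recursion in $N$ with pivot $X^{2(\alpha)}$, starting from the normalisation $\tilde\Gamma^i_{j1}=0$. I would verify that the kernel elements (the symmetric $A$ with $AV$ symmetric, which are Hankel-type in a block) are exactly fixed by their first row or column, so the normalisation removes them.

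\emph{Existence.} I must also show $B^{(i)}$ lies in the image of $A\mapsto AV-V^TA$ under the constraint. Equivalently, $B^{(i)}$ must pair to zero with the cokernel. This is where the F-manifold condition enters: in DH coordinates it amounts to integrability relations on $\partial X$ being automatic, because the derivatives $\partial_jX^s$ enter only through $c^i_{ks}\partial_jX^s$.

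\emph{Compatibility \eqref{nablaccomp2.eq}.} Differentiate $V=X\circ$ covariantly:
\[
\tnabla_j V^i_k=(\tnabla_j c)^i_{ks}X^s+c^i_{ks}\tnabla_jX^s .
\]
The condition $d_{\tnabla}V=0$ gives symmetry in $j,k$. Taking $X$ replaced by $e$, where $V=\mathrm{Id}$ and $\tnabla e=0$, I would combine this with the symmetry to get $(\tnabla_j c)^i_{ks}$ symmetric in $j,k$ contracted with $X$. Removing the contraction with $X$ requires cyclicity of $X$, namely that $e,X,X^2,\dots$ span, combined with the associativity of $\tnabla c$ derived from differentiating \eqref{asso1.eqBis}.

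\emph{Main obstacle.} I expect two steps to be the real work: the solvability, that is, identifying the cokernel conditions and showing they hold, and the step of de-contracting $X$. For the latter, I would first try writing $\tnabla c$ in DH coordinates, where $\partial c=0$ and so $\tnabla c$ is expressed purely through the $\tilde\Gamma$, and checking symmetry directly from the explicit triangular solution.
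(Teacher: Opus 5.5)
The first gap is your translation of $\tnabla e=0$. In David--Hertling coordinates $e=\sum_\alpha\partial_{1(\alpha)}$ has constant components, so $\tnabla e=0$ reads $\sum_\alpha\tilde\Gamma^i_{j\,1(\alpha)}=0$. It does not say that each $\tilde\Gamma^i_{j\,1(\alpha)}$ vanishes. For $r\ge2$ your normalisation is too strong and contradicts $d_{\tnabla}(X\circ)=0$. In the semisimple case ($m_\alpha=1$ for all $\alpha$) it forces $\tilde\Gamma\equiv0$, whereas $d_{\tnabla}(X\circ)=0$ forces $\tilde\Gamma^i_{ij}=\partial_jX^i/(X^j-X^i)$ for $i\neq j$. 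More generally, your own off-diagonal step determines the mixed symbols $\tilde\Gamma^i_{j\,1(\alpha)}$ with $j$ outside the block $\alpha$, and they are not zero.

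The same slip is why you expect a cokernel obstruction in the existence step, to be cured by the F-manifold axiom. With the correct constraint there is no obstruction, and no axiom is used. For fixed $i$, the symmetric matrices $A=(\tilde\Gamma^i_{js})_{j,s}$ with $Ae=0$ (which is exactly $\tnabla e=0$) form a space of dimension $n(n+1)/2-n=n(n-1)/2$. The map $A\mapsto AV-V^TA$ is injective on it: $Ae=0$ and $AV=V^TA$ give $(V^ke)^TA(V^le)=e^TAV^{k+l}e=0$ for all $k,l$. Hence it is a bijection onto the antisymmetric matrices. This is Lemma~\ref{lem:cyclic-iso}, which together with Proposition~\ref{prop:DH-cyclicity-equivalence} is how the paper obtains existence and uniqueness, coordinate-free. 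Your block scheme can be repaired along these lines. The lower-index blocks $(\beta,\gamma)$ with $\beta\neq\gamma$ are fixed by a Sylvester equation, because $X^{1(\beta)}\neq X^{1(\gamma)}$. After that, $\tnabla e=0$ \emph{prescribes} (rather than annihilates) the first column of each diagonal block, and this removes the Hankel-type kernel there, using $X^{2(\beta)}\neq0$.

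The second gap is in the compatibility part. As outlined, it never uses the Hertling--Manin condition \eqref{HMeq1free}, and it cannot succeed without it. Your inputs are $d_{\tnabla}(X\circ)=0$, $\tnabla e=0$, covariant associativity and cyclicity. All of these are available for every commutative associative unital product for which $X$ is cyclic, e.g.\ a semisimple product whose idempotent fields do not commute, since the linear algebra above uses nothing else. But total symmetry of $\tnabla\circ$ forces \eqref{HMeq1free} (Hertling's result recalled after \eqref{nablaccomp.eq}).

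Writing $T(U,Y,Z)=(\tnabla_U\circ)(Y,Z)$, HM is needed at two points.

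First, the contracted symmetry comes from $\tnabla_YX=Y\circ\tnabla_eX$, obtained by setting $Z=e$ in $d_{\tnabla}(X\circ)=0$. This requires $T(e,Y,Z)=0$. Under $\tnabla e=0$ one has $T(e,\cdot,\cdot)=\mathcal L_e\circ$, which vanishes by HM (in DH coordinates, by constancy of $c$ and $e$), not by $\tnabla e=0$ alone. Replacing $X$ by $e$ only yields the trivial $T(Y,e,Z)=0$ (Lemmas~\ref{lem:Te} and~\ref{lem:nablaX}).

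Second, for the de-contraction the paper sets $H(Y,Z)=T(X,Y,Z)-T(Y,X,Z)$. It shows that $H$ is symmetric with $H(e,\cdot)=0$ and that $C_X$ is $H$-self-adjoint. The self-adjointness needs the covariant Hertling--Manin identity of Lemma~\ref{lem:covHM} together with covariant associativity, and then $H=0$ by cyclicity (Proposition~\ref{prop:oneleg}). Lemma~\ref{lem:propagate}, again combining both identities, carries the symmetry from $X$ to every $X^{\circ k}$, hence to all directions.

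Your fallback of checking symmetry on the explicit DH solution would use HM only implicitly, through constant structure constants. As proposed, it also rests on the mis-normalised solution above and is not carried out.
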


 Let us now recall some important facts about the integrability of an F-system that will be reflected in the properties of the connection $\tnabla$. We refer for details to \cite{LPVG1,LPVG2,LPVG3}. Symmetries of the system \eqref{SHS2} are F-systems 
 \beq\label{Fsym}
u^i_\tau=(Y\circ u_x)^i=c^i_{jk}Y^ju^k_x,\qquad i\inw,
\eeq
compatible with  \eqref{SHS2} (i.e.\ $u_{t\tau}=u_{\tau t}$). It turns out that these are defined by vector fields $Y$ satisfying the condition
\beq\label{EqSym}
d_{\tnabla}(Y\circ)=0.
\eeq

 In the presence of symmetries, the connection $\tnabla$ associated to a system of the form \eqref{SHS2}, with $X$ regularly non-degenerate, not only satisfies the conditions \eqref{eq: characterization of nable tilde} and \eqref{nablaccomp2.eq}, but also a much deeper geometric constraint. This geometric constraint is singled out in the next Theorem (see equation \eqref{rc-intri}) and it is an incarnation of the notion of F-manifold with compatible connection as we will see in the next Section. 

 \begin{theorem}\label{LPVG_mainTh}(\cite{LPVG1})
	Let $\{X_{(0)},\dots,X_{(n-1)}\}$ be a set of linearly independent local vector fields on an $n$-dimensional regular F-manifold $(M,\circ,e)$. Assume that the corresponding flows
	\begin{align}
		{\bf  u}_{t_i}=X_{(i)}\circ {\bf u}_x,\qquad i\in\{0,\dots,n-1\},
		\notag
	\end{align}
	pairwise commute, and that there exists a local vector field $X\in\{X_{(0)},\dots,X_{(n-1)}\}$ which is regularly non-degenerate.  Then the connection $\tnabla$ associated to \eqref{SHS2} satisfies the conditions \eqref{eq: characterization of nable tilde},  \eqref{nablaccomp2.eq} and furthermore the following: 
	\beq\label{rc-intri}
Z\circ R^{\tnabla}(W,Y)(X)+W\circ R^{\tnabla}(Y,Z)(X)+Y\circ R^{\tnabla}(Z,W)(X)=0,
\eeq
for all local sections $X$, $Y$, $Z$, $W$ of $\T$, where $R^{\tnabla}(Z,W)(X)$ is the Riemann curvature of $\tnabla$.
\end{theorem}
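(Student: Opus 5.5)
The plan is to translate commutativity of the flows into conditions on the vector fields $X_{(i)}$, and then use linear independence to turn these conditions into the curvature identity \eqref{rc-intri}.

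\textbf{Step 1: commutativity gives closedness.} By Theorem~\ref{thm: LPvG1}, the regularly non-degenerate field $X$ determines a unique torsionless $\tnabla$ satisfying \eqref{eq: characterization of nable tilde} and \eqref{nablaccomp2.eq}. Commutativity of each flow $u_{t_i}=X_{(i)}\circ u_x$ with the $X$-flow is, as recalled in \eqref{EqSym}, equivalent to $d_{\tnabla}(X_{(i)}\circ)=0$ for every $i$. To verify this, I expand $u_{t\tau}=u_{\tau t}$. Terms quadratic in $u_x$ give the symmetric part of the condition, and terms linear in $u_{xx}$ give $[X\circ,Y\circ]=0$, which holds automatically by commutativity and associativity. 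The quadratic part, written with $\tnabla$, reads $(d_{\tnabla}(Y\circ))(X\circ u_x,u_x)$ plus terms that vanish by \eqref{nablaccomp2.eq}. Using regular non-degeneracy of $X$, these terms force $d_{\tnabla}(Y\circ)=0$.

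\textbf{Step 2: integrability of $d_{\tnabla}(Y\circ)=0$.} Applying $d_{\tnabla}$ once more gives $d_{\tnabla}^2(Y\circ)=R^{\tnabla}\wedge(Y\circ)=0$. Explicitly, for all $Z,W,V$,
\[
R(Z,W)(Y\circ V)+R(W,V)(Y\circ Z)+R(V,Z)(Y\circ W)=0 .
\]
I then expand $R(Z,W)(Y\circ V)$ through the derivation defect. Since $\tnabla(Y\circ)$ is expressed via $\tnabla Y$ and $\tnabla\circ$, the closedness $d_{\tnabla}(Y\circ)=0$ lets me rewrite $(\tnabla_Z Y)\circ W-(\tnabla_W Y)\circ Z$ in terms of $\tnabla\circ$. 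Differentiating once more, the identity becomes
\[
\mathcal{C}(Z,W,V)\,Y=0,
\]
where $\mathcal{C}$ is a tensor built from $R$, $\circ$ and $\tnabla\circ$, and it is \emph{linear in the value of $Y$ at the point}. More precisely, the equation has the form $\Phi(Y)+\Psi(\tnabla Y)=0$. The first-order terms $\Psi(\tnabla Y)$ should cancel after using the closedness condition and \eqref{nablaccomp2.eq}, which is symmetric in the appropriate slots. What remains is $\Phi(Y)=0$, which is tensorial in $Y$.

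\textbf{Step 3: independence and conclusion.} The fields $X_{(0)},\dots,X_{(n-1)}$ are pointwise linearly independent, so they span $T_pM$. Hence $\Phi(Y)=0$ for all $Y$. Choosing $Y=e$, which is flat by $\tnabla e=0$, gives $d_{\tnabla}(e\circ)=0$, i.e.\ torsion-freeness. For general $Y$, $\Phi(Y)=0$ reads
\[
\sum_{\text{cyc}(Z,W,V)} R(Z,W)(Y\circ V)=0 .
\]
Combined with the first Bianchi identity and \eqref{nablaccomp2.eq}, this should rearrange into $\sum_{\text{cyc}} Z\circ R(W,V)(X)=0$, which is \eqref{rc-intri}. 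This is the two-forms equivalence of \eqref{shc-intro} mentioned in the introduction, obtained by contracting with the structure constants and using associativity.

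\textbf{Main obstacle.} The hard step is Step 2: showing that the derivative terms $\Psi(\tnabla Y)$ genuinely cancel, leaving a purely algebraic condition on $Y$. I would first attempt it in David--Hertling coordinates, where $c^i_{jk}$ are constant. There $\tnabla\circ$ reduces to Christoffel combinations, and $\Psi$ can be checked directly. If it does not cancel, the fallback is to use all $n$ commuting fields: the relations $d_{\tnabla}(X_{(i)}\circ)=0$ for all $i$ constrain $\tnabla X_{(i)}$ enough to eliminate $\Psi$ by linear algebra.
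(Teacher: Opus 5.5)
Your argument is correct, and it is already complete at the first display of Step 2, though you do not seem to realise it. Applying $d_{\tnabla}^2\omega=R^{\tnabla}\wedge\omega$ to $\omega=X_{(i)}\circ$ gives $\sum_{\mathrm{cyc}(Z,W,V)}R^{\tnabla}(Z,W)(X_{(i)}\circ V)=0$. This expression involves only the pointwise value of $X_{(i)}$, so there is no term $\Psi(\tnabla Y)$ to cancel. The ``derivation defect'' expansion, the tensor $\mathcal{C}$, the announced main obstacle and its fallback should simply be deleted. The expression is $\mathcal{O}_M$-linear in the fixed factor and the $X_{(i)}$ form a frame, so you get \eqref{rc-intri-2} for all arguments. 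The aside on $Y=e$ just reproduces the first Bianchi identity and plays no role.

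Your final ``should rearrange'' is the equivalence stated in the Remark after the theorem, and it is worth writing out. By \eqref{nablaccomp2.eq} and commutativity, $(\tnabla_U\circ)(V,W)$ is totally symmetric, hence so is $(\tnabla^2_{X,Y}\circ)(Z,W)$ in $Y,Z,W$. Therefore the cyclic sum over $(X,Y,Z)$ of the left-hand side of the Ricci identity
\[
(\tnabla^2_{X,Y}\circ-\tnabla^2_{Y,X}\circ)(Z,W)=R^{\tnabla}(X,Y)(Z\circ W)-\bigl(R^{\tnabla}(X,Y)Z\bigr)\circ W-Z\circ R^{\tnabla}(X,Y)W
\]
vanishes. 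With the first Bianchi identity this gives $\sum_{\mathrm{cyc}}R^{\tnabla}(X,Y)(Z\circ W)=\sum_{\mathrm{cyc}}Z\circ R^{\tnabla}(X,Y)W$, i.e.\ \eqref{rc-intri-2}$\Leftrightarrow$\eqref{rc-intri}.

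In Step 1, the terms that drop out do so because $d_{\tnabla}(X\circ)=0$, not because of \eqref{nablaccomp2.eq}. It is cleaner simply to invoke Theorem~\ref{thm:coordinate-free-prop34-lem35} together with Proposition~\ref{prop:DH-cyclicity-equivalence}.

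The paper does not reprove this theorem; it quotes it from \cite{LPVG1}. Its own machinery, however, points to a different argument:
\begin{enumerate}
\item rewrite the symmetry equation as $\tnabla_UY=U\circ V$ with $V=\tnabla_eY$ (Lemma~\ref{lem:symmetry-equiv-PDE}, which needs $\tnabla e=0$);
\item obtain $R^{\tnabla}(U,W)Y=W\circ\tnabla_UV-U\circ\tnabla_WV$ using \eqref{nablaccomp2.eq};
\item observe that the product-weighted cyclic sum kills the $\tnabla V$ terms by associativity.
\end{enumerate}
That route lands directly on \eqref{rc-intri}. It also exhibits \eqref{rc-intri} as the compatibility condition of the linear system for symmetries, which is the form used later in the Cauchy--Kowalevski construction.

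Your route needs neither the flat unit nor the reduced equation, and it gets the product-inside form \eqref{rc-intri-2} for free from $d_{\tnabla}^2=R^{\tnabla}\wedge$. In your version, \eqref{nablaccomp2.eq} enters only in the Bianchi conversion. Both arguments use only that each $X_{(i)}$-flow commutes with the $X$-flow, not the pairwise commutativity of the others.
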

Moreover the compatibility condition of the  system \eqref{Fsys} coincides with condition
 \eqref{rc-intri}.   
 
\begin{remark}	
Using Bianchi identity condition \eqref{rc-intri} can be written in the equivalent  form  
\begin{equation}\label{rc-intri-2}
R^{\tnabla}(Y,Z)(X\circ W)+R^{\tnabla}(X,Y)(Z\circ W)+R^{\tnabla}(Z,X)(Y\circ W)=0,
\end{equation}
for all local vector fields $X$, $Y$, $Z$, $W$ (see \cite{LPR} for  details).
\end{remark}

Let us point out that,  in local coordinates,  conditions \eqref{rc-intri} and \eqref{rc-intri-2} read \eqref{shc-intro}.

\subsection{Cyclic vector fields and David-Hertling coordinates}

For a vector field $X$ set \[
        A:=C_X=X\circ.
\]
\begin{definition}
We call $X$ \emph{cyclic} if $e$ is cyclic with respect to $A$, i.e. if
\[
        e,Ae,A^2e,\dots,A^{n-1}e
        \quad\text{is a local frame of }TM.
\]
Since $Ae=X$, this is the same as requiring
\[
        e,X,X^{\circ 2},\dots,X^{\circ(n-1)}
        \quad\text{to be a local frame.}
\]
We call an F-manifold \emph{cyclic} if it admits a \emph{cyclic} vector field.
\end{definition}

Regular F-manifolds are F-manifolds equipped with a cyclic Euler vector field. In this case cyclic vector fields  coincide with regularly non-degenerate vector fields (one direction of the following was already proved in \cite{LPVG1}):

\begin{proposition}
\label{prop:DH-cyclicity-equivalence}
In David-Hertling coordinates   a vector field $X$ is cyclic if and only if it is regularly non-degenerate.
\end{proposition}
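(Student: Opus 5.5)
In David--Hertling coordinates the algebra at each point is a direct sum of truncated polynomial algebras, so the plan is to prove the statement pointwise by linear algebra. At a point $p$ write
\[
T_pM=\bigoplus_{\alpha=1}^r \mathcal A_\alpha,\qquad \mathcal A_\alpha\cong \mathbb K[t]/(t^{m_\alpha}),
\]
where $\partial_{i(\alpha)}\mapsto t^{i-1}$ and the unit is $e=\sum_\alpha 1_\alpha$. A vector $X=\sum_\alpha X_\alpha$ then corresponds to
\[
X_\alpha=X^{1(\alpha)}+X^{2(\alpha)}t+\dots+X^{m_\alpha(\alpha)}t^{m_\alpha-1}.
\]
Cyclicity of $e$ for $A=X\circ$ says that $e,X,\dots,X^{\circ(n-1)}$ span $T_pM$. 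Equivalently, the map $\mathbb K[s]\to T_pM$, $f\mapsto f(X)$, is surjective, i.e.\ the $\mathbb K$-subalgebra $\mathbb K[X]$ generated by $X$ is the whole algebra. Since being a frame is an open condition and the statement is pointwise, it suffices to check this equivalence at each point.

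\textbf{Reduction to the minimal polynomial.} Surjectivity of $f\mapsto f(X)$ holds iff the minimal polynomial of $X$ in the algebra has degree $n$. The minimal polynomial of $X$ is the least common multiple of the minimal polynomials of the components $X_\alpha$ in each $\mathcal A_\alpha$. In $\mathcal A_\alpha$ write $X_\alpha=\lambda_\alpha+N_\alpha$ with $\lambda_\alpha=X^{1(\alpha)}$ and $N_\alpha=X^{2(\alpha)}t+\dots$ nilpotent. Then the minimal polynomial of $X_\alpha$ is $(s-\lambda_\alpha)^{k_\alpha}$, where $k_\alpha$ is the nilpotency index of $N_\alpha$. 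We have $k_\alpha\le m_\alpha$, with equality iff $N_\alpha^{m_\alpha-1}\neq0$. Since $N_\alpha^{m_\alpha-1}=(X^{2(\alpha)})^{m_\alpha-1}t^{m_\alpha-1}$, this holds iff $X^{2(\alpha)}\ne0$ (for $m_\alpha\ge2$; the case $m_\alpha=1$ is automatic).

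\textbf{Combining the blocks.} The lcm of the factors $(s-\lambda_\alpha)^{k_\alpha}$ has degree at most $\sum_\alpha k_\alpha\le n$. Equality requires two things: each $k_\alpha=m_\alpha$, and the factors must be pairwise coprime, i.e.\ $\lambda_\alpha\neq\lambda_\beta$ for $\alpha\ne\beta$. If $\lambda_\alpha=\lambda_\beta$ for some $\alpha\neq\beta$, the lcm only has $(s-\lambda)^{\max(k_\alpha,k_\beta)}$, so the degree drops strictly below $n$. Conversely, when both conditions hold, the degree is $\sum_\alpha m_\alpha=n$. This is exactly the regular non-degeneracy condition \eqref{eq: regularly non-degenerate}, and gives both directions at once.

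\textbf{Main obstacle.} The only subtle step is the claim that $\mathbb K[X]$ equals the full algebra exactly when $\deg\mu_X=n$. This holds because $\dim\mathbb K[X]=\deg\mu_X$ and the algebra has dimension $n$. An alternative route, which I would keep as a check, is to view $A=X\circ$ as an operator on the algebra and note that $e$ is cyclic iff $A$ is a regular (nonderogatory) operator. In each block, $A$ acts by the Toeplitz matrix \eqref{toeplitz}, with a single Jordan block iff $X^{2(\alpha)}\ne0$, and distinct blocks must carry distinct eigenvalues.

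In the complex setting there is nothing more to do. In the real smooth or analytic setting with real David--Hertling coordinates the same computation holds verbatim, since the $\lambda_\alpha$ are real there; the complex-conjugate case is outside the stated normal form.
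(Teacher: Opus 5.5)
Your proposal is correct and follows essentially the same route as the paper. Both arguments reduce cyclicity to the degree of a minimal polynomial: the paper uses $\mu_{A,e}$, which coincides with your $\mu_X$ since $p(A)e=p(X)$. Both then write it as $\lcm_\alpha (t-X^{1(\alpha)})^{k_\alpha}$, and read off $k_\alpha=m_\alpha \Leftrightarrow X^{2(\alpha)}\neq 0$ and the pairwise distinctness of the $X^{1(\alpha)}$ from the degree count. Working directly in $\mathbb K[t]/(t^{m_\alpha})$ only makes the blockwise identity $N_\alpha^{m_\alpha-1}=(X^{2(\alpha)})^{m_\alpha-1}t^{m_\alpha-1}$ a little more transparent than the paper's operator formulation.
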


\begin{proof} Fix one block $\mathcal A_\alpha:=\operatorname{span}\{\partial_{1(\alpha)},\dots,\partial_{m_\alpha(\alpha)}\},$ and write
\[
        e_\alpha:=\partial_{1(\alpha)}.
\]
For $m_\alpha>1$, define
\[
        N_\alpha:=C_{\partial_{2(\alpha)}}.
\]
Then, using \eqref{eq: product block}
\[
        N_\alpha e_\alpha=
        \partial_{2(\alpha)},
        \quad
        N_\alpha^2 e_\alpha=
        \partial_{3(\alpha)},
        \quad\dots,\quad
        N_\alpha^{m_\alpha-1}e_\alpha=
        \partial_{m_\alpha(\alpha)},
\]
and
\[
        N_\alpha^{m_\alpha}=0.
\]
The restriction of $A=C_X$ to $\mathcal A_\alpha$ is
\[
        A_\alpha
        =
        X^{1(\alpha)}\operatorname{Id}
        +X^{2(\alpha)}N_\alpha
        +X^{3(\alpha)}N_\alpha^2
        +\cdots
        +X^{m_\alpha(\alpha)}N_\alpha^{m_\alpha-1}.
\]
For $m_\alpha=1$, this simply reads
\[
        A_\alpha=X^{1(\alpha)}\operatorname{Id}.
\]

Assume $m_\alpha>1$. Then
\[
        A_\alpha-X^{1(\alpha)}\operatorname{Id}
        =N_\alpha q_\alpha(N_\alpha),
\]
where
\[
        q_\alpha(t)
        =X^{2(\alpha)}+X^{3(\alpha)}t+\cdots
        +X^{m_\alpha(\alpha)}t^{m_\alpha-2}.
\]
If $X^{2(\alpha)}\neq0$, then $q_\alpha(0)\neq0$, hence
$q_\alpha(N_\alpha)$ is invertible, since it can be written as a non-zero multiple of the identity plus a nilpotent term. Since $q_\alpha(N_\alpha)$ commutes with $N_\alpha$, the product $A_\alpha-X^{1(\alpha)}\operatorname{Id}=N_\alpha q_\alpha(N_\alpha)$ is nilpotent of the same order $m_\alpha$ as $N_\alpha$, and
\[
        (A_\alpha-X^{1(\alpha)}\operatorname{Id})^{m_\alpha-1}e_\alpha
        =\bigl(N_\alpha q_\alpha(N_\alpha)\bigr)^{m_\alpha-1}e_\alpha
        =N_\alpha^{m_\alpha-1}\,q_\alpha(N_\alpha)^{m_\alpha-1}e_\alpha,
\]
where we used that $N_\alpha$ commutes with $q_\alpha(N_\alpha)$.  Writing
$q_\alpha(N_\alpha)=q_\alpha(0)\operatorname{Id}+N_\alpha\,r(N_\alpha)$ for a suitable polynomial $r$, the expansion of $q_\alpha(N_\alpha)^{m_\alpha-1}$ is the sum of the constant term $q_\alpha(0)^{m_\alpha-1}\operatorname{Id}$ and terms each carrying at least one factor of $N_\alpha$.  Multiplying by $N_\alpha^{m_\alpha-1}$ and using $N_\alpha^{m_\alpha}=0$ annihilates all of the latter terms, whence
\[
        (A_\alpha-X^{1(\alpha)}\operatorname{Id})^{m_\alpha-1}e_\alpha
        =q_\alpha(0)^{m_\alpha-1}N_\alpha^{m_\alpha-1}e_\alpha
        =\bigl(X^{2(\alpha)}\bigr)^{m_\alpha-1}\partial_{m_\alpha(\alpha)}\neq 0,
\]
since $q_\alpha(0)=X^{2(\alpha)}\neq0$ and $N_\alpha^{m_\alpha-1}e_\alpha=\partial_{m_\alpha(\alpha)}$.
Thus $e_\alpha$ is a cyclic vector for the nilpotent operator $A_\alpha-X^{1(\alpha)}\operatorname{Id}$, of nilpotency order $m_\alpha$.  Since $A_\alpha$ and $A_\alpha-X^{1(\alpha)}\operatorname{Id}$ differ by the scalar operator $X^{1(\alpha)}\operatorname{Id}$, the two generate the same cyclic subspace from any vector.  In particular $e_\alpha$ is cyclic for $A_\alpha$ as well, so that
\[
        e_\alpha,A_\alpha e_\alpha,
        \dots,A_\alpha^{m_\alpha-1}e_\alpha
\]
are linearly independent.

Conversely, if $X^{2(\alpha)}=0$, then
\[
        A_\alpha-X^{1(\alpha)}\operatorname{Id}
        \in N_\alpha^2\mathbb{K}[N_\alpha].
\]
Thus the cyclic span of $e_\alpha$ misses the vector
$N_\alpha e_\alpha=\partial_{2(\alpha)}$. Hence $e_\alpha$
is not cyclic for $A_\alpha$. We have proved that, on a block,
$e_\alpha$ is cyclic for $A_\alpha$ if and only if either
$m_\alpha=1$, or $m_\alpha>1$ and $X^{2(\alpha)}\neq0$.

Now consider the direct sum
\[
        A=\bigoplus_{\alpha=1}^r A_\alpha,
        \qquad
        e=\sum_{\alpha=1}^r e_\alpha.
\]
The cyclic subspace generated by $e$ is the image of
\[
        \mathbb{K}[t]\longrightarrow TM,
        \qquad
        p(t)\longmapsto p(A)e.
\]
The dimension of the image of $p\mapsto p(A)e$ equals the degree of the minimal polynomial $\mu_{A,e}$ of the vector $e$, that is, of the monic generator of the ideal $\{p\in\mathbb{K}[t]:p(A)e=0\}$.  Since the blocks $\mathcal A_\alpha$ are independent $A$-invariant subspaces and $p(A)e_\alpha=p(A_\alpha)e_\alpha\in\mathcal A_\alpha$, one has $p(A)e=0$ if and only if $p(A_\alpha)e_\alpha=0$ for every $\alpha$; hence $\mu_{A,e}=\lcm_\alpha\mu_{A_\alpha,e_\alpha}$, where $\mu_{A_\alpha,e_\alpha}$ denotes the minimal polynomial of $e_\alpha$ relative to $A_\alpha$ and $\lcm$ stands for least common multiple.  In general, each $\mu_{A_\alpha,e_\alpha}$ is a power of $t-X^{1(\alpha)}$: the operator
$A_\alpha-X^{1(\alpha)}\operatorname{Id}=N_\alpha q_\alpha(N_\alpha)$ is nilpotent on the
block whether or not $X^{2(\alpha)}$ vanishes, since
$\bigl(N_\alpha q_\alpha(N_\alpha)\bigr)^{m_\alpha}
=N_\alpha^{m_\alpha}q_\alpha(N_\alpha)^{m_\alpha}=0$.
Hence
\[
        \mu_{A_\alpha,e_\alpha}(t)=(t-X^{1(\alpha)})^{k_\alpha},
        \qquad
        1\le k_\alpha\le m_\alpha,
\]
where $k_\alpha\geq 1$ because $e_\alpha\neq 0$. Moreover,
$k_\alpha=\deg\mu_{A_\alpha,e_\alpha}$ is the dimension of the cyclic subspace generated
by $e_\alpha$ in $\mathcal A_\alpha$, so the blockwise analysis above shows that
\[
        k_\alpha=m_\alpha
        \quad\Longleftrightarrow\quad
        m_\alpha=1 \ \text{ or } \ X^{2(\alpha)}\neq 0.
\]
Therefore
\[
        \mu_{A,e}=\lcm_{\alpha}\,(t-X^{1(\alpha)})^{k_\alpha},
\]
and since the least common multiple of powers of linear polynomials is the product, over
the distinct roots, of the highest occurring powers,
\[
        \deg\mu_{A,e}
        =\sum_{\lambda\in\{X^{1(1)},\dots,X^{1(r)}\}}
        \max\bigl\{k_\alpha : X^{1(\alpha)}=\lambda\bigr\}
        \;\leq\;\sum_{\alpha=1}^r k_\alpha
        \;\leq\;\sum_{\alpha=1}^r m_\alpha=n.
\]
The first inequality is an equality if and only if the values $X^{1(\alpha)}$ are pairwise
distinct: if a value were attained by $p\geq 2$ blocks then, since every $k_\alpha\geq 1$,
the corresponding maximum would fall short of the sum of those $k_\alpha$ by at least
$p-1$. The second inequality is an equality if and only if $k_\alpha=m_\alpha$ for every
$\alpha$. Consequently $e$ is cyclic for $A$, i.e.\ $\deg\mu_{A,e}=n$, if and only if
$X^{1(\alpha)}\neq X^{1(\beta)}$ for all $\alpha\neq\beta$ and $X^{2(\alpha)}\neq 0$ for
every block with $m_\alpha\geq 2$, which is precisely condition
\eqref{eq: regularly non-degenerate}.

\end{proof}

\subsection{Cyclicity is weaker than regularity}
Proposition~\ref{prop:DH-cyclicity-equivalence} says that, once
David--Hertling coordinates already exist, cyclicity of $X$ is exactly the
usual  non-degeneracy condition. The intrinsic cyclicity condition, however, is
weaker than the existence of David--Hertling coordinates. It only requires
\[
        e,X,\dots,X^{\circ(n-1)}
        \quad\text{to be a local frame.}
\]
It does not require an Euler vector field, the regularity of multiplication by an
Euler vector field, or a David--Hertling normal form.

We introduce a minimal two-dimensional example. Let $M=\mathbb R^2$
with coordinates $(t,s)$ and write
\[
        e_0:=\partial_t,
        \qquad
        v:=\partial_s.
\]
Define
\[
        e_0\circ e_0=e_0,
        \qquad
        e_0\circ v=v,
        \qquad
        v\circ v=s\,v.
\]
The unit is $e=e_0$. The product is commutative and associative. It is also
an F-manifold product. Indeed, using
\[
        P_U(Y,Z):=[U,Y\circ Z]-[U,Y]\circ Z-Y\circ[U,Z],
\]
the Hertling--Manin identity can be written as
\[
        P_{U\circ V}(Y,Z)=U\circ P_V(Y,Z)+V\circ P_U(Y,Z).
\]
All cases involving the unit $e_0$ are immediate. The only nontrivial check is
$U=V=Y=Z=v$. Since
\[
        P_v(v,v)=[v,s v]=v
\]
and
\[
        P_{v\circ v}(v,v)=P_{s v}(v,v)=2s v
        =2v\circ P_v(v,v),
\]
the Hertling--Manin identity holds.

Now take
\[
        X:=v.
\]
Since $n=2$, cyclicity means only that $e,X$ is a frame. In the ordered frame
$(e_0,v)$, the cyclicity matrix is the identity matrix:
\[
        [e,X]=[e_0,v].
\]
Thus $X$ is cyclic everywhere, including at points with $s=0$.

However, no David--Hertling coordinate system can exist on any neighbourhood
of a point with $s=0$. At $s=0$ the tangent algebra is
\[
        \mathbb K[\xi]/(\xi^2),
\]
because $v\circ v=0$. For $s\neq0$, the vectors
\[
        p_1:=\frac1s v,
        \qquad
        p_0:=e_0-\frac1s v
\]
are idempotents:
\[
        p_0\circ p_0=p_0,
        \qquad
        p_1\circ p_1=p_1,
        \qquad
        p_0\circ p_1=0.
\]
Hence the tangent algebra is semisimple\footnote{Throughout, \emph{semisimple} is understood in the split sense: the tangent algebra is isomorphic to $\mathbb K^n$, equivalently it admits a frame of idempotents $\epsilon_1,\dots,\epsilon_n$ with $\epsilon_i\circ\epsilon_j=\delta_{ij}\epsilon_i$ and $e=\sum_{i=1}^n\epsilon_i$. Over a field that is not algebraically closed this is strictly stronger than the vanishing of the nilradical (the absence of nonzero nilpotents). 
} for $s\neq0$, but is the algebra of
dual numbers at $s=0$. The block type therefore jumps from $(2)$ at $s=0$ to
$(1,1)$ for $s\neq0$. Since David--Hertling coordinates have a fixed block
form on their coordinate neighbourhood, such coordinates cannot exist around
$s=0$. This shows that cyclicity is weaker than the joint assumption
``David--Hertling coordinates exist and $X$ is regularly non-degenerate''. 
The same phenomenon persists in higher dimension: one can easily cook up much more complicated examples. 

Let us point out that on the semisimple locus of an F-manifold product the existence of cyclic vector fields is automatic locally.
Indeed,  let
\[
        \epsilon_1,\dots,\epsilon_n
\]
be the local idempotent frame, so that
\[
        \epsilon_i\circ\epsilon_j=\delta_{ij}\epsilon_i,
        \qquad
        e=\sum_{i=1}^n \epsilon_i.
\]
It is immediate to check that a vector field 
\[
        X=\sum_{i=1}^n X^i\epsilon_i,
\]
is cyclic if and only if $X^i\neq X^j\text{ for all }i\neq j.$

Thus cyclicity of $X$ allows one to pass from the semisimple locus to the non-semisimple locus seamlessly, even when David-Hertling coordinates do not exist. Saying otherwise: in the regular David–Hertling setting, regularity of the Euler field gives the normal form, while Proposition~\ref{prop:DH-cyclicity-equivalence} identifies regular non-degeneracy of $X$ with cyclicity of X in that normal form.

\section{Distinguished connections}
In \cite{LPVG1} it was proved that given a regularly non-degenerate vector field $X$ written in David-Hertling coordinates, there exists a unique torsionless connection $\tnabla$ satisfying the following conditions:
\[ \tnabla e=0,  \quad d_{\tnabla}(X\circ)=0, \quad (\tnabla_W \circ)(Y,Z)=(\tnabla_Y \circ)(W,Z),  \]
for all local sections $W,Y,Z$ of $\T$.

In this Section we vastly expand this result,  using only the assumption that $X$ is cyclic, which allows us to treat uniformly the semisimple and the non-semisimple locus,  even when David-Hertling coordinates do not exist. 
First a definition:
\begin{definition}
Let $(M,\circ,e)$ be an F-manifold and let $X$ be a cyclic vector field.  A torsionless connection $\nabla$ is called \emph{distinguished} with respect to $X$ if 
\[ d_{\nabla}(X\circ)=0,  \quad (\nabla_W\circ)(Y,Z)=(\nabla_Y\circ)(W,Z)
        \qquad\forall\,W,Y,Z \text{ local sections of } \T.\]
\end{definition}

We have the following 
\begin{theorem}\label{thm: existence and uniqueness of cyclic connections}
Let $(M,\circ,e)$ be an F-manifold and let $X$ be cyclic. 
Then locally there exists a unique  torsionless connection $\tnabla$ distinguished with respect to $X$ such that
    \[    \tnabla e=0.\]
We will call this connection the \emph{natural connection} associated with $X$.
\end{theorem}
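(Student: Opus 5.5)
The plan is to turn the statement into a pointwise linear-algebra problem. The first step is to see how much of $\tnabla$ the conditions $\tnabla e=0$ and $d_{\tnabla}(X\circ)=0$ already fix. Write $T(W,Y,Z):=(\tnabla_W\circ)(Y,Z)$. This tensor is symmetric in $(Y,Z)$ by commutativity and symmetric in $(W,Y)$ by the distinguished condition, so it is totally symmetric. Since $\tnabla e=0$, differentiating $e\circ Y=Y$ gives $T(W,e,Y)=0$, hence $T(e,\cdot,\cdot)=0$ by total symmetry.

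Next I evaluate $d_{\tnabla}(X\circ)(Y,e)=0$, using torsion-freeness to write $[Y,e]=-\tnabla_e Y$. This gives
\[
\tnabla_Y X=\tnabla_e(X\circ Y)-X\circ\tnabla_eY=T(e,X,Y)+Y\circ\tnabla_eX=Y\circ V,\qquad V:=\tnabla_e X.
\]
So the covariant derivative of $X$ is multiplication by a single vector field $V$. By Leibniz, the derivative of $X^{\circ k}=X\circ X^{\circ(k-1)}$ is
\[
\tnabla_Y X^{\circ k}=T(Y,X,X^{\circ(k-1)})+\dots+k\,X^{\circ (k-1)}\circ Y\circ V ,
\]
with the $T$-terms expanded recursively. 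Cyclicity says $E_k:=X^{\circ k}$, $k=0,\dots,n-1$, is a local frame. Hence $\tnabla$ is completely determined by the pair $(V,T)$: it acts on the frame $E_k$ by explicit polynomial expressions in $V$, $T$ and the product.

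The second step is to impose the remaining conditions on $(V,T)$. These are torsion-freeness in the frame, $\tnabla_{E_i}E_j-\tnabla_{E_j}E_i=[E_i,E_j]$, and $d_{\tnabla}(X\circ)(E_i,E_j)=0$ for $i,j\ge1$. Because everything is expressed via $\circ$, I identify each tangent space with the algebra $\mathbb K[t]/(\chi(t))$, where $\chi$ is the characteristic polynomial of $A=X\circ$ (cyclicity of $e$ makes this an isomorphism sending $X\mapsto t$). In this model, $T$ is a symmetric trilinear form vanishing on $e$, and the conditions become a linear system whose inhomogeneous terms are the brackets $[E_i,E_j]$ and $\mathcal L$-derivatives of the structure functions in the cyclic frame.

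For uniqueness, I will show that the homogeneous system forces $V=0$ and $T=0$. The homogeneous system concerns the difference of two solutions, which is a symmetric $(1,2)$-tensor $S$ with $S(e,\cdot)=0$, $S(Y,X)=Y\circ S(e,X)$ and $d_S(X\circ)=0$. Using $S(Y,X)=0$, I then argue inductively on $k$ that $S(Y,X^{\circ k})=0$, with the cyclic frame spanning everything; the key input is that the algebraic symmetry of $S(W,Y\circ Z)$-type terms, combined with $d_S(X\circ)=0$, propagates the vanishing.

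The main obstacle is existence: the linear system is overdetermined, and solvability must come from the Hertling–Manin identity \eqref{HMeq1free}. My first attempt would be to choose $V$ and $T$ explicitly. I would take $T$ as the symmetric tensor measuring the failure of $\mathcal L_{E_i}\circ$ to be of the form "multiplication by something", which the Hertling–Manin identity $\mathcal L_{X\circ Y}\circ=X\circ\mathcal L_Y\circ+Y\circ\mathcal L_X\circ$ expresses as a derivation-like property along powers of $X$. I would then solve for $V$ from the torsion equation in the components $[e,E_k]$, and finally verify the remaining components using \eqref{HMeq1free} with arguments among $e,X,X^{\circ k}$.

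As a cross-check, on the semisimple locus one can use canonical coordinates. There the construction must reproduce Tsarev's connection $\tilde\Gamma^i_{ij}=\partial_jX^i/(X^j-X^i)$, which is well defined exactly because cyclicity gives $X^i\neq X^j$. In the regular case, the construction must likewise reproduce the connection of Theorem~\ref{thm: LPvG1}.
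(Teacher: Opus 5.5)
Your uniqueness argument is correct and is essentially the paper's. For the difference $S$ of two torsionless flat-unit connections with $d_{\tnabla}(X\circ)=0$, one has $S(e,\cdot)=0$ and $S(Y,AZ)=S(Z,AY)$ with $A=X\circ$. So inductively $S(Y,A^ke)=S(AY,A^{k-1}e)=0$, and cyclicity gives $S=0$; this uses neither the distinguished condition nor Hertling--Manin.

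The genuine gap is existence, which you only announce (``my first attempt would be\dots''): no tensor $T$ is written down and nothing is checked. Your set-up is also incomplete. If you define $\tnabla$ on the frame $X^{\circ k}$ from a vector field $V$ and a postulated totally symmetric $T$, the recursion only guarantees $(\tnabla_Y\circ)(X,Z)=T(Y,X,Z)$. That the resulting connection has $(\tnabla_W\circ)(Y,Z)$ totally symmetric in all slots is an extra condition missing from your list, and it is exactly the nontrivial part of the theorem. The semisimple and David--Hertling cross-checks cannot replace a proof. At general cyclic points neither coordinate system need exist, and the semisimple locus can even be empty (dual numbers at every point).

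The missing idea is to split the problem into two parts. Your injectivity computation, together with
\[
\operatorname{rank}\{S\in S^2T^*M\otimes TM:\ S(e,\cdot)=0\}=\tfrac{n^2(n-1)}{2}=\operatorname{rank}\bigl(\Lambda^2T^*M\otimes TM\bigr),
\]
shows that $S\mapsto\bigl((Y,Z)\mapsto S(Y,AZ)-S(Z,AY)\bigr)$ is a fibrewise isomorphism. So after correcting an arbitrary torsionless connection by a symmetric tensor to make $e$ flat, the equation $d_{\tnabla}(X\circ)=0$ has exactly one solution. Without the distinguished condition the system is square, not overdetermined, and Hertling--Manin plays no role there.

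Hertling--Manin is needed only afterwards, to prove that this connection is automatically compatible with $\circ$. Using the covariant Hertling--Manin and covariant associativity identities, one shows that $H(Y,Z):=(\tnabla_X\circ)(Y,Z)-(\tnabla_Y\circ)(X,Z)$ is symmetric, vanishes on $e$, and satisfies $H(AY,Z)=H(Y,AZ)$. Hence $H=0$ by cyclicity. The symmetry $(\tnabla_U\circ)(Y,Z)=(\tnabla_Y\circ)(U,Z)$ then propagates from $U$ to $X\circ U$, so it holds on the whole frame $e,X,\dots,X^{\circ(n-1)}$. Your proposal has no substitute for this step, so existence of a \emph{distinguished} flat-unit connection is not established.
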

Subsections 3.1, 3.2, 3.3 below are dedicated to the proof of this Theorem.

\subsection{Cyclic linear algebra and the unique flat-unit connection}
Let $V$ be an $n$-dimensional vector space, let $A\in\End(V)$, and assume that
$e\in V$ is cyclic for $A$, so that
\[
        e,Ae,\ldots,A^{n-1}e
\]
is a basis of $V$. For
\[
        S\in S^2V^*\otimes V
\]
define
\[
        (\mathcal M_A S)(Y,Z):=S(Y,AZ)-S(Z,AY).
\]
Then $\mathcal M_A S\in\Lambda^2V^*\otimes V$.

\begin{lemma}[Cyclic isomorphism]
\label{lem:cyclic-iso}
The restriction
\[
        \mathcal M_A:
        \{S\in S^2V^*\otimes V\mid S(e,\cdot)=0\}
        \longrightarrow
        \Lambda^2V^*\otimes V
\]
is an isomorphism.
\end{lemma}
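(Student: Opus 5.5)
The plan is to compare dimensions and then prove injectivity by working in the cyclic basis. The only real idea is to evaluate $S$ on all powers $A^ie$, not only on the first $n$.

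\emph{Step 1 (dimension count).} A symmetric bilinear map $S\in S^2V^*\otimes V$ has $n\cdot\frac{n(n+1)}{2}$ independent components. The condition $S(e,\cdot)=0$ imposes $n\cdot n$ independent linear conditions, since $S(e,Y)\in V$ for each of $n$ basis vectors $Y$ and these values are unconstrained by symmetry. So the domain has dimension
\[
n\Bigl(\tfrac{n(n+1)}{2}-n\Bigr)=n\cdot\tfrac{n(n-1)}{2}=\dim\bigl(\Lambda^2V^*\otimes V\bigr).
\]
Moreover $\mathcal M_A$ is linear and clearly lands in $\Lambda^2V^*\otimes V$. Hence it suffices to show that $\mathcal M_A$ is injective on this subspace.

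\emph{Step 2 (injectivity via the cyclic vector).} Suppose $S$ is symmetric, $S(e,\cdot)=0$, and $S(Y,AZ)=S(Z,AY)$ for all $Y,Z\in V$. For all integers $i,j\ge 0$ (including exponents $\ge n$) set
\[
s_{i,j}:=S(A^ie,A^je)\in V .
\]
These quantities are symmetric in $i,j$, and $s_{m,0}=S(A^me,e)=0$ for every $m\ge0$. Taking $Y=A^je$ and $Z=A^{k-1}e$ in the hypothesis gives, for all $j\ge0$ and $k\ge1$,
\[
s_{j,k}=s_{k-1,j+1}=s_{j+1,k-1}.
\]
The second equality is symmetry of $S$. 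Iterating $k$ times yields $s_{j,k}=s_{j+k,0}=0$.

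Because we allowed arbitrary exponents from the start, this shifting never runs into the boundary $A^ne$. In particular we never need to expand $A^ne$ in the basis, which is the step I had expected to be the main obstacle. Since $e,Ae,\dots,A^{n-1}e$ is a basis of $V$, we get $s_{j,k}=0$ for $0\le j,k\le n-1$, so $S=0$ by bilinearity.

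\emph{Step 3 (conclusion).} $\mathcal M_A$ is an injective linear map between spaces of equal finite dimension, hence an isomorphism. Cyclicity is used exactly once: to pass from the vanishing of $S$ on the powers $A^ie$ to $S=0$.
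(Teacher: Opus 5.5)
Your proof is correct and follows essentially the paper's argument: injectivity comes from the self-adjointness of $A$ with respect to $S$, shifted along the cyclic basis down to $S(e,A^{i+j}e)=0$, and surjectivity from the dimension count $\frac{n^2(n-1)}{2}$ on both sides. The only difference is that the paper first proves the scalar statement on $\{b\in S^2V^*\mid b(e,\cdot)=0\}$ and then passes to the vector-valued case componentwise, whereas you work directly with the $V$-valued form, which is equally valid.
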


\begin{proof}
We first prove the scalar statement. Define
\[
        B:=\{b\in S^2V^*\mid b(e,\cdot)=0\}.
\]
Consider the scalar map
\[
        \mathcal M_A^{\mathrm{scal}}:B\longrightarrow\Lambda^2V^*,
        \qquad
        (\mathcal M_A^{\mathrm{scal}}b)(Y,Z)
        :=b(Y,AZ)-b(Z,AY).
\]
Suppose $b\in B$ lies in the kernel. Then
\[
        b(Y,AZ)-b(Z,AY)=0
        \qquad\forall\,Y,Z\in V.
\]
Since $b$ is symmetric, this is equivalent to
\[
        b(AY,Z)=b(Y,AZ).
\]
Thus $A$ is self-adjoint with respect to $b$. Therefore, for all $i,j\geq0$,
\[
        b(A^ie,A^je)=b(e,A^{i+j}e).
\]
But $b(e,\cdot)=0$, so
\[
        b(A^ie,A^je)=0
        \qquad\forall\,i,j\geq0.
\]
Since
\[
        e,Ae,\ldots,A^{n-1}e
\]
is a basis of $V$, this implies $b=0$. Hence $\mathcal M_A^{\mathrm{scal}}$ is
injective.
Now define the map
\[\iota_e: S^2V^*\rightarrow V^*,  \quad \iota_e: b \mapsto b(e,\cdot),\] 
which is surjective since $e$ is non-zero. Thus $\dim B=\dim \ker(\iota_e)=\dim S^2V^*-\dim V^*$
and thus the dimensions of $B$ and $\Lambda^2V^*$ agree:
\[
        \dim B
        =\dim S^2V^*-n
        =\frac{n(n+1)}2-n
        =\frac{n(n-1)}2
        =\dim\Lambda^2V^*.
\]
Therefore
\[
        \mathcal M_A^{\mathrm{scal}}:B\longrightarrow\Lambda^2V^*
\]
is an isomorphism.

We now pass to the vector-valued statement. Fix a basis $f_1,\dots,f_n$ of $V$ and decompose
\[
        S=\sum_{k=1}^n S^k\otimes f_k,
        \qquad
        S^k\in S^2V^*,
\]
so that $S\in\{S\in S^2V^*\otimes V\mid S(e,\cdot)=0\}$ if and only if $S^k(e,\cdot)=0$ for every $k$. By definition of $\mathcal M_A$,
\[
        (\mathcal M_A S)(Y,Z)
        =\sum_{k=1}^n\bigl(\mathcal M_A^{\mathrm{scal}} S^k\bigr)(Y,Z)\otimes f_k.
\]
Hence $\mathcal M_A S=0$ if and only if $\mathcal M_A^{\mathrm{scal}} S^k=0$ for every $k$, which by the scalar statement above forces $S^k=0$ for every $k$, hence $S=0$. Thus $\mathcal M_A$ is injective. Since the source and target both have dimension $\tfrac{n^2(n-1)}{2}$, the map $\mathcal M_A$ is an isomorphism.
\end{proof}

\begin{proposition}[Existence and uniqueness with flat unit]
\label{prop:flat-unit}
There exists a unique torsionless connection $\tnabla$ such that
\[
        \tnabla e=0,
        \qquad
        d_{\tnabla}A=0.
\]
\end{proposition}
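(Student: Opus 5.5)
Fix any torsionless connection $\nabla^0$ on the neighbourhood (e.g.\ the flat coordinate connection). Every other torsionless connection has the form $\nabla=\nabla^0+S$ with $S\in S^2\T^*\otimes\T$, i.e.\ $\nabla_YZ=\nabla^0_YZ+S(Y,Z)$ with $S$ symmetric. The plan is to write both required conditions as linear equations for $S$. The first determines the contraction $S(e,\cdot)$ completely. The second then becomes an equation for the remaining part of $S$, which Lemma~\ref{lem:cyclic-iso} solves uniquely.

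\textbf{Step 1: the condition $\tnabla e=0$.} It reads $S(Y,e)=-\nabla^0_Ye$ for all $Y$. To separate this part, choose one symmetric tensor $S_0$ with $S_0(e,\cdot)=-\nabla^0 e$. One way is to work in the frame $f_1=e,\,f_2=Ae,\dots,f_n=A^{n-1}e$, which exists because $X$ is cyclic. Prescribe $S_0(e,f_j)=S_0(f_j,e)=-\nabla^0_{f_j}e$ for all $j$ and set $S_0(f_i,f_j)=0$ for $i,j\geq2$. The value at $(e,e)$ is prescribed consistently, so $S_0$ is a smooth (resp.\ analytic) symmetric tensor. Then every solution of $\tnabla e=0$ is $\nabla^0+S_0+S'$ with $S'$ symmetric and $S'(e,\cdot)=0$, and conversely.

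\textbf{Step 2: the condition $d_{\tnabla}A=0$.} Using the torsion-free formula
\[
(d_\nabla A)(Y,Z)=(\nabla_YA)Z-(\nabla_ZA)Y,
\]
a change $\nabla\mapsto\nabla+S$ changes this by
\[
S(Y,AZ)-AS(Y,Z)-S(Z,AY)+AS(Z,Y)=S(Y,AZ)-S(Z,AY)=(\mathcal M_AS)(Y,Z),
\]
where the $A$-terms cancel because $S$ is symmetric. Hence
\[
d_{\nabla^0+S_0+S'}A=d_{\nabla^0+S_0}A+\mathcal M_AS',
\]
and the condition becomes $\mathcal M_AS'=-d_{\nabla^0+S_0}A$, a given section of $\Lambda^2\T^*\otimes\T$. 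Lemma~\ref{lem:cyclic-iso} applies pointwise with $V=T_pM$, $A=X\circ$ and cyclic vector $e$. It gives a unique $S'$ with $S'(e,\cdot)=0$ at each point. Since $\mathcal M_A$ is a bundle map of constant rank whose coefficients are smooth (resp.\ analytic), its inverse is too, so $S'$ has the same regularity as the data. This gives existence. For uniqueness, any two solutions differ by a symmetric $S'$ with $S'(e,\cdot)=0$ and $\mathcal M_AS'=0$, so $S'=0$ by injectivity.

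\textbf{Main obstacle.} There is little difficulty left once the change under $S$ is computed correctly. The points needing care are:
\begin{itemize}
\item the sign and symmetry bookkeeping showing that the $A\,S(\cdot,\cdot)$ terms cancel, so the variation is exactly $\mathcal M_A$;
\item checking that the constraint $S(e,\cdot)$ fixed in Step 1 is compatible with symmetry;
\item checking that the result is independent of the choices of $\nabla^0$ and $S_0$, which follows from uniqueness.
\end{itemize}
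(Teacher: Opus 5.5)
Your proposal is correct and follows essentially the same route as the paper: you split off the flat-unit part $S_0$ of $S$, compute $d_{\nabla^0+S}A=d_{\nabla^0}A+\mathcal M_AS$, and invert $\mathcal M_A$ on $\{S(e,\cdot)=0\}$ by Lemma~\ref{lem:cyclic-iso}, with uniqueness following from injectivity. The only difference is cosmetic: your cyclic-frame construction of $S_0$ is the paper's formula $S_0(Y,Z)=-\theta(Y)\nabla^0_Ze-\theta(Z)\nabla^0_Ye+\theta(Y)\theta(Z)\nabla^0_ee$ with $\theta$ chosen as the first covector of the coframe dual to $e,Ae,\dots,A^{n-1}e$.
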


\begin{proof}
We work locally on the open set where $X$ is cyclic, i.e. where
\[
        e,Ae,\ldots,A^{n-1}e
\]
is a frame.

Choose an auxiliary torsionless connection $\bar\nabla$. Since both $\bar\nabla$ and $\tnabla$ are torsionless,
their difference is a symmetric tensor:
\begin{equation}\label{eq: prop1.2}
        \tnabla_YZ
        =\bar\nabla_YZ+S(Y,Z),
        \qquad
        S\in\Gamma(S^2T^*M\otimes TM).
\end{equation}

We are going to show that $\tnabla$ is uniquely fixed by the conditions $\tnabla e=0$ and $d_{\tnabla}A=0$. First we  impose the flat-unit condition
\[
        \tnabla e=0,
\]
which, due to \eqref{eq: prop1.2} and the symmetry of $S$, is equivalent to
\[
        S(e,Y)=-\bar\nabla_Ye.
\]
A symmetric tensor $S_0$ satisfying
\[
        S_0(e,Y)=-\bar\nabla_Ye.
\]
exists locally. Indeed, since $e$ is nowhere zero, choose a local one-form
$\theta$ such that
\[
        \theta(e)=1.
\]
Define
\[
        S_0(Y,Z)
        :=-\theta(Y)\bar\nabla_Ze-\theta(Z)\bar\nabla_Ye
        +\theta(Y)\theta(Z)\bar\nabla_ee.
\]
Then $S_0$ is symmetric and $S_0(e,Y)=-\bar\nabla_Ye$.

Every symmetric tensor $S$ satisfying the flat-unit condition can therefore be written
uniquely as
\[
        S=S_0+S_1,
        \qquad
        S_1(e,\cdot)=0.
\]
Let
\[
        \mathcal E
        :=\{S_1\in S^2T^*M\otimes TM\mid S_1(e,\cdot)=0\}.
\]
The unknown part of the connection is now precisely $S_1\in\Gamma(\mathcal E)$,  which controls the freedom to satisfy the additional equation $d_{\tnabla}A=0$.  To study the latter equation,  we  first compute how $d_{\bar\nabla} A$ changes when the connection $\bar\nabla$ is changed by $S$.
Since
\[
        (\tnabla_YA)(Z)
        =(\bar\nabla_YA)(Z)+S(Y,AZ)-A(S(Y,Z)),
\]
and for any torsionless connection $\nabla$
\begin{align*}
(d_{\nabla}A)(Y,Z)&=\nabla_Y(AZ)-\nabla_Z(AY)-A([Y,Z])\\
                               &=(\nabla_Y A)(Z)-(\nabla_Z A)(Y),
\end{align*}
we get
\begin{align*}
        (d_{\tnabla}A)(Y,Z)
        &=(d_{\bar\nabla}A)(Y,Z)
        +S(Y,AZ)-S(Z,AY)        \\
        &\quad
        -A(S(Y,Z)-S(Z,Y)).
\end{align*}
Because $S$ is symmetric, the final term vanishes. Hence
\[
        (d_{\tnabla}A)(Y,Z)
        =(d_{\bar\nabla}A)(Y,Z)+S(Y,AZ)-S(Z,AY),
\]
or
\begin{equation}\label{eq: prop1.1}
        d_{\tnabla}A=d_{\bar\nabla}A+\mathcal M_A S,
\end{equation}
where $\mathcal M_A$ is defined in Lemma \ref{lem:cyclic-iso}.

The equation $d_{\tnabla}A=0$ becomes (see \eqref{eq: prop1.1})
\[
        d_{\bar\nabla}A+\mathcal M_A S_0+\mathcal M_A S_1=0,
\]
or equivalently
\begin{equation}\label{eq: prop1.3}
        \mathcal M_A S_1
        =-d_{\bar\nabla}A-\mathcal M_A S_0.
\end{equation}
The right-hand side is a section of $\Lambda^2T^*M\otimes TM$,  which is already determined,  so we are left to prove that an $S_1$ satisfying \eqref{eq: prop1.3} exists and is unique. 

By Lemma~\ref{lem:cyclic-iso}, applied fibrewise, the map
\[
        \mathcal M_A:\mathcal E\longrightarrow\Lambda^2T^*M\otimes TM
\]
is a vector bundle isomorphism on the cyclic locus. Therefore the equation
\[
        \mathcal M_A S_1
        =-d_{\bar\nabla}A-\mathcal M_A S_0
\]
has a unique smooth solution, namely
\[
        S_1
        =\mathcal M_A^{-1}
        \bigl(-d_{\bar\nabla}A-\mathcal M_A S_0\bigr).
\]
Smoothness follows because the cyclicity condition is open and the inverse bundle map
$\mathcal M_A^{-1}$ depends smoothly on the base point.

Let us also record the analytic version, which will be used below (see Theorem \ref{thm:main-restated}). If $(M,\circ,e)$
and $X$ are real analytic , choose the auxiliary torsionless
connection $\bar\nabla$ analytically (for instance, the flat connection in analytic local
coordinates) and choose $\theta$ analytically with $\theta(e)=1$; locally this is possible
because some component of the nowhere-zero analytic vector field $e$ is non-zero. Then
$S_0$, $d_{\bar\nabla}A$, and $\mathcal M_A S_0$ are analytic. In analytic local
frames of $\mathcal E$ and $\Lambda^2T^*M\otimes TM$, the bundle map $\mathcal M_A$ is
represented by an analytic matrix whose determinant is nowhere zero on the cyclic locus,
by Lemma~\ref{lem:cyclic-iso}. Its inverse matrix is therefore analytic. Hence
\[
        S_1
        =\mathcal M_A^{-1}
        \bigl(-d_{\bar\nabla}A-\mathcal M_A S_0\bigr)
\]
is analytic, and so the connection constructed below is analytic as well.

Define
\[
        \tnabla_YZ:=\bar\nabla_YZ+S_0(Y,Z)+S_1(Y,Z).
\]
By construction, $\tnabla$ is torsionless, because $S_0+S_1$ is symmetric. Moreover,
$\tnabla e=0$, because
\[
        (S_0+S_1)(Y,e)=-\bar\nabla_Ye,
\]
and $d_{\tnabla}A=0$, because $S_1$ solves equation \eqref{eq: prop1.3} by construction. This proves
existence of $\tnabla.$

It remains to prove uniqueness. Suppose that $\nabla'$ is another torsionless connection
satisfying
\[
        \nabla' e=0,
        \qquad
        d_{\nabla'}A=0.
\]
Write
\[
        \nabla'=\tnabla+D,
        \qquad
        D\in\Gamma(S^2T^*M\otimes TM).
\]
Since both $\nabla'$ and $\tnabla$ have flat unit, $D(e,\cdot)=0$. Since both satisfy
$d_\nabla A=0$, their difference gives (see \eqref{eq: prop1.1})
\[
        \mathcal M_A D=0.
\]
But $D\in\Gamma(\mathcal E)$, and
\[
        \mathcal M_A:\mathcal E\longrightarrow\Lambda^2T^*M\otimes TM
\]
is an isomorphism. Hence $D=0$. Therefore $\nabla'=\tnabla$.
\end{proof}
Thus $\tnabla$ is the unique torsionless connection that satisfies $\tnabla e=0$ and $d_{\tnabla} X\circ =0$. 
Subsections 3.2 and 3.3 are dedicated to showing that $\tnabla$ satisfies the symmetry condition with respect to $\circ$ that makes it into a distinguished connection with respect to $X$. First we need a preliminary subsection on the Hertling-Manin condition.

\subsection{Covariant Hertling--Manin and associativity identities}
Let $\nabla$ be torsionless and define
\[
        T(U,Y,Z):=(\nabla_U\circ)(Y,Z).
\]
Since $\circ$ is commutative,
\begin{equation}\label{eq:Tcomm-two last entries}
        T(U,Y,Z)=T(U,Z,Y).
\end{equation}

\begin{lemma}[Covariant Hertling--Manin identity]
\label{lem:covHM}
For all vector fields $P,Q,R,S$,
\begin{align}
0={}&T(R\circ S,P,Q)-T(P\circ Q,R,S)\nonumber\\
&+T(P,R,S)\circ Q+T(Q,R,S)\circ P\nonumber\\
&-T(S,P,Q)\circ R-T(R,P,Q)\circ S.
\label{eq:covHM}
\end{align}
\end{lemma}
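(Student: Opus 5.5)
The plan is to rewrite the Hertling--Manin identity \eqref{HMeq1free} covariantly: expand every Lie bracket with the torsion-free connection, $[U,V]=\nabla_UV-\nabla_VU$, and then check that all terms containing derivatives of the vector fields cancel. What survives is expressed only through $T=\nabla\circ$, and that residue is exactly \eqref{eq:covHM}.

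Step one is a formula for the operator $P_U(Y,Z)=[U,Y\circ Z]-[U,Y]\circ Z-Y\circ[U,Z]$ introduced earlier. Using torsionlessness and the Leibniz rule $\nabla_U(Y\circ Z)=T(U,Y,Z)+\nabla_UY\circ Z+Y\circ\nabla_UZ$, the terms $\nabla_UY\circ Z$ and $Y\circ\nabla_UZ$ cancel against the corresponding pieces of $[U,Y]\circ Z$ and $Y\circ[U,Z]$. This leaves
\[
P_U(Y,Z)=T(U,Y,Z)-\nabla_{Y\circ Z}U+(\nabla_YU)\circ Z+Y\circ(\nabla_ZU).
\]
The expression is tensorial in $Y,Z$ but not in $U$.

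Step two is to substitute this into the form $P_{U\circ V}(Y,Z)=U\circ P_V(Y,Z)+V\circ P_U(Y,Z)$ of the Hertling--Manin condition, with $U=R$, $V=S$, $Y=P$, $Z=Q$. On the left we get $T(R\circ S,P,Q)-\nabla_{P\circ Q}(R\circ S)+\nabla_P(R\circ S)\circ Q+P\circ\nabla_Q(R\circ S)$. Expanding each $\nabla_W(R\circ S)$ by Leibniz produces $T(W,R,S)+\nabla_WR\circ S+R\circ\nabla_WS$. Because $\circ$ is commutative and associative, every term containing $\nabla R$ or $\nabla S$ matches, with the same sign, a term on the right-hand side $R\circ(-\nabla_{P\circ Q}S+\nabla_PS\circ Q+P\circ\nabla_QS)+S\circ(\dots)$. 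After cancellation the left side leaves
\[
T(R\circ S,P,Q)-T(P\circ Q,R,S)+T(P,R,S)\circ Q+T(Q,R,S)\circ P,
\]
and the right side leaves $R\circ T(S,P,Q)+S\circ T(R,P,Q)$. Moving everything to one side gives \eqref{eq:covHM}.

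The main difficulty is bookkeeping rather than conceptual: one has to verify carefully that all six groups of derivative terms cancel, which needs associativity for terms like $(\nabla_PR\circ S)\circ Q=R\circ\ldots$. A useful cross-check is that the final expression must be $\O$-multilinear, since every derivative term has dropped out. This confirms the cancellation and also shows that the identity is independent of the choice of torsionless $\nabla$. If the $P_U$ form of the Hertling--Manin condition is not to be taken as given, one can instead expand \eqref{HMeq1free} directly with $X=R$, $Y=S$, $W=P$, $Z=Q$; the cancellations are identical.
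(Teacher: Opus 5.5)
Your proof is correct. Your formula
$(\mathcal{L}_U\circ)(Y,Z)=T(U,Y,Z)-\nabla_{Y\circ Z}U+(\nabla_YU)\circ Z+Y\circ\nabla_ZU$ is right. When you substitute it into $\mathcal{L}_{R\circ S}\circ=R\circ\mathcal{L}_S\circ+S\circ\mathcal{L}_R\circ$ evaluated on $(P,Q)$, the six groups of terms containing $\nabla R$ or $\nabla S$ do cancel pairwise by commutativity and associativity. The surviving $T$-terms give \eqref{eq:covHM} with exactly the stated signs.

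The paper gives no argument of its own here; it simply cites Lemma 3.7 of \cite{LPVG1}. Your version is a self-contained, coordinate-free replacement. It also makes visible that only three ingredients are used: torsion-freeness of $\nabla$, commutativity and associativity of $\circ$, and the Hertling--Manin identity. Neither a flat unit nor compatibility of $\nabla$ with $\circ$ enters. That is exactly the generality needed later, when the lemma is applied to $\tnabla$ in Lemma~\ref{lem:Te} and Section 3.3, before any symmetry of $T$ beyond the last two slots is known.

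Your closing remark about $\mathcal{O}_M$-multilinearity is only a consistency check, not an independent confirmation that the derivative terms cancel. The explicit term-by-term matching already settles that.
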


\begin{proof}
The proof is given in \cite{LPVG1} (Lemma 3.7). 
\end{proof}

\begin{lemma}[Covariant associativity identity]
\label{lem:covassoc}
For all vector fields $U,V,W,Z$,
\begin{equation}
        T(U,V\circ W,Z)+T(U,V,W)\circ Z
        =T(U,V,W\circ Z)+V\circ T(U,W,Z).
\label{eq:covassoc}
\end{equation}
\end{lemma}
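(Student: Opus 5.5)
The identity is a direct consequence of the associativity of $\circ$ and the Leibniz rule, so I will simply differentiate the associativity relation covariantly. Since $\circ$ is $\O$-bilinear, $\nabla_U\circ$ is a tensor, namely
\[
T(U,Y,Z)=(\nabla_U\circ)(Y,Z)=\nabla_U(Y\circ Z)-\nabla_UY\circ Z-Y\circ\nabla_UZ.
\]
The first step is to apply $\nabla_U$ to the associativity relation \eqref{asso1.eq} in the form
\[
(V\circ W)\circ Z=V\circ (W\circ Z).
\]

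On the left-hand side I use the Leibniz formula above twice, first for the pair $(V\circ W,Z)$ and then for the pair $(V,W)$:
\begin{align*}
\nabla_U\bigl((V\circ W)\circ Z\bigr)={}&T(U,V\circ W,Z)+T(U,V,W)\circ Z\\
&+(\nabla_UV\circ W)\circ Z+(V\circ\nabla_UW)\circ Z+(V\circ W)\circ\nabla_UZ.
\end{align*}
On the right-hand side I proceed symmetrically, first for the pair $(V,W\circ Z)$ and then for the pair $(W,Z)$:
\begin{align*}
\nabla_U\bigl(V\circ(W\circ Z)\bigr)={}&T(U,V,W\circ Z)+V\circ T(U,W,Z)\\
&+\nabla_UV\circ(W\circ Z)+V\circ(\nabla_UW\circ Z)+V\circ(W\circ\nabla_UZ).
\end{align*}

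Next I compare the two expansions. The three terms on each side that contain a covariant derivative of a single argument match pairwise, because associativity (and commutativity) of $\circ$ holds pointwise for arbitrary vector fields, in particular for $\nabla_UV$, $\nabla_UW$ and $\nabla_UZ$. Cancelling these terms leaves exactly \eqref{eq:covassoc}.

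There is no real obstacle here: torsionlessness and the Hertling--Manin condition are not used, and any connection works. The only point requiring care is to group the terms consistently so that the non-tensorial pieces cancel.
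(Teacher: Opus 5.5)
Your proposal is correct and matches the paper's proof: you apply $\nabla_U$ to $(V\circ W)\circ Z=V\circ(W\circ Z)$, expand both sides by the Leibniz rule into the same five terms each, and cancel the three terms with a derivative on a single argument. You are also right that this uses neither torsionlessness nor the Hertling--Manin condition; in fact that cancellation needs only associativity, not commutativity.
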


\begin{proof}
Apply $\nabla_U$ to the associativity identity
\[
        (V\circ W)\circ Z=V\circ(W\circ Z).
\]
Using the definition of $T$, the derivative of the left-hand side is
\begin{align*}
\nabla_U((V\circ W)\circ Z)
={}&T(U,V\circ W,Z)+T(U,V,W)\circ Z\\
&+(\nabla_UV\circ W)\circ Z+(V\circ\nabla_UW)\circ Z
 +(V\circ W)\circ\nabla_UZ.
\end{align*}
The derivative of the right-hand side is
\begin{align*}
\nabla_U(V\circ(W\circ Z))
={}&T(U,V,W\circ Z)+V\circ T(U,W,Z)\\
&+\nabla_UV\circ(W\circ Z)+V\circ(\nabla_UW\circ Z)
  +V\circ(W\circ\nabla_UZ).
\end{align*}
The last three terms on each side agree by commutativity and associativity of $\circ$.
Cancelling them gives \eqref{eq:covassoc}.
\end{proof}

\begin{lemma}[Consequences of a flat unit]
\label{lem:Te}
If $\nabla e=0$, then for all local vector fields $Y,Z$:
\[
        T(e,Y,Z)=0,
        \qquad
        T(Y,e,Z)=0.
\]
\end{lemma}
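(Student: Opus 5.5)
The two identities have different sources. The first, $T(e,Y,Z)=0$, comes directly from the definition of $T$ together with the fact that $e$ is the unit. The second, $T(Y,e,Z)=0$, comes from differentiating the unit identity $e\circ Z=Z$ and using $\nabla e=0$.

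For the second identity, fix $Y$ and apply $\nabla_Y$ to $e\circ Z=Z$. By the Leibniz rule for the $(1,2)$-tensor $\circ$,
\[
\nabla_Y(e\circ Z)=(\nabla_Y\circ)(e,Z)+\nabla_Ye\circ Z+e\circ\nabla_YZ .
\]
Since $\nabla e=0$, the middle term vanishes. The last term equals $\nabla_YZ$, which is exactly the derivative of the right-hand side of $e\circ Z=Z$. Hence $T(Y,e,Z)=0$, and by \eqref{eq:Tcomm-two last entries} also $T(Y,Z,e)=0$.

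For the first identity, compute $T(e,Y,Z)=\nabla_e(Y\circ Z)-\nabla_eY\circ Z-Y\circ\nabla_eZ$. The connection is torsionless, so $\nabla_eW=\nabla_We+[e,W]=[e,W]$ for every $W$, again using $\nabla e=0$. Therefore
\[
T(e,Y,Z)=[e,Y\circ Z]-[e,Y]\circ Z-Y\circ[e,Z]=(\mathcal L_e\circ)(Y,Z).
\]
It thus remains to prove the standard fact $\mathcal L_e\circ=0$ for the unit of an F-manifold.

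To prove $\mathcal L_e\circ=0$, I will set $X=Y=e$ in the Hertling--Manin identity of Definition~\ref{defFmani}. This gives $\mathcal L_e\circ=2\,e\circ(\mathcal L_e\circ)=2\mathcal L_e\circ$, hence $\mathcal L_e\circ=0$. As an alternative route, one can use the covariant Hertling--Manin identity of Lemma~\ref{lem:covHM} with $R=S=e$. The terms $T(P,e,e)$, $T(Q,e,e)$ and $T(P\circ Q,e,e)$ vanish by the second identity, and the identity reduces to $T(e,P,Q)-2T(e,P,Q)=0$.

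The only point needing care is this last step, namely recognising $T(e,\cdot,\cdot)$ as $\mathcal L_e\circ$ and applying the Hertling--Manin condition; everything else is direct computation.
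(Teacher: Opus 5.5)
Your proof is correct. For $T(Y,e,Z)=0$ it is exactly the paper's argument: differentiate $e\circ Z=Z$ and use $\nabla e=0$.

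For $T(e,Y,Z)=0$ your main route is genuinely different.
\begin{itemize}
\item The paper substitutes $P=e$ in the covariant Hertling--Manin identity of Lemma~\ref{lem:covHM}. It kills $T(R\circ S,e,Q)$, $T(S,e,Q)$ and $T(R,e,Q)$ using the already-proved second identity, and cancels $-T(e\circ Q,R,S)$ against $T(Q,R,S)\circ e$. This gives $T(e,R,S)\circ Q=0$, and it then sets $Q=e$.
\item You instead use torsion-freeness and $\nabla e=0$ to get $\nabla_eW=[e,W]$. This identifies $T(e,\cdot,\cdot)$ with $\mathcal{L}_e\circ$. You then kill $\mathcal{L}_e\circ$ with the original Hertling--Manin identity at $X=Y=e$; the paper itself uses this observation later, in the proof of Theorem~\ref{Compatiblefromriemann.th}.
\end{itemize}

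What each route buys:
\begin{itemize}
\item Yours is more elementary and self-contained. It does not rely on the covariant identity, which the paper only cites from \cite{LPVG1}. It does not need the second identity as input. It also explains why the result holds: for a torsionless connection with flat unit, $\nabla_e\circ$ is simply $\mathcal{L}_e\circ$.
\item The paper's route stays inside the tensorial $T$-calculus, which is what drives the rest of the section (Proposition~\ref{prop:oneleg}, Lemma~\ref{lem:propagate}).
\end{itemize}

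Your alternative with $R=S=e$ in Lemma~\ref{lem:covHM} is also correct. It is a slightly more economical version of the paper's own substitution, because it yields $T(e,P,Q)=0$ in one step without the extra specialisation $Q=e$.

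One small slip in your overview: you say $T(e,Y,Z)=0$ comes directly from the definition of $T$ and the unit property. As your own argument shows, this identity in fact needs torsion-freeness and the Hertling--Manin condition.
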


\begin{proof}
The equality $T(Y,e,Z)=0$ follows directly from $e\circ Z=Z$ and $\nabla e=0$:
\[
        T(Y,e,Z)=(\nabla_Y\circ)(e,Z)=\nabla_Y(e\circ Z)-(\nabla_Y e)\circ Z-e\circ \nabla_YZ=0.
\]
Now put $P=e$ in \eqref{eq:covHM}. The terms $T(R\circ S,e,Q)$, $T(S,e,Q)$, and
$T(R,e,Q)$ vanish by the first part, while $-T(e\circ Q,R,S)$ and
$T(Q,R,S)\circ e$ cancel. Hence
\[
        T(e,R,S)\circ Q=0
        \qquad\forall\,Q,R,S.
\]
Taking $Q=e$ gives $T(e,R,S)=0$.
\end{proof}

Now we are ready to prove that $\tnabla$ satisfies the symmetry condition with respect to the product:

\subsection{Compatibility with the product of the natural connection}
In this subsection we use the connection $\tnabla$ and the cyclic vector field $X$.  Thus
\[
        \tnabla e=0,
        \qquad
        d_{\tnabla} A=0,
        \qquad
        A=C_X.
\]
We keep the notation
\[
        T(U,Y,Z)=(\tnabla_U\circ)(Y,Z).
\]

\begin{lemma}
\label{lem:nablaX}
There exists a vector field $W_0$ such that
\[
        \tnabla_YX=Y\circ W_0
        \qquad\forall\,Y.
\]
In fact, $W_0=\tnabla_eX$. Consequently,
\[
        T(Y,X,Z)=T(Z,X,Y)
        \qquad\forall\,Y,Z.
\]
\end{lemma}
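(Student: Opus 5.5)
We want to show $\tnabla_Y X = Y\circ \tnabla_e X$. The plan is to exploit $d_{\tnabla}A=0$ with one argument set equal to $e$.

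Since $\tnabla$ is torsionless, $(d_{\tnabla}A)(Y,Z)=(\tnabla_YA)(Z)-(\tnabla_ZA)(Y)$. Take $Z=e$. Because $\tnabla e=0$, we have
\[
(\tnabla_YA)(e)=\tnabla_Y(Ae)-A(\tnabla_Ye)=\tnabla_YX .
\]
For the other term, recall that $A=C_X$, so for any $U$ one has $AU=X\circ U$. Differentiating gives
\[
(\tnabla_eA)(Y)=\tnabla_e(X\circ Y)-X\circ\tnabla_eY=(\tnabla_e X)\circ Y+T(e,X,Y).
\]
By Lemma~\ref{lem:Te}, $T(e,X,Y)=0$. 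So the equation $(d_{\tnabla}A)(Y,e)=0$ reads $\tnabla_YX=Y\circ\tnabla_eX$. This proves the first claim with $W_0=\tnabla_eX$.

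For the consequence, expand $\tnabla_Y(X\circ Z)=T(Y,X,Z)+\tnabla_YX\circ Z+X\circ\tnabla_YZ$ and insert the formula just obtained. This gives
\[
(\tnabla_YA)(Z)=T(Y,X,Z)+Y\circ Z\circ W_0 .
\]
The term $Y\circ Z\circ W_0$ is symmetric in $Y$ and $Z$. Hence $d_{\tnabla}A=0$, i.e.\ $(\tnabla_YA)(Z)=(\tnabla_ZA)(Y)$, yields $T(Y,X,Z)=T(Z,X,Y)$.

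There is no real obstacle. The only point needing care is the correct use of Lemma~\ref{lem:Te}, which requires $T(e,\cdot,\cdot)=0$ and not merely $T(\cdot,e,\cdot)=0$. That lemma already derives the former from the covariant Hertling--Manin identity.
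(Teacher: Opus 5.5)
Your proof is correct and follows essentially the same route as the paper. You set $Z=e$ in $d_{\tnabla}A=0$ and eliminate $T(e,X,Y)$ via Lemma~\ref{lem:Te}; for $(\tnabla_YA)(e)$ you use $\tnabla e=0$ directly, where the paper invokes $T(Y,e,X)=0$, which amounts to the same thing. You then substitute $\tnabla_YX=Y\circ W_0$ back into $d_{\tnabla}A=0$, where the term $Y\circ Z\circ W_0$ is symmetric in $Y,Z$ and cancels, leaving $T(Y,X,Z)=T(Z,X,Y)$.
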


\begin{proof}
The equation $d_{\tnabla} A=0$ means
\[
        (\tnabla_YA)(Z)=(\tnabla_ZA)(Y)
        \qquad\forall\,Y,Z.
\]
Since $A=C_X$, 
\begin{align*}
        (\tnabla_YA)(Z)
        &=
        \tnabla_Y(X\circ Z)-A(\tnabla_YZ)\\
        &=
        T(Y,X,Z)+(\tnabla_YX)\circ Z,
\end{align*}
and similarly
\[
        (\tnabla_ZA)(Y)
        =
        T(Z,X,Y)+(\tnabla_ZX)\circ Y.
\]
Thus $d_{\tnabla} A=0$ gives the identity
\begin{equation}
        T(Y,X,Z)+(\tnabla_YX)\circ Z
        =
        T(Z,X,Y)+(\tnabla_ZX)\circ Y.
\label{eq:dA-expanded}
\end{equation}

Now, setting $Z=e$, \eqref{eq:dA-expanded} becomes
\[
        \tnabla_YX=(\tnabla_eX)\circ Y,
\]
since $T(Y,X,e)\stackrel{\eqref{eq:Tcomm-two last entries}}{=}T(Y,e,X)=0$ and $T(e,X,Y)=0$ by Lemma \ref{lem:Te} .
Thus the required vector field is
\[
        W_0:=\tnabla_eX,
\]
and
\[
        \tnabla_YX=Y\circ W_0
        \qquad\forall\,Y.
\]

We now substitute this formula back into the identity \eqref{eq:dA-expanded}, this time
for arbitrary $Y$ and $Z$. Namely, we use
\[
        \tnabla_YX=(\tnabla_eX)\circ Y,
        \qquad
        \tnabla_ZX=(\tnabla_eX)\circ Z.
\]
Then \eqref{eq:dA-expanded} becomes
\[
        T(Y,X,Z)+((\tnabla_eX)\circ Y)\circ Z
        =
        T(Z,X,Y)+((\tnabla_eX)\circ Z)\circ Y,
\]
from which, using commutativity and associativity of $\circ$,  we get
\[
        T(Y,X,Z)=T(Z,X,Y).
\]
This proves the lemma.
\end{proof}

\begin{proposition}
\label{prop:oneleg}
The connection $\tnabla$ satisfies
\[
        T(X,Y,Z)=T(Y,X,Z)
        \qquad\forall\,Y,Z.
\]
Equivalently,
\[
        (\tnabla_X\circ)(Y,Z)=(\tnabla_Y\circ)(X,Z).
\]
\end{proposition}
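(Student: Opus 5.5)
The plan is to exploit cyclicity directly. The defect
\[
F(Y,Z):=T(X,Y,Z)-T(Y,X,Z)
\]
is $\mathcal O_M$-bilinear in $(Y,Z)$, since $T$ is a tensor. Because $e,X,X^{\circ 2},\dots,X^{\circ(n-1)}$ is a local frame, it therefore suffices to prove $F(X^{\circ i},Z)=0$ for every $i\geq 0$ and every $Z$. I will do this by induction on $i$, using three ingredients.
\begin{itemize}
\item The covariant Hertling--Manin identity, Lemma~\ref{lem:covHM}.
\item The covariant associativity identity, Lemma~\ref{lem:covassoc}.
\item The symmetries already established. By \eqref{eq:Tcomm-two last entries} together with Lemma~\ref{lem:nablaX}, the trilinear map $(Y,Z)\mapsto T(Y,X,Z)$ is symmetric, and $T(Y,X,Z)=T(Y,Z,X)$. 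Hence any expression $T(\cdot,X,\cdot)$ or $T(\cdot,\cdot,X)$ is symmetric in its two free slots.
\end{itemize}

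The base cases are immediate. For $i=0$, Lemma~\ref{lem:Te} gives $T(X,e,Z)=0=T(e,X,Z)$. For $i=1$, the identity $F(X,Z)=0$ is tautological.

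For the inductive step, assume $F(X^{\circ i},\cdot)=0$. I rewrite the two terms of $F(X^{\circ(i+1)},Z)$ as follows.
\begin{itemize}
\item The first term $T(X,X\circ X^{\circ i},Z)$ is rewritten with Lemma~\ref{lem:covassoc}, taking $U=V=X$ and $W=X^{\circ i}$. This turns it into
\[
T(X,X,X^{\circ i}\circ Z)+X\circ T(X,X^{\circ i},Z)-T(X,X,X^{\circ i})\circ Z.
\]
\item The second term $T(X\circ X^{\circ i},X,Z)$ is rewritten with \eqref{eq:covHM}, taking $R=X$, $S=X^{\circ i}$, $P=X$, $Q=Z$. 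This expresses it through $T(X\circ Z,X,X^{\circ i})$ and through terms of the form $T(\cdot,\cdot,\cdot)\circ(\cdot)$ with arguments among $X$, $Z$ and $X^{\circ i}$.
\end{itemize}
After subtracting, the inductive hypothesis $T(X,X^{\circ i},Z)=T(X^{\circ i},X,Z)$ cancels the terms carrying a factor $\circ X$. The term $T(X,X,X^{\circ i})\circ Z$ cancels between the two expansions. What should remain is, up to sign,
\[
T(X,X,X^{\circ i}\circ Z)-T(X\circ Z,X,X^{\circ i})+T(Z,X,X^{\circ i})\circ X-T(X,X,Z)\circ X^{\circ i}.
\]

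The main obstacle is showing that this remainder vanishes. By the symmetry of $T(\cdot,X,\cdot)$, every term has $X$ in the middle slot. The remainder can then be rewritten as
\[
T(X^{\circ i}\circ Z,X,X)-T(X^{\circ i},X,X\circ Z)+T(X^{\circ i},X,Z)\circ X-T(Z,X,X)\circ X^{\circ i}.
\]
My first attempt will be to apply Lemma~\ref{lem:covassoc} once more, now with $U=X^{\circ i}$, $V=X$, $W=X$ (or $W=Z$). Combined with the symmetry of $T(\cdot,X,\cdot)$, this trades $T(X^{\circ i},X,X\circ Z)$ for $T(X^{\circ i},X\circ X,Z)$ plus product terms.

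If that does not close, I will instead reapply \eqref{eq:covHM} with $P=Q=X$ and $R=X^{\circ i}$, $S=Z$. This should relate $T(X^{\circ 2},\cdot,\cdot)$-type terms to the ones above. If the direct induction still leaves a residue, I will strengthen the inductive hypothesis so that the second slot also runs over the frame. Concretely, I will prove simultaneously that $F(X^{\circ i},X^{\circ j})=0$ for all $i,j$, using the commutation of everything with multiplication by $X$. In that form, all the $\circ$-factors are powers of $X$ and the bookkeeping reduces to polynomial identities in $X$.
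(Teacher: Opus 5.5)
\textbf{There is a genuine gap.} It is the step you yourself call the main obstacle. You never show that your remainder
\[
\mathcal R:=T(X,X,X^{\circ i}\circ Z)-T(X\circ Z,X,X^{\circ i})+T(Z,X,X^{\circ i})\circ X-T(X,X,Z)\circ X^{\circ i}
\]
vanishes, and this is the entire content of the proposition. The expansions leading to $\mathcal R$ are correct but routine.

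None of your fallback strategies is carried out, and they do not obviously work:
\begin{enumerate}
\item Covariant associativity with $U=X^{\circ i}$, $V=W=X$ only trades $T(X^{\circ i},X,X\circ Z)$ for $T(X^{\circ i},X^{\circ 2},Z)$. That term is no better controlled by your hypotheses.
\item The values $T(X^{\circ a},X^{\circ b},X^{\circ c})$ are unknown vector fields. Having only powers of $X$ as $\circ$-factors therefore does not turn the problem into ``polynomial identities in $X$''.
\item No strengthened induction is needed. Your hypothesis $F(X^{\circ i},\cdot)=0$ already holds for \emph{every} second argument.
\end{enumerate}

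\textbf{The missing idea} is to use that hypothesis a second time, with $X\circ Z$ in place of $Z$. Apply Lemma~\ref{lem:covassoc} with $(U,V,W)=(X,X^{\circ i},X)$ and combine it with your own expansion for $(U,V,W)=(X,X,X^{\circ i})$. Using $T(X,X^{\circ i},X)=T(X,X,X^{\circ i})$, this gives
\[
T(X,X^{\circ i},X\circ Z)=T(X,X,X^{\circ i}\circ Z)+X\circ T(X,X^{\circ i},Z)-X^{\circ i}\circ T(X,X,Z).
\]
Together with Lemma~\ref{lem:nablaX}, this shows
\[
\mathcal R=F(X^{\circ i},X\circ Z)-X\circ F(X^{\circ i},Z),
\]
which is zero by the inductive hypothesis.

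\textbf{A cleaner route, which is the paper's.} Isolate the identity this reveals: $F(X\circ Y,Z)=F(Y,X\circ Z)$ for \emph{all} $Y,Z$. It needs no induction. The paper proves it by showing that two differences both equal $Y\circ T(X,X,Z)-Z\circ T(X,X,Y)$:
\begin{enumerate}
\item $T(X,X\circ Y,Z)-T(X,Y,X\circ Z)$, via covariant associativity with $(U,V,W)=(X,Y,X)$;
\item $T(X\circ Y,X,Z)-T(Y,X,X\circ Z)$, via Lemma~\ref{lem:covHM} with $(P,Q,R,S)=(X,Y,X,Z)$ followed by Lemma~\ref{lem:nablaX}.
\end{enumerate}
Once you have this self-adjointness, your induction step is the single line $F(X^{\circ(i+1)},Z)=F(X^{\circ i},X\circ Z)=0$. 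This is just the paper's cyclicity step $H(A^ie,A^je)=H(e,A^{i+j}e)=0$, written as an induction.
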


\begin{proof}
Set
\[
        H(Y,Z):=T(X,Y,Z)-T(Y,X,Z).
\]
We prove that $H=0$.

First, $H$ is symmetric. Indeed, using the symmetry of $T$ in its last two arguments
and Lemma~\ref{lem:nablaX}, we get
\[H(Z,Y)=H(Y,Z),\]
and moreover,
\begin{equation}\label{eq: H(e,Z)=0}
        H(e,Z)=0.
\end{equation}
by Lemma~\ref{lem:Te}.

We now prove that $A$ is self-adjoint with respect to $H$:
\[
        H(AY,Z)=H(Y,AZ).
\]
Since $A=C_X$, we have
\[
        AY=X\circ Y,
        \qquad
        AZ=X\circ Z.
\]
Therefore, we get
\begin{align}
        H(AY,Z)-H(Y,AZ)
        &=
        \bigl[
        T(X,X\circ Y,Z)-T(X,Y,X\circ Z)
        \bigr]\nonumber\\
        &\quad
        -
        \bigl[
        T(X\circ Y,X,Z)-T(Y,X,X\circ Z)
        \bigr].
\label{eq:H-difference-two-brackets}
\end{align}
We shall prove that the two square brackets in
\eqref{eq:H-difference-two-brackets} are equal.

Let
\[
        B_1
        :=
        T(X,X\circ Y,Z)-T(X,Y,X\circ Z).
\]
We compute $B_1$ using the covariant associativity identity
\eqref{eq:covassoc},
\[
        T(U,V\circ W,Z)+T(U,V,W)\circ Z
        =
        T(U,V,W\circ Z)+V\circ T(U,W,Z).
\]
Apply it with
\[
        U=X,
        \qquad
        V=Y,
        \qquad
        W=X.
\]
Since the product is commutative equation \eqref{eq:covassoc} gives
\[
        T(X,X\circ Y,Z)+T(X,Y,X)\circ Z
        =
        T(X,Y,X\circ Z)+Y\circ T(X,X,Z).
\]
Rearranging, we obtain
\[
        T(X,X\circ Y,Z)-T(X,Y,X\circ Z)
        =
        Y\circ T(X,X,Z)-T(X,Y,X)\circ Z.
\]
Since $T$ is symmetric in its last two arguments (see \eqref{eq:Tcomm-two last entries}) then
\begin{equation}
        B_1
        =
        Y\circ T(X,X,Z)-Z\circ T(X,X,Y).
\label{eq:B1-expression}
\end{equation}

Now let
\[
        B_2
        :=
        T(X\circ Y,X,Z)-T(Y,X,X\circ Z).
\]
We compute $B_2$ using the covariant Hertling--Manin identity
\eqref{eq:covHM},
\begin{align*}
0={}&T(R\circ S,P,Q)-T(P\circ Q,R,S)\\
&+T(P,R,S)\circ Q+T(Q,R,S)\circ P\\
&-T(S,P,Q)\circ R-T(R,P,Q)\circ S.
\end{align*}
Apply this identity with
\[
        P=X,
        \qquad
        Q=Y,
        \qquad
        R=X,
        \qquad
        S=Z.
\]
This gives
\begin{align}
0={}&
T(X\circ Z,X,Y)-T(X\circ Y,X,Z)\nonumber\\
&+T(X,X,Z)\circ Y+T(Y,X,Z)\circ X\nonumber\\
&-T(Z,X,Y)\circ X-T(X,X,Y)\circ Z.
\label{eq:HM-XXYZ}
\end{align}
By Lemma~\ref{lem:nablaX}, we have $ T(Y,X,Z)=T(Z,X,Y),$
hence the two middle terms $T(Y,X,Z)\circ X,  -T(Z,X,Y)\circ X$
cancel in \eqref{eq:HM-XXYZ}. Thus
\begin{equation}
0=
T(X\circ Z,X,Y)-T(X\circ Y,X,Z)
+Y\circ T(X,X,Z)-Z\circ T(X,X,Y).
\label{eq:HM-after-middle-cancel}
\end{equation}
Again by Lemma~\ref{lem:nablaX}, now applied with the pair of arguments
$X\circ Z$ and $Y$, we have
\[
        T(X\circ Z,X,Y)=T(Y,X,X\circ Z).
\]
Substituting this into \eqref{eq:HM-after-middle-cancel}, we obtain
\[
0=
T(Y,X,X\circ Z)-T(X\circ Y,X,Z)
+Y\circ T(X,X,Z)-Z\circ T(X,X,Y).
\]
Rearranging gives
\begin{equation}
        T(X\circ Y,X,Z)-T(Y,X,X\circ Z)
        =
        Y\circ T(X,X,Z)-Z\circ T(X,X,Y).
\label{eq:B2-expression}
\end{equation}
In other words,
\[
        B_2
        =
        Y\circ T(X,X,Z)-Z\circ T(X,X,Y).
\]

Comparing \eqref{eq:B1-expression} and \eqref{eq:B2-expression}, we get
\[
        B_1=B_2.
\]
Therefore \eqref{eq:H-difference-two-brackets} gives
\begin{equation}
        H(AY,Z)=H(Y,AZ).
\label{eq:H-A-invariance}
\end{equation}

Finally,  we use cyclicity. Since
\[
        e,Ae,\ldots,A^{n-1}e
\]
is a local frame, it is enough to evaluate $H$ on pairs of vectors of this frame.
Repeatedly using \eqref{eq:H-A-invariance}, we obtain
\[
        H(A^ie,A^je)=H(e,A^{i+j}e).
\]
But $H(e,\cdot)=0$ by \eqref{eq: H(e,Z)=0}, hence
\[
        H(A^ie,A^je)=0
        \qquad
        \forall\,i,j.
\]
Therefore $H=0$. 
\end{proof}

For a vector field $U$, denote by $\mathcal S(U)$ the property
\[
        T(U,Y,Z)=T(Y,U,Z)
        \qquad\forall\,Y,Z.
\]

\begin{lemma}
\label{lem:propagate}
Let $U_0$ be a vector field. If $\mathcal S(X)$ and $\mathcal S(U_0)$ hold,
then $\mathcal S(X\circ U_0)$ holds.
\end{lemma}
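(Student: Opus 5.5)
The plan is to compare the two sides of $\mathcal S(X\circ U_0)$, namely $T(X\circ U_0,Y,Z)$ and $T(Y,X\circ U_0,Z)$, by expanding each with one of the two covariant identities already established. The covariant Hertling--Manin identity (Lemma~\ref{lem:covHM}) handles a product in the \emph{first} slot of $T$. The covariant associativity identity (Lemma~\ref{lem:covassoc}) handles a product in the \emph{second} slot. Throughout, the hypotheses $\mathcal S(X)$ and $\mathcal S(U_0)$ are used to move $X$ and $U_0$ into the first slot, together with the symmetry \eqref{eq:Tcomm-two last entries} of $T$ in its last two arguments. No cyclicity or flat unit should be needed here.

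\emph{Step 1 (left-hand side).} Apply \eqref{eq:covHM} with $R=X$, $S=U_0$, $P=Y$, $Q=Z$ and solve for $T(X\circ U_0,Y,Z)$:
\[
T(X\circ U_0,Y,Z)=T(Y\circ Z,X,U_0)-T(Y,X,U_0)\circ Z-T(Z,X,U_0)\circ Y+T(U_0,Y,Z)\circ X+T(X,Y,Z)\circ U_0 .
\]
Using $\mathcal S(X)$, rewrite $T(Y\circ Z,X,U_0)=T(X,Y\circ Z,U_0)$, $T(Y,X,U_0)=T(X,Y,U_0)$ and $T(Z,X,U_0)=T(X,Z,U_0)$. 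Using $\mathcal S(U_0)$, rewrite $T(U_0,Y,Z)=T(Y,U_0,Z)$.

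\emph{Step 2 (right-hand side).} Apply \eqref{eq:covassoc} with $U=Y$, $V=X$, $W=U_0$ and last argument $Z$:
\[
T(Y,X\circ U_0,Z)=T(Y,X,U_0\circ Z)+X\circ T(Y,U_0,Z)-T(Y,X,U_0)\circ Z .
\]
Then use $\mathcal S(X)$ again, in the form $T(Y,X,W)=T(X,Y,W)$, on the first and last terms.

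\emph{Step 3 (comparison).} Subtract the two expressions. The terms $X\circ T(Y,U_0,Z)$ and $-T(X,Y,U_0)\circ Z$ cancel. What remains is
\[
T(X,Y\circ Z,U_0)+T(X,Y,Z)\circ U_0-T(X,Y,Z\circ U_0)-Y\circ T(X,Z,U_0).
\]
This vanishes by \eqref{eq:covassoc} applied with $U=X$, $V=Y$, $W=Z$ and last argument $U_0$.

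The only real obstacle is bookkeeping. One must pick the substitution into \eqref{eq:covHM} so that $X\circ U_0$ appears in the first slot and every other term is reducible by $\mathcal S(X)$ or $\mathcal S(U_0)$. One must then check that the residual terms assemble into exactly one further instance of covariant associativity. The choice above appears to do this.
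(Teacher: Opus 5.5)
Your proof is correct and follows the paper's argument. You use the same instance of the covariant Hertling--Manin identity (your choice $(P,Q,R,S)=(Y,Z,X,U_0)$ is just the negative of the paper's $(X,U_0,Y,Z)$), the same two instances of covariant associativity, and the same uses of $\mathcal S(X)$ and $\mathcal S(U_0)$; the only difference is that the paper applies the associativity instance with $(X,Y,Z,U_0)$ to the left-hand side before comparing, while you apply it to the residual after subtracting.
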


\begin{proof}
We assume
\[
        \mathcal S(U_0):\qquad
        T(U_0,Y,Z)=T(Y,U_0,Z)
        \qquad\forall\,Y,Z.
\]
We also know
\[
        \mathcal S(X):\qquad
        T(X,Y,Z)=T(Y,X,Z)
        \qquad\forall\,Y,Z
\]
by Proposition~\ref{prop:oneleg}. We want to prove
\[
        T(X\circ U_0,Y,Z)=T(Y,X\circ U_0,Z)
        \qquad\forall\,Y,Z.
\]

We first compute $T(X\circ U_0,Y,Z)$. Start from the covariant
Hertling--Manin identity
\begin{align*}
0={}&T(R\circ S,P,Q)-T(P\circ Q,R,S)\\
&+T(P,R,S)\circ Q+T(Q,R,S)\circ P\\
&-T(S,P,Q)\circ R-T(R,P,Q)\circ S.
\end{align*}
Apply it with
\[
        P=X,
        \qquad
        Q=U_0,
        \qquad
        R=Y,
        \qquad
        S=Z.
\]
This gives
\begin{align}
0={}&T(Y\circ Z,X,U_0)-T(X\circ U_0,Y,Z)\nonumber\\
&+T(X,Y,Z)\circ U_0+T(U_0,Y,Z)\circ X\nonumber\\
&-T(Z,X,U_0)\circ Y-T(Y,X,U_0)\circ Z.
\label{eq:prop-HM-start}
\end{align}

We now rewrite the terms in \eqref{eq:prop-HM-start} using $\mathcal S(X)$.
Since $\mathcal S(X)$ says that
\[
        T(X,A,B)=T(A,X,B)
        \qquad\forall\,A,B,
\]
we have
\[
        T(Y\circ Z,X,U_0)=T(X,Y\circ Z,U_0),
\]
and also
\[
        T(Z,X,U_0)=T(X,Z,U_0),
        \qquad
        T(Y,X,U_0)=T(X,Y,U_0).
\]
Hence \eqref{eq:prop-HM-start} becomes
\begin{align*}
0={}&T(X,Y\circ Z,U_0)-T(X\circ U_0,Y,Z)\\
&+T(X,Y,Z)\circ U_0+X\circ T(U_0,Y,Z)\\
&-Y\circ T(X,Z,U_0)-Z\circ T(X,Y,U_0).
\end{align*}
Solving for $T(X\circ U_0,Y,Z)$ gives
\begin{align}
T(X\circ U_0,Y,Z)
={}&T(X,Y\circ Z,U_0)+T(X,Y,Z)\circ U_0
+X\circ T(U_0,Y,Z)\nonumber\\
&-Y\circ T(X,Z,U_0)-Z\circ T(X,Y,U_0).
\label{eq:prop-left-first}
\end{align}

We now simplify the first two terms in \eqref{eq:prop-left-first} by the covariant
associativity identity \eqref{eq:covassoc}:
\begin{equation}\label{eq: covassaux}
        T(A,B\circ C,D)+T(A,B,C)\circ D
        =
        T(A,B,C\circ D)+B\circ T(A,C,D).
\end{equation}
Apply this identity with
\[
        A=X,
        \qquad
        B=Y,
        \qquad
        C=Z,
        \qquad
        D=U_0.
\]
We obtain
\begin{equation}
        T(X,Y\circ Z,U_0)+T(X,Y,Z)\circ U_0
        =
        T(X,Y,Z\circ U_0)+Y\circ T(X,Z,U_0).
\label{eq:prop-assoc-left}
\end{equation}
Substituting \eqref{eq:prop-assoc-left} into \eqref{eq:prop-left-first}, the two terms
$Y\circ T(X,Z,U_0)$ and $-Y\circ T(X,Z,U_0)$ cancel. Therefore
\begin{equation}
        T(X\circ U_0,Y,Z)
        =
        T(X,Y,Z\circ U_0)
        +X\circ T(U_0,Y,Z)
        -Z\circ T(X,Y,U_0).
\label{eq:left-prop-expanded}
\end{equation}

We now compute $T(Y,X\circ U_0,Z)$. Apply the covariant associativity identity \eqref{eq: covassaux} with
\[
        A=Y,
        \qquad
        B=X,
        \qquad
        C=U_0,
        \qquad
        D=Z.
\]
This gives
\begin{equation}
        T(Y,X\circ U_0,Z)+T(Y,X,U_0)\circ Z
        =
        T(Y,X,U_0\circ Z)+X\circ T(Y,U_0,Z).
\label{eq:prop-right-assoc-start}
\end{equation}

We now use both symmetry assumptions. From $\mathcal S(X)$,
\[
        T(Y,X,U_0)=T(X,Y,U_0),
        \qquad
        T(Y,X,U_0\circ Z)=T(X,Y,U_0\circ Z).
\]
From $\mathcal S(U_0)$,
\[
        T(Y,U_0,Z)=T(U_0,Y,Z).
\]
Therefore \eqref{eq:prop-right-assoc-start} becomes
\[
        T(Y,X\circ U_0,Z)+T(X,Y,U_0)\circ Z
        =
        T(X,Y,U_0\circ Z)+X\circ T(U_0,Y,Z).
\]
Solving for $T(Y,X\circ U_0,Z)$ gives
\begin{equation}
        T(Y,X\circ U_0,Z)
        =
        T(X,Y,U_0\circ Z)
        +X\circ T(U_0,Y,Z)
        -Z\circ T(X,Y,U_0).
\label{eq:right-prop-expanded}
\end{equation}

Finally, by commutativity of the product $\circ$ we get that the right-hand sides of \eqref{eq:left-prop-expanded} and \eqref{eq:right-prop-expanded} are equal,  and therefore
\[
        T(X\circ U_0,Y,Z)=T(Y,X\circ U_0,Z).
\]
This is exactly $\mathcal S(X\circ U_0)$.
\end{proof}

\begin{proposition}
\label{prop:fullsym}
The connection $\tnabla$ satisfies
\begin{equation}\label{eq: full symmetry}
        (\tnabla_W\circ)(Y,Z)=(\tnabla_Y\circ)(W,Z)
        \qquad\forall\,W,Y,Z.
\end{equation}
\end{proposition}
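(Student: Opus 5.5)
The idea is to propagate the symmetry property $\mathcal S$ from the unit through the cyclic frame, and then use $\O$-linearity of $T$ in its first two arguments.

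First, $\mathcal S(e)$ holds. By Lemma~\ref{lem:Te} we have $T(e,Y,Z)=0=T(Y,e,Z)$ for all $Y,Z$. Next, Proposition~\ref{prop:oneleg} gives $\mathcal S(X)$. We then argue by induction on $k$. If $\mathcal S(X^{\circ k})$ holds, Lemma~\ref{lem:propagate} with $U_0=X^{\circ k}$ yields $\mathcal S(X\circ X^{\circ k})=\mathcal S(X^{\circ(k+1)})$. Starting from $U_0=e$, where $X\circ e=X$ is consistent with the base case, we obtain $\mathcal S(X^{\circ k})$ for every $k\geq 0$, and in particular for $k=0,\dots,n-1$. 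Here $X^{\circ k}=A^ke$.

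Now let $W$ be an arbitrary local vector field. By cyclicity of $X$, the fields $e,X,\dots,X^{\circ(n-1)}$ form a local frame, so $W=\sum_{k=0}^{n-1}f_k X^{\circ k}$ with local functions $f_k$. The map $T(U,Y,Z)=(\tnabla_U\circ)(Y,Z)$ is a tensor, so it is $\O$-linear in each argument. In particular it is linear in $U$, since $\tnabla_{fU}=f\tnabla_U$, and linear in $Y$. Hence
\[
T(W,Y,Z)=\sum_k f_k\,T(X^{\circ k},Y,Z)=\sum_k f_k\,T(Y,X^{\circ k},Z)=T(Y,W,Z),
\]
which is \eqref{eq: full symmetry}.

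There is no serious obstacle, since the substantive work was done in Proposition~\ref{prop:oneleg} and Lemma~\ref{lem:propagate}. The only points to check carefully are two. First, $T$ must be genuinely tensorial in its first slot and not merely additive. Second, the induction must be seeded correctly: we need both $\mathcal S(X)$, which is used inside Lemma~\ref{lem:propagate} as a standing hypothesis, and $\mathcal S(U_0)$ for the current power.

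Combining this proposition with Proposition~\ref{prop:flat-unit} gives existence in Theorem~\ref{thm: existence and uniqueness of cyclic connections}. Uniqueness is already contained in Proposition~\ref{prop:flat-unit}, since any distinguished connection with flat unit in particular satisfies $\tnabla e=0$ and $d_{\tnabla}A=0$.
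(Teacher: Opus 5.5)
Your proof is correct and is essentially the paper's argument. You seed with $\mathcal S(e)$ from Lemma~\ref{lem:Te} and $\mathcal S(X)$ from Proposition~\ref{prop:oneleg}, propagate to all powers $X^{\circ k}$ via Lemma~\ref{lem:propagate}, and then conclude by $\mathcal{O}_M$-linearity of $T$ in its first two slots on the cyclic frame $e,X,\dots,X^{\circ(n-1)}$.
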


\begin{proof}
We have that $\mathcal S(e)$ holds  by Lemma~\ref{lem:Te}, and $\mathcal S(X)$ holds by
Proposition~\ref{prop:oneleg}. By Lemma~\ref{lem:propagate},
\[
        \mathcal S(X^{\circ k}) \text{ holds }
        \qquad k=0,1,\ldots,n-1.
\]
Since $e,X,\ldots,X^{\circ(n-1)}$ is a frame, 
$\mathcal S(W)$ holds for every vector field $W$,  since $T$ is a tensor,  thus $\mathcal{O}_M$-linear in each of its entries. 
\end{proof}

\vspace{.2 cm}
This concludes the proof of Theorem \ref{thm: existence and uniqueness of cyclic connections}. 

\subsection{Characterization of the symmetries}

 The following is a vast generalization of  Lemma 3.5 in \cite{LPVG1},  providing also a much simpler proof of the latter:
\begin{theorem}[Equation for the symmetries]
\label{thm:coordinate-free-prop34-lem35}
Let $(M,\circ,e)$ be an F-manifold, let $X$ be cyclic, and set
\[
        A:=C_X=X\circ.
\]
Let $\tnabla$ be the unique torsionless connection which is distinguished with respect to  $X$ constructed in Theorem \ref{thm: existence and uniqueness of cyclic connections}. Then the two hydrodynamic systems
\[
        u_t=A(u)u_x,
        \qquad
        u_\tau=B(u)u_x
\]
have commuting flows if and only if $B=C_Y=Y\circ$  for some vector field  $Y$ and
\[
        d_{\tnabla}(Y\circ)=0.
\]
\end{theorem}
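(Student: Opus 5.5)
We will derive the commutativity condition directly by differentiating the two flows. Writing $u_t=A(u)u_x$ and $u_\tau=B(u)u_x$ and computing $u_{t\tau}-u_{\tau t}$, we obtain an expression of the form
\[
\bigl[(\partial_l A)(Bu_x) - (\partial_l B)(Au_x)\bigr]u^l_x+\bigl(AB-BA\bigr)u_{xx}.
\]
Here we use the shorthand $(\partial_l A)(Bu_x)$ for $(\partial_l A^i_j) B^j_k u^k_x$, and similarly for the other term. Since $u_{xx}$ and $u_x\otimes u_x$ are independent at a point, commutativity splits into two conditions. The first is $[A,B]=0$. The second is the vanishing of the symmetrized quadratic form in $u_x$.

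We will first treat the linear-algebra condition $[A,B]=0$. Since $e$ is cyclic for $A$, the centralizer of $A$ consists of the polynomials in $A$. Because $A^k=C_{X^{\circ k}}$, each such polynomial is of the form $C_Y$. The vector field is then forced: $Y=Be$, since $C_Ye=Y$. This step is pointwise but the choice $Y=Be$ is clearly smooth, or analytic, so $B=C_Y$ with $Y$ a genuine vector field. Conversely, every $C_Y$ commutes with $A$ by commutativity and associativity of $\circ$.

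Next, with $B=C_Y$, we will rewrite the quadratic condition covariantly. Replacing $\partial$ by $\tnabla$ changes the expression by Christoffel terms. These terms cancel because $[A,B]=0$ and $\Gamma$ is symmetric, which is the same mechanism used in deriving \eqref{eq: prop1.1}. The condition thus becomes
\[
(\tnabla_{BZ}A)(Z)-(\tnabla_{AZ}B)(Z)+\text{(terms)}=0\quad\forall Z,
\]
or, after polarization, a symmetric condition in two arguments. Then we will use $d_{\tnabla}A=0$, Lemma \ref{lem:nablaX}, and the full symmetry \eqref{eq: full symmetry}. Together these give $(\tnabla_U C_Y)(V)=T(U,Y,V)+(\tnabla_UY)\circ V$, where $T(U,Y,V)$ is totally symmetric. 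Writing $\tnabla_UX=U\circ W_0$, the quadratic condition reduces to an identity of the type
\[
\mathcal M_A\text{-like expression in } \Phi=0, \qquad \Phi(U,V):=(d_{\tnabla}C_Y)(U,V)=\tnabla_UY\circ V-\tnabla_VY\circ U .
\]
Concretely, we expect it to read $\Phi(AZ',Z)+\Phi(AZ,Z')=0$, or $A\,\Phi(Z,Z')$ combined with $\Phi$ evaluated on $A$-multiples; the $T$-terms cancel by total symmetry.

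The main obstacle is this last step: showing that the reduced condition forces $\Phi=d_{\tnabla}(Y\circ)=0$, rather than merely some contraction of $\Phi$ with $A$ vanishing. We will try the cyclicity trick used in Lemma \ref{lem:cyclic-iso} and Proposition \ref{prop:oneleg}. Since $\Phi$ is antisymmetric and takes values in $\T$, we first show it is $\circ$-linear in a suitable sense, so that $\Phi(U,V)=U\circ\phi(V)-V\circ\phi(U)$ with $\phi=\tnabla Y$. The condition then becomes an $A$-twisted symmetry of the bilinear map $(U,V)\mapsto U\circ\phi(V)$. Evaluating on the frame $A^ie$ and using $\tnabla e=0$ (so the $e$-slot is controlled), the relation $\Phi(A^ie,A^je)$ can be reduced to $\Phi(e,\cdot)$-type terms. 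The relation $\Phi(e,V)=\tnabla_eY\circ V-\tnabla_VY$ then yields $\tnabla_VY=V\circ\tnabla_eY$, exactly as in Lemma \ref{lem:nablaX}, which gives $\Phi=0$. The converse direction, that $d_{\tnabla}(Y\circ)=0$ implies commutativity, follows by running the same identities backwards.
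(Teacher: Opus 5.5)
The reduction to a condition on $\Phi:=d_{\tnabla}(Y\circ)$ is fine and matches the paper in substance. You re-derive by hand the commutation criterion that the paper quotes from Lemma~3.3 and Proposition~3.4 of \cite{LPVG1}. You obtain $B=C_Y$ with $Y=Be$ from cyclicity. After covariantizing with $\tnabla$ and using $d_{\tnabla}A=0$, you arrive at
\[
\Phi(AU,V)+\Phi(AV,U)=0\quad\forall\,U,V.
\]
Your first guess for this reduced form is the correct one, and it follows directly from the covariant criterion. Lemma~\ref{lem:nablaX}, the $T$-terms and the expansion of $\Phi$ through $\tnabla Y$ are not needed for it. (As written, your expression for $u_{t\tau}-u_{\tau t}$ drops the terms $(\partial_lA^i_j)B^l_ku^k_xu^j_x-(\partial_lB^i_j)A^l_ku^k_xu^j_x$. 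The splitting into $[A,B]=0$ plus a quadratic condition is nevertheless right.)

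The genuine gap is the final step, which you yourself flag. Having reduced $\Phi(A^ie,A^je)$ to $\Phi(e,A^{i+j}e)$, you conclude from $\Phi(e,V)=\tnabla_eY\circ V-\tnabla_VY$ that $\tnabla_VY=V\circ\tnabla_eY$ ``exactly as in Lemma~\ref{lem:nablaX}''. That requires $\Phi(e,\cdot)=0$, which you never establish. In Lemma~\ref{lem:nablaX} the vanishing on $e$ came from the hypothesis $d_{\tnabla}A=0$. Here, by Lemma~\ref{lem:symmetry-equiv-PDE}, $\Phi(e,\cdot)=0$ is equivalent to the conclusion itself, so the step is circular; $\tnabla e=0$ gives no control over it either. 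The analogy with Lemma~\ref{lem:cyclic-iso} and Proposition~\ref{prop:oneleg} misleads for the same reason: there the forms are symmetric and their vanishing on $e$ is an input.

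The missing idea is that the skewness of $\Phi$ supplies this vanishing for free. For $\lambda\in T_p^*M$ put $h_\lambda(U,V):=\lambda(\Phi(U,V))$. The reduced condition, together with $h_\lambda(AV,U)=-h_\lambda(U,AV)$, says $h_\lambda(AU,V)=h_\lambda(U,AV)$. Therefore
\[
h_\lambda(e,A^ke)=h_\lambda(A^ke,e)=-h_\lambda(e,A^ke),
\]
so $h_\lambda(A^ie,A^je)=h_\lambda(e,A^{i+j}e)=0$ for all $i,j$. Cyclicity then gives $h_\lambda=0$ for every $\lambda$, hence $\Phi=0$. This pointwise fact is precisely the paper's argument: a skew form for which a cyclic operator is self-adjoint must vanish. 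It needs neither $\tnabla e=0$ nor the explicit form of $\Phi$.
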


\begin{proof}
First we recall Lemma 3.3 and Proposition 3.4 from \cite{LPVG1}.
Let $A$ and $B$ be arbitrary $(1,1)$-tensor fields on a manifold $M$, and let $\nabla$
be a torsionless connection. Consider the two hydrodynamic systems
\[
        u_t=A(u)u_x,
        \qquad
        u_\tau=B(u)u_x.
\]
Then the two hydrodynamic systems have commuting flows if and only if
\begin{equation}
        AB=BA, \text{ (commutation as endomorphisms)}
\label{eq:AB-BA}
\end{equation}
and 
\begin{equation}
        (d_\nabla A)(U,BV)+(d_\nabla A)(V,BU)
        =
        (d_\nabla B)(U,AV)+(d_\nabla B)(V,AU) \quad \forall U,V
\label{eq:coordinate-free-prop34}
\end{equation}
hold.  This is just Lemma 3.3 and Proposition 3.4 of \cite{LPVG1} that were derived in full generality there. 

We now specialize to the theorem. We take
\[
        \nabla=\tnabla,
        \qquad
        A=C_X=X\circ.
\]
Since $X$ is cyclic, at each point of the cyclic locus $C_X$ is a cyclic endomorphism, and its commutant in $\End(TM)$ coincides with the polynomial algebra in $C_X$ over $\O$. So there exist smooth functions $a_k$, uniquely determined on the cyclic locus, with
\[
B=\sum_{k=0}^{n-1} a_k\,C_X^k.
\]
Using $C_X^k=C_{X^{\circ k}}$ (by associativity) and the linearity of $Z\mapsto C_Z$, set $Y:=\sum_{k=0}^{n-1} a_kX^{\circ k}=Be\in\Gamma(\T)$; then $B=C_{Y}=Y\circ$.

We need to impose the remaining condition, that is
\eqref{eq:coordinate-free-prop34}. Since $\tnabla$ was constructed so that
\[
        d_{\tnabla}A=0,
\]
the left-hand side of \eqref{eq:coordinate-free-prop34} vanishes. Hence
\begin{equation}
        (d_{\tnabla}B)(U,AV)+(d_{\tnabla}B)(V,AU)=0
        \qquad
        \forall\,U,V.
\label{eq:cyclic-reduced-equation}
\end{equation}

We now prove that \eqref{eq:cyclic-reduced-equation} implies
$d_{\tnabla}B=0$. This is the generalization of Lemma $3.5$ in \cite{LPVG1}.

Fix a point $p\in M$ and set
\[
        F_p:=T_pM.
\]
Since $X$ is cyclic,  $ e_p,A_pe_p,A_p^2e_p,\ldots,A_p^{n-1}e_p$
is a basis of $F_p$.  Define $H$ as 
\[
        H:=(d_{\tnabla}B)_p\in\Lambda^2F_p^*\otimes F_p.
\]
Equation \eqref{eq:cyclic-reduced-equation} says
\begin{equation}
        H(u,A_pv)+H(v,A_pu)=0
        \qquad
        \forall\,u,v\in F_p.
\label{eq:H-cyclic-reduced}
\end{equation}
Let $\lambda\in F_p^*$ be arbitrary and define the scalar skew-form $ h_\lambda(u,v):=\lambda(H(u,v)).$
Pairing \eqref{eq:H-cyclic-reduced} with $\lambda$ gives
\begin{equation}
        h_\lambda(u,A_pv)+h_\lambda(v,A_pu)=0.
\label{eq:h-cyclic-reduced}
\end{equation}
Since $h_\lambda$ is skew-symmetric,  $h_\lambda(v,A_pu)=-h_\lambda(A_pu,v).$ Therefore \eqref{eq:h-cyclic-reduced} is equivalent to
\begin{equation}
        h_\lambda(A_pu,v)=h_\lambda(u,A_pv)
        \qquad
        \forall\,u,v\in F_p.
\label{eq:h-self-adjoint}
\end{equation}
Thus $A_p$ is self-adjoint with respect to the skew-form $h_\lambda$. Using \eqref{eq:h-self-adjoint} repeatedly, for all $i,j\geq0$ we obtain
\[
        h_\lambda(A_p^ie_p,A_p^je_p)
        =
        h_\lambda(e_p,A_p^{i+j}e_p).
\]
On the other hand, applying \eqref{eq:h-self-adjoint} once more gives
\[
        h_\lambda(A_p^{i+j}e_p,e_p)
        =
        h_\lambda(e_p,A_p^{i+j}e_p).
\]
Since $h_\lambda$ is skew-symmetric,
\[
        h_\lambda(A_p^{i+j}e_p,e_p)
        =
        -h_\lambda(e_p,A_p^{i+j}e_p).
\]
Hence we get $h_\lambda(e_p,A_p^{i+j}e_p)=0.$
Therefore
\[
        h_\lambda(A_p^ie_p,A_p^je_p)=0
        \qquad
        \forall\,i,j\]
        and since  $e_p,A_pe_p,\ldots,A_p^{n-1}e_p$
is a basis of $F_p$, we get
\[
        h_\lambda=0.
\]
Since $\lambda\in F_p^*$ was arbitrary, this implies
     $   H=0.$ Thus $
        (d_{\tnabla}B)_p=0. $
As $p$ is arbitrary, $ d_{\tnabla}B=0,$
 and since $B=Y\circ$, this is exactly
\[
        d_{\tnabla}(Y\circ)=0.
\]
The converse is clear. 
\end{proof}

\section{Cauchy-Kowalevski-Cartan-K\"ahler study of the symmetries}\label{SectionCKsymm}

\subsection{Non-characteristic Cauchy theorem for analytic and holomorphic symmetries}
In this subsection $X$ is a cyclic vector field and $\tnabla$ is the unique distinguished torsionless connection associated with $X$ such that $\tnabla e=0$.  A vector field $Y$ will be called a symmetry when it satisfies $d_{\tnabla}(Y\circ)=0$.

The analytic arguments below use the classical Cauchy--Kowalevski theorem for first-order analytic systems solved for a non-characteristic derivative; see the original source \cite{Kowalevsky1875} and the standard references \cite{CK-CourantHilbert,CK-John,CK-Hormander}.  The compatibility calculation in the proof is the corresponding local analytic involutivity check in the spirit of the Cartan--K\"ahler theorem; for the exterior differential systems perspective see \cite{BCGGG,IveyLandsberg}.

\begin{lemma}\label{lem:symmetry-equiv-PDE}
The symmetry equation
\begin{equation}\label{eq:symmetry-basic}
        d_{\tnabla}C_Y=0
\end{equation}
is equivalent to
\begin{equation}\label{eq:nablaY-reduced-main}
        \tnabla_UY=U\circ\tnabla_eY
        \qquad\forall U\in\Gamma(TM).
\end{equation}
\end{lemma}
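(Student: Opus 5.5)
The plan is to expand $d_{\tnabla}C_Y$ exactly as was done for $A=C_X$ in the proof of Lemma~\ref{lem:nablaX}, and then use the full symmetry of $T=\tnabla\circ$ from Proposition~\ref{prop:fullsym}. Since $\tnabla$ is torsionless,
\[
(d_{\tnabla}C_Y)(U,Z)=(\tnabla_UC_Y)(Z)-(\tnabla_ZC_Y)(U).
\]
The Leibniz rule gives $(\tnabla_UC_Y)(Z)=T(U,Y,Z)+(\tnabla_UY)\circ Z$. Therefore
\[
(d_{\tnabla}C_Y)(U,Z)=T(U,Y,Z)-T(Z,Y,U)+(\tnabla_UY)\circ Z-(\tnabla_ZY)\circ U.
\]

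The key point is that the $T$-terms cancel identically, and this holds for every $Y$, not only for symmetries. By Proposition~\ref{prop:fullsym}, $T$ is symmetric in its first two slots, and by \eqref{eq:Tcomm-two last entries} it is symmetric in its last two. Hence $T$ is totally symmetric, and so $T(U,Y,Z)=T(Z,Y,U)$. This leaves
\[
(d_{\tnabla}C_Y)(U,Z)=(\tnabla_UY)\circ Z-(\tnabla_ZY)\circ U.
\]

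For the forward direction, assume $d_{\tnabla}C_Y=0$ and set $Z=e$. This gives $\tnabla_UY=U\circ\tnabla_eY$, which is \eqref{eq:nablaY-reduced-main}. For the converse, substitute \eqref{eq:nablaY-reduced-main} into the reduced expression. It becomes $(U\circ\tnabla_eY)\circ Z-(Z\circ\tnabla_eY)\circ U$, which vanishes by commutativity and associativity of $\circ$.

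The only step that needs care is the cancellation of the $T$-terms. It depends on the full symmetry established in Proposition~\ref{prop:fullsym}, where cyclicity of $X$ enters. The partial result of Lemma~\ref{lem:nablaX} only covers the special case $Y=X$, so it would not be enough here. Everything else is the same routine computation already carried out for $C_X$.
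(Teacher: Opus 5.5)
Your proof is correct and follows the paper's argument: you expand $d_{\tnabla}C_Y$ by the Leibniz rule, cancel the $T$-terms using the full symmetry of Proposition~\ref{prop:fullsym}, then set $Z=e$ for one direction and substitute for the other. The only difference is in the forward direction, which the paper takes from the first part of Lemma~\ref{lem:nablaX}; that proof works verbatim for any $Y$ and uses only $\tnabla e=0$ via Lemma~\ref{lem:Te}, so your closing remark that Lemma~\ref{lem:nablaX} would not suffice is accurate only for the converse.
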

\proof
The first implication is just the first statement of Lemma \ref{lem:nablaX} with $Y$ in place of $X$ and $U$ in place of $Y$.  Conversely, $d_{\tnabla}(Y\circ)=0$ is equivalent to
\[
        (\tnabla_UY\circ)(V)=(\tnabla_VY\circ)(U).
\]
Expanding by the Leibniz rule and using the full symmetry of $\tnabla$ in \eqref{eq: full symmetry}, this reduces to
\[
        (\tnabla_UY)\circ V=(\tnabla_VY)\circ U.
\]
The identity \eqref{eq:nablaY-reduced-main} implies this equality immediately.  \endproof

A tangent vector $T\in T_pM$ will be called \emph{product-invertible} if multiplication by $T$ is an invertible endomorphism of $T_pM$, equivalently if there exists $T^{-1}\in T_pM$ such that
\[
        T^{-1}\circ T=e_p .
\]
This condition is open.  Moreover, on the open set where it holds, the inverse $T^{-1}$ depends analytically on $T$ and on the product, because it is obtained by inverting the matrix of the multiplication operator $C_T$.  An analytic embedded curve whose tangent is product-invertible will be called product-non-characteristic.

\begin{theorem}\label{thm:main-restated}\label{thm:noncharacteristic-CK-symmetries}
Let $(M,\circ,e)$ be a real analytic\protect\footnote{\label{fn:holomorphic-case}The same statement holds verbatim in the holomorphic category, with the obvious substitutions: open intervals $(-\varepsilon,\varepsilon)\subset\mathbb R$ are replaced by open discs $\Delta_\varepsilon\subset\mathbb C$, and ``real analytic'' is replaced by ``holomorphic'' throughout. Product-invertibility is understood over $\mathbb C$. The proof below carries over without modification: one uses holomorphic coordinates adapted to the embedded curve and the holomorphic Cauchy--Kowalevski theorem with uniqueness; for the latter see \cite[Thm.~9.4.5]{CK-Hormander}.}  F-manifold of dimension $n$, let $X$ be analytic and cyclic on the open set $U_{\mathrm{cyc}}\subset M$, and let $\tnabla$ be the unique flat-unit torsionless connection of Theorem~\ref{thm: existence and uniqueness of cyclic connections}, characterised on $U_{\mathrm{cyc}}$ by $\tnabla e=0$ and $d_{\tnabla}(X\circ)=0$.  Assume the curvature identity
\begin{equation}\label{eq:3RC-Zinside}
        R^{\tnabla}(U,V)(W\circ Z)+R^{\tnabla}(V,W)(U\circ Z)+R^{\tnabla}(W,U)(V\circ Z)=0,
\end{equation}
or equivalently\footnote{The two conditions \eqref{eq:3RC-Zinside} and \eqref{eq:3RC-outside} are equivalent: as $\tnabla$ is torsion-free and satisfies the symmetry $(\tnabla_X\circ)(Y,Z)=(\tnabla_Y\circ)(X,Z)$ of Proposition~\ref{prop:fullsym}, applying the first Bianchi identity to one of them, with the cyclicity sum carried out in the appropriate slots, produces the other (see also \cite{LPR}).}
\begin{equation}\label{eq:3RC-outside}
        W\circ R^{\tnabla}(U,V)Z+U\circ R^{\tnabla}(V,W)Z+V\circ R^{\tnabla}(W,U)Z=0.
\end{equation}
Let
\[
        C:(-\varepsilon,\varepsilon)\longrightarrow U_{\mathrm{cyc}}
\]
be an analytic embedded product-non-characteristic curve, i.e.
\[
        T(s):=\dot C(s)
\]
is product-invertible for all $s$, after possibly shrinking the interval.  Let $Y_C$ be an analytic vector field along $C$.  Then, after shrinking the interval and a neighbourhood of $C$, there exists a unique analytic vector field $Y$ such that
\begin{equation}\label{eq:noncharacteristic-main-conclusion}
        Y|_C=Y_C,
        \qquad
        d_{\tnabla}(Y\circ)=0.
\end{equation}
Furthermore, the local analytic solution space of the equation $d_{\tnabla}(Y\circ)=0$ near any point of $U_{\mathrm{cyc}}$ is parametrised by $n$ arbitrary analytic functions of one variable, namely the components of the Cauchy datum $Y_C$ in an analytic frame of $TM|_C$.
\end{theorem}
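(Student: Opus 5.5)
By Lemma~\ref{lem:symmetry-equiv-PDE}, the symmetry equation is equivalent to the first-order system
\[
\tnabla_U Y=U\circ W,\qquad W:=\tnabla_eY,
\]
for all $U$. The plan is to treat $Y$ as the only unknown and single out the derivative along a vector field extending $T=\dot C$. Choose analytic coordinates $(s,x^2,\dots,x^n)$ near $C$ in which $C=\{x=0\}$ and $\partial_s|_C=T$. Shrinking the neighbourhood, $\partial_s$ stays product-invertible. Taking $U=\partial_s$ gives $W=\partial_s^{-1}\circ\tnabla_{\partial_s}Y$. Substituting this back, the system becomes
\[
\tnabla_{\partial_a}Y=\partial_a\circ\partial_s^{-1}\circ\tnabla_{\partial_s}Y,\qquad a=2,\dots,n,
\]
together with the compatibility requirement $W=\tnabla_eY$. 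That requirement is automatic once the displayed equations hold for all $U$ in a frame, since $e=\partial_s^{-1}\circ\partial_s$ is a combination of the frame. The roles of the variables are awkward here: the data lives on the curve (the $s$-direction), and the equations express $x$-derivatives in terms of $s$-derivatives. This is the form Cauchy--Kowalevski needs, provided we evolve in the $x^a$ one at a time.

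Concretely, I will integrate successively: first in $x^2$ on the plane $\{x^3=\dots=x^n=0\}$ with data $Y_C$ at $x^2=0$, then in $x^3$ with data on the resulting surface, and so on. Each step is an analytic Kowalevski system $\partial_{x^a}Y=F_a(s,x,Y,\partial_sY,\partial_{x^b}Y)_{b<a}$, where the $\partial_{x^b}Y$ for $b<a$ are tangential derivatives along the previous submanifold. Each step therefore gives a unique analytic solution, and the procedure yields existence and uniqueness of an analytic $Y$ solving the equations $E_a$ for $a=2,\dots,n$ only on the successive slices. Uniqueness of the final $Y$ is then clear, because any solution of the full system solves each step's Cauchy problem.

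The main obstacle is the compatibility check: showing that $E_b$ (for $b<a$) continues to hold off the slice $\{x^a=0\}$ after evolving in $x^a$. I will define the defect $D(U):=\tnabla_UY-U\circ\partial_s^{-1}\circ\tnabla_{\partial_s}Y$, which is tensorial in $U$ with $D(\partial_s)=0$. I then compute $\tnabla_{\partial_a}D(\partial_b)-\tnabla_{\partial_b}D(\partial_a)$ using $[\partial_a,\partial_b]=0$, torsion-freeness, and the curvature $R^{\tnabla}(\partial_a,\partial_b)Y$. Repeatedly using the full symmetry of $\tnabla\circ$ from Proposition~\ref{prop:fullsym} and the Hertling--Manin associativity identities of Lemmas~\ref{lem:covHM} and \ref{lem:covassoc}, the non-linear terms should reorganise, modulo terms linear in $D$ and $\partial_s D$, into the integrability obstruction. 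I expect this obstruction to be exactly the curvature combination in \eqref{eq:3RC-outside}, with $Z=Y$ and $U,V,W$ drawn from $\partial_a,\partial_b,\partial_s$. If so, the hypothesis kills it, and $D(\partial_b)$ satisfies a homogeneous linear Kowalevski system in $x^a$ with zero initial data, so it vanishes by the uniqueness part of Cauchy--Kowalevski. Getting the terms to match precisely this curvature identity, rather than some other contraction, is where I expect the real work to be. The flat-unit property $\tnabla e=0$, via Lemma~\ref{lem:Te}, should be what removes the terms involving $\tnabla\partial_s^{-1}$.

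Finally, the parametrisation statement follows from the construction. Analytic solutions near a point correspond bijectively, through restriction to a short product-non-characteristic curve through that point, to analytic data $Y_C$. Such a curve exists because $e$ itself is product-invertible, so any integral curve of $e$ works. The data $Y_C$ amounts to $n$ analytic functions of one variable, namely its components in an analytic frame of $TM|_C$.
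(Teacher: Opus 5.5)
Your architecture is the paper's own: coordinates adapted to $C$, the reduced system $\mathcal D_aY:=\tnabla_{\partial_a}Y-P_a\circ\tnabla_{\partial_s}Y=0$ with $P_a:=\partial_a\circ\partial_s^{-1}$, one Cauchy--Kowalevski step per transverse variable, and propagation of the earlier equations via a homogeneous linear system for the defect with zero data.

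\textbf{The gap: the compatibility step is not proved.} This step is the only place where the curvature hypothesis enters, and you leave it as an expectation. That expectation is also incomplete. Your defect satisfies
\[
\tnabla_{\partial_a}\bigl(D(\partial_b)\bigr)-\tnabla_{\partial_b}\bigl(D(\partial_a)\bigr)=[\mathcal D_a,\mathcal D_b]Y+P_a\circ\tnabla_{\partial_s}D(\partial_b)-P_b\circ\tnabla_{\partial_s}D(\partial_a),
\]
and a direct expansion gives
\begin{align*}
[\mathcal D_a,\mathcal D_b]Y={}&\bigl(R^{\tnabla}(\partial_a,\partial_b)+C_{P_a}R^{\tnabla}(\partial_b,\partial_s)-C_{P_b}R^{\tnabla}(\partial_a,\partial_s)\bigr)Y\\
&+\bigl(\tnabla_{\partial_b}C_{P_a}-\tnabla_{\partial_a}C_{P_b}+C_{P_a}\tnabla_{\partial_s}C_{P_b}-C_{P_b}\tnabla_{\partial_s}C_{P_a}\bigr)\tnabla_{\partial_s}Y.
\end{align*}
Only the first bracket is curvature. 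Multiplied by $C_{\partial_s}$, it becomes the left side of \eqref{eq:3RC-outside} with $(U,V,W,Z)=(\partial_a,\partial_b,\partial_s,Y)$, so your guess about which contraction appears is right.

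The second bracket, however, multiplies $\tnabla_{\partial_s}Y$. That quantity is free and cannot be rewritten through $D$, since $D(\partial_s)\equiv 0$. So this coefficient must vanish identically. Otherwise the equation for $D(\partial_b)$ is not homogeneous in $D$, and the uniqueness argument does not close. Nothing in your proposal shows that it vanishes.

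\textbf{How to close it.} The coefficient does vanish, but not because of Lemma~\ref{lem:Te}. By itself, $\tnabla e=0$ only converts derivatives of $\partial_s^{-1}$ into derivatives of $\partial_s$. The cancellation needs the commuting coordinate frame. Put $\mathcal T(A,B,W):=(\tnabla_A\circ)(B,W)$, which is totally symmetric by Proposition~\ref{prop:fullsym}, so that $(\tnabla_AC_B)W=(\tnabla_AB)\circ W+\mathcal T(A,B,W)$. Two identities are needed.
\begin{enumerate}
\item[(i)] $\mathcal T(P\circ\partial_s,Q,W)-P\circ\mathcal T(\partial_s,Q,W)=\mathcal T(Q\circ\partial_s,P,W)-Q\circ\mathcal T(\partial_s,P,W)$. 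This is \eqref{eq:covassoc} with $(U,V,W,Z)=(W,P,\partial_s,Q)$, combined with total symmetry.
\item[(ii)] $\tnabla_{\partial_a}P_b-P_a\circ\tnabla_{\partial_s}P_b=\tnabla_{\partial_b}P_a-P_b\circ\tnabla_{\partial_s}P_a$. To get it, expand the torsion-free relations $\tnabla_{\partial_a}\partial_b=\tnabla_{\partial_b}\partial_a$, $\tnabla_{\partial_a}\partial_s=\tnabla_{\partial_s}\partial_a$ and $\tnabla_{\partial_b}\partial_s=\tnabla_{\partial_s}\partial_b$ using $\partial_a=P_a\circ\partial_s$. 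Then apply (i) with $W=\partial_s$ and cancel the invertible factor $C_{\partial_s}$.
\end{enumerate}
Evaluated on $W$, the second bracket splits into a $(\cdot)\circ W$ part, killed by (ii), and a $\mathcal T$ part, killed by (i). This yields $[\mathcal D_a,\mathcal D_b]=0$ as operators. After that, your iteration and uniqueness argument go through exactly as in the paper.
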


\begin{proof}
\noindent\emph{Part 1: choice of non-characteristic coordinates.}
The purpose of this first step is only to put the Cauchy curve in normal position and to identify the derivative with respect to which the Cauchy--Kowalevski theorem will be applied.  Choose analytic coordinates $(s,z^2,\ldots,z^n)$ in a neighbourhood of the curve such that
\[
        C(s)=(s,0,\ldots,0),
        \qquad
        \partial_s|_C=\dot C(s).
\]
After shrinking the neighbourhood we may assume that
\[
        T:=\partial_s
\]
is product-invertible everywhere.  Write
\[
        Z_\alpha:=\partial_{z^\alpha},\qquad
        P_\alpha:=Z_\alpha\circ T^{-1},
        \qquad \alpha=2,\ldots,n.
\]
Thus
\[
        Z_\alpha=P_\alpha\circ T.
\]
The vector field $T$ is the distinguished non-characteristic direction.  The vector fields $Z_\alpha$ are the transverse coordinate directions, and the product-invertibility of $T$ is exactly what allows us to solve the symmetry equation for the transverse derivatives in terms of the $T$-derivative.

\smallskip
\noindent\emph{Part 2: reduction to the non-characteristic system.}
By Lemma~\ref{lem:symmetry-equiv-PDE}, the symmetry equation is equivalent to
\begin{equation}\label{eq:nonchar-full-symmetry-equation}
        \tnabla_UY=U\circ\tnabla_eY
        \qquad\text{for every vector field }U.
\end{equation}
We first solve the following non-characteristic system:
\begin{equation}\label{eq:nonchar-Dalpha-system}
        \mathcal D_\alpha Y:=\tnabla_{Z_\alpha}Y
        -P_\alpha\circ\tnabla_TY=0,
        \qquad \alpha=2,\ldots,n.
\end{equation}
This is exactly the system obtained from \eqref{eq:nonchar-full-symmetry-equation} by solving the equation in the $T$-direction for $\tnabla_eY.$ Now we prove that \eqref{eq:nonchar-full-symmetry-equation} is equivalent to \eqref{eq:nonchar-Dalpha-system},  in the sense that every solution of the former is a solution of the latter and vice versa.   Indeed, if \eqref{eq:nonchar-full-symmetry-equation} holds, then in particular
\[
        \tnabla_TY=T\circ\tnabla_eY,
        \qquad
        \tnabla_eY=T^{-1}\circ\tnabla_TY,
\]
and substituting this back into \eqref{eq:nonchar-full-symmetry-equation} gives
\[
        \tnabla_UY=U\circ T^{-1}\circ\tnabla_TY.
\]
Taking $U=Z_\alpha, \, \alpha=2,\dots n$ gives \eqref{eq:nonchar-Dalpha-system}.  Conversely, if \eqref{eq:nonchar-Dalpha-system} holds and
\[
        V:=T^{-1}\circ\tnabla_TY,
\]
then $\tnabla_TY=T\circ V$ and $\tnabla_{Z_\alpha}Y=Z_\alpha\circ V$ for every $\alpha$.  Since $\tnabla_{\bullet}Y$ is $\mathcal{O}_M$-linear and since $T,Z_2,\ldots,Z_n$ form a local frame,   $\tnabla_UY=U\circ V$ for every $U$.  Taking $U=e$ gives $V=\tnabla_eY$, hence \eqref{eq:nonchar-full-symmetry-equation} follows.

\smallskip
\noindent\emph{Part 3: Cauchy--Kowalevski normal form.}
This step records explicitly that each equation in \eqref{eq:nonchar-Dalpha-system} is a first-order analytic system solved for one transverse derivative.  If $M_\alpha=C_{P_\alpha}$, i.e.
\[
        M_\alpha{}^k{}_r
\]
are the components of multiplication by $P_\alpha=Z_\alpha\circ T^{-1}$, then \eqref{eq:nonchar-Dalpha-system} reads
\begin{equation}\label{eq:nonchar-CK-normal-form}
        \partial_{z^\alpha}Y^k
        =M_\alpha{}^k{}_r\bigl(\partial_sY^r+\Gamma^r_{1q}Y^q\bigr)
         -\Gamma^k_{\alpha r}Y^r,
        \qquad \alpha=2,\ldots,n.
\end{equation}
Here the index $1$ denotes the coordinate direction $s$.  The coefficients are analytic because $T^{-1}$ depends analytically on $T$ and on the product.  Thus, when all variables except $z^\alpha$ are regarded as tangential variables, \eqref{eq:nonchar-CK-normal-form} is precisely the standard Cauchy--Kowalevski normal form for a first-order analytic system, as in \cite{CK-CourantHilbert,CK-John,CK-Hormander}.

\smallskip
\noindent\emph{Part 4: formal compatibility of the non-characteristic operators.}
The purpose of this step is to prove the involutivity condition needed in the iterative Cauchy--Kowalevski construction.  Let
\[
        \mathcal T(A,B,W):=(\tnabla_A\circ)(B,W)
\]
(this is the tensor $T$ defined in Section~3, but here we use $\mathcal T$ since $T$ is reserved for the derivative along the Cauchy curve).  By commutativity of $\circ$ and Proposition~\ref{prop:fullsym}, $\mathcal T$ is totally symmetric.  Moreover, Lemma~\ref{lem:covassoc} implies the following identity: for all vector fields $P,Q,T,W$,
\begin{equation}\label{eq:nonchar-product-T-identity}
\begin{split}
        &\mathcal T(P\circ T,Q,W)-P\circ\mathcal T(T,Q,W)\\
        &\hspace{3cm}=\mathcal T(Q\circ T,P,W)-Q\circ\mathcal T(T,P,W).
\end{split}
\end{equation}
Indeed, apply \eqref{eq:covassoc} with $(U,V,W,Z)=(W,P,T,Q)$ and then use the total symmetry of $\mathcal T$.

We claim that the operators $\mathcal D_\alpha$ defined in \eqref{eq:nonchar-Dalpha-system} commute:
\begin{equation}\label{eq:nonchar-D-commute}
        [\mathcal D_\alpha,\mathcal D_\beta]Y=0
        \qquad\text{for every vector field }Y,
        \quad 2\le \alpha,\beta\le n.
\end{equation}
To prove this, set temporarily
\begin{equation}\label{eq:position}
        U:=Z_\alpha=P\circ T,
        \qquad
        V:=Z_\beta=Q\circ T,
        \qquad
        P:=P_\alpha,
        \qquad
        Q:=P_\beta,
\end{equation}
where $T$ is the tangent vector field to the curve $C(s)$ appearing already in Part 1,  and the rest is consistent with the notation introduced in Part 1.
The coordinate vector fields commute, so $[U,V]=[U,T]=[V,T]=0$.  A direct expansion of
\[
        \mathcal D_U:=\tnabla_U-C_P\tnabla_T,
        \qquad
        \mathcal D_V:=\tnabla_V-C_Q\tnabla_T
\]
gives
\begin{equation}\label{eq:nonchar-commutator-expansion}
\begin{split}
[\mathcal D_U,\mathcal D_V]Y
={}&\Bigl(R^{\tnabla}(U,V)+C_P R^{\tnabla}(V,T)-C_Q R^{\tnabla}(U,T)\Bigr)Y\\
&+\Bigl(-\tnabla_UC_Q+\tnabla_VC_P+C_P\tnabla_TC_Q-C_Q\tnabla_TC_P\Bigr)(\tnabla_TY).
\end{split}
\end{equation}
Here $\tnabla_AC_B$ denotes the covariant derivative of the endomorphism $C_B$.  We show that the two parentheses in \eqref{eq:nonchar-commutator-expansion} vanish.

First consider the curvature parenthesis.  Multiply it by $C_T$.  Since $C_T$ is invertible and $C_TC_P=C_U$, $C_TC_Q=C_V$, we have, for every $Y$,
\[
\begin{split}
&T\circ\Bigl(R^{\tnabla}(U,V)Y+P\circ R^{\tnabla}(V,T)Y-Q\circ R^{\tnabla}(U,T)Y\Bigr)\\
&\qquad =T\circ R^{\tnabla}(U,V)Y
        +U\circ R^{\tnabla}(V,T)Y
        +V\circ R^{\tnabla}(T,U)Y,
\end{split}
\]
which is zero by the curvature identity \eqref{eq:3RC-outside}, applied with $W=T$ and $Z=Y$.  Hence the curvature parenthesis in \eqref{eq:nonchar-commutator-expansion} vanishes.

We now prove that the coefficient of $\tnabla_TY$ in \eqref{eq:nonchar-commutator-expansion} also vanishes.  First observe that
\begin{equation}\label{eq:nonchar-nabla-C-formula}
        (\tnabla_AC_B)(W)=(\tnabla_AB)\circ W+\mathcal T(A,B,W).
\end{equation}
We also need the following consequence of the commutativity of the coordinate frame.  Since $[U,V]=0$ and the connection is torsionless,
\begin{equation}\label{eq:position2}
        0=\tnabla_UV-\tnabla_VU\stackrel{\eqref{eq:position}}{=}\tnabla_U(Q\circ T)-\tnabla_V(P\circ T).
\end{equation}
 By the Leibniz rule for the covariant derivative of the product,
\[
\begin{split}
\tnabla_U(Q\circ T)
   &= (\tnabla_UQ)\circ T+Q\circ \tnabla_UT+\mathcal T(U,Q,T),\\
\tnabla_V(P\circ T)
   &= (\tnabla_VP)\circ T+P\circ \tnabla_VT+\mathcal T(V,P,T).
\end{split}
\]
Therefore \eqref{eq:position2} gives
\begin{equation}\label{eq:position3}
\begin{split}
0={}&(\tnabla_UQ-\tnabla_VP)\circ T
       +Q\circ \tnabla_UT-P\circ \tnabla_VT  \\
&\quad +\mathcal T(U,Q,T)-\mathcal T(V,P,T).
\end{split}
\end{equation}
Now use the remaining commutation relations.  Since $[U,T]=0$ and $[V,T]=0$, torsion-freeness gives
\[
        \tnabla_UT=\tnabla_TU,
        \qquad
        \tnabla_VT=\tnabla_TV .
\]
Using again \eqref{eq:position}, namely $U=P\circ T$ and $V=Q\circ T$, and expanding the two products, we obtain
\[
\begin{split}
\tnabla_UT
   &=\tnabla_T(P\circ T)
     =(\tnabla_TP)\circ T+P\circ\tnabla_TT+\mathcal T(T,P,T),\\
\tnabla_VT
   &=\tnabla_T(Q\circ T)
     =(\tnabla_TQ)\circ T+Q\circ\tnabla_TT+\mathcal T(T,Q,T).
\end{split}
\]
Substitution in \eqref{eq:position3} gives
\begin{align*}
0={}&(\tnabla_UQ-\tnabla_VP)\circ T
     +Q\circ\bigl((\tnabla_TP)\circ T+P\circ\tnabla_TT+
        \mathcal T(T,P,T)\bigr)\\
&\quad -P\circ\bigl((\tnabla_TQ)\circ T+Q\circ\tnabla_TT+
        \mathcal T(T,Q,T)\bigr)\\
&\quad +\mathcal T(U,Q,T)-\mathcal T(V,P,T).
\end{align*}
The two terms containing $\tnabla_TT$ cancel, because the product is commutative and associative.

Using commutativity and associativity once more to factor out the final $T$, the preceding identity becomes
\begin{equation}\label{eq:position4}
\begin{split}
0={}&\bigl(\tnabla_UQ-\tnabla_VP+Q\circ\tnabla_TP-P\circ\tnabla_TQ\bigr)\circ T\\
&+\mathcal T(U,Q,T)-\mathcal T(V,P,T)
  +Q\circ\mathcal T(T,P,T)-P\circ\mathcal T(T,Q,T).
 \end{split}
\end{equation}
The second line is zero by \eqref{eq:nonchar-product-T-identity} with $W=T$: indeed, since $U=P\circ T$ and $V=Q\circ T$, that identity reads
\[
        \mathcal T(U,Q,T)-P\circ\mathcal T(T,Q,T)
        =\mathcal T(V,P,T)-Q\circ\mathcal T(T,P,T).
\]
Since $C_T$ is invertible, we obtain from \eqref{eq:position4}:
\begin{equation}\label{eq:nonchar-PQ-derivative-identity}
        \tnabla_UQ-P\circ\tnabla_TQ
        =\tnabla_VP-Q\circ\tnabla_TP.
\end{equation}
Now evaluate the second parenthesis on the right hand side of \eqref{eq:nonchar-commutator-expansion} on an arbitrary vector field $W$.  By \eqref{eq:nonchar-nabla-C-formula}, it equals
\begin{align*}
&\bigl(-\tnabla_UQ+\tnabla_VP+P\circ\tnabla_TQ-Q\circ\tnabla_TP\bigr)\circ W\\
&\quad -\mathcal T(U,Q,W)+\mathcal T(V,P,W)
       +P\circ\mathcal T(T,Q,W)-Q\circ\mathcal T(T,P,W).
\end{align*}
The first line vanishes by \eqref{eq:nonchar-PQ-derivative-identity}, and the second line vanishes by \eqref{eq:nonchar-product-T-identity}.  Thus the second parenthesis in \eqref{eq:nonchar-commutator-expansion} is zero.  This proves \eqref{eq:nonchar-D-commute}.

\smallskip
\noindent\emph{Part 5: iterative Cauchy--Kowalevski construction and propagation of the old equations.}
We now construct the solution one transverse variable at a time.  The preceding commutator identity is precisely what guarantees that the equations already imposed remain true after the next Cauchy--Kowalevski step.  Let
\[
        M_k:=\{z^{k+1}=\cdots=z^n=0\},
        \qquad k=1,\ldots,n,
\]
so that $M_1=C((-\varepsilon,\varepsilon))$ after shrinking.  Start with $Y_1=Y_C$ on $M_1$.  Suppose that, for some $1\le k<n$, an analytic vector field $Y_k$ has been constructed on a neighbourhood of the curve in $M_k$, agrees with $Y_C$ on $M_1$, and satisfies
\[
        \mathcal D_\alpha Y_k=0,
        \qquad 2\le \alpha\le k.
\]
On $M_{k+1}$ solve the single analytic Cauchy--Kowalevski equation
\[
        \mathcal D_{k+1}Y=0
\]
with Cauchy datum $Y|_{M_k}=Y_k$.  In the coordinate form \eqref{eq:nonchar-CK-normal-form}, this is a first-order \emph{linear homogeneous} analytic system solved for $\partial_{z^{k+1}}Y$; hence the Cauchy--Kowalevski theorem gives a unique analytic solution $\widetilde Y_{k+1}$ near the curve, after shrinking.

We must show that the old equations remain true for $\widetilde Y_{k+1}$.  For $2\le\alpha\le k$, set
\[
        E_\alpha:=\mathcal D_\alpha \widetilde Y_{k+1}.
\]
The Cauchy datum gives $E_\alpha|_{M_k}=0$.  Since $\mathcal D_{k+1}\widetilde Y_{k+1}=0$ and the operators commute by \eqref{eq:nonchar-D-commute},
\[
        \mathcal D_{k+1}E_\alpha
        =\mathcal D_{k+1}\mathcal D_\alpha\widetilde Y_{k+1}
        =\mathcal D_\alpha\mathcal D_{k+1}\widetilde Y_{k+1}
        =0.
\]
In components this is the homogeneous Cauchy--Kowalevski system
\[
        \partial_{z^{k+1}}E_\alpha^a
        =M_{k+1}{}^a{}_r\bigl(\partial_sE_\alpha^r+\Gamma^r_{1q}E_\alpha^q\bigr)
         -\Gamma^a_{k+1,r}E_\alpha^r.
\]
With zero Cauchy datum on $M_k$, the uniqueness part of the Cauchy--Kowalevski theorem gives $E_\alpha=0$.  Hence $\widetilde Y_{k+1}$ satisfies all equations $\mathcal D_\alpha Y=0$ for $2\le\alpha\le k+1$; set $Y_{k+1}:=\widetilde Y_{k+1}$.

\smallskip
\noindent\emph{Part 6: conclusion, uniqueness, and parameter count.}
Iterating Part~5 yields an analytic vector field $Y$ on a neighbourhood of the curve satisfying \eqref{eq:nonchar-Dalpha-system} for every $\alpha=2,\ldots,n$ and $Y|_C=Y_C$.  By the equivalence proved in Part~2, $Y$ satisfies \eqref{eq:nonchar-full-symmetry-equation}; Lemma~\ref{lem:symmetry-equiv-PDE} then gives $d_{\tnabla}(Y\circ)=0$.

Since the system is linear homogeneous,  uniqueness follows from the same Cauchy--Kowalevski uniqueness argument applied to the difference of two solutions with the same Cauchy datum.  Finally, the datum $Y_C$ is an analytic section of $TM|_C$ and is therefore specified by $n$ arbitrary analytic functions of one variable, its components in any analytic frame along $C$.  The existence-and-uniqueness result above identifies these data with the corresponding local analytic solutions.
\end{proof}

\begin{corollary}\label{cor:product-noncharacteristic-symmetries}\label{cor:unit-curve-symmetries}
Under the geometric hypotheses of Theorem~\ref{thm:main-restated}, let
\[
        C:(-\varepsilon,\varepsilon)\longrightarrow U_{\mathrm{cyc}}
\]
be any analytic embedded product-non-characteristic curve, and let $Y_C$ be an analytic vector field along $C$.  Then, after possibly shrinking the interval and a neighbourhood of $C$, there exists a unique analytic vector field $Y$ such that
\[
        Y|_C=Y_C,
        \qquad
        d_{\tnabla}(Y\circ)=0.
\]
Consequently, the local analytic solution space of $d_{\tnabla}(Y\circ)=0$ may be parametrised by $n$ arbitrary analytic functions of one variable along any product-non-characteristic curve $C$.

Moreover, if $E_{(1)},\ldots,E_{(n)}$ is an analytic frame of $TM|_C$ and $Y_{(a)}$ denotes the corresponding analytic solution with $Y_{(a)}|_C=E_{(a)}$, then, after further shrinking the interval and the neighbourhood, $Y_{(1)},\ldots,Y_{(n)}$ are pointwise linearly independent.  The hydrodynamic flows
\[
        u_{t_a}=Y_{(a)}\circ u_x
\]
pairwise commute and commute with the original F-system $u_t=X\circ u_x$.  In particular, a curve $C$ that is an integral curve of the unit $e$ always satisfies the hypothesis of Theorem~\ref{thm:main-restated}.
\end{corollary}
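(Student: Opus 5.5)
The first assertion, existence and uniqueness of $Y$ with $Y|_C=Y_C$ and $d_{\tnabla}(Y\circ)=0$, is Theorem~\ref{thm:main-restated} verbatim, so I will simply invoke it. The parametrisation statement follows at once: the map $Y_C\mapsto Y$ is a linear bijection between analytic sections of $TM|_C$ and local analytic solutions. Once an analytic frame along $C$ is fixed, these sections are exactly $n$ analytic functions of one variable.

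For linear independence, take the frame $E_{(1)},\dots,E_{(n)}$ and the solutions $Y_{(a)}$. The determinant $\det\bigl(Y_{(1)},\dots,Y_{(n)}\bigr)$, computed in any local analytic frame near $C$, is analytic. Along $C$ it equals $\det(E_{(1)},\dots,E_{(n)})\neq0$. By continuity it is nonzero on a neighbourhood of each point of $C$. After shrinking the interval to have compact closure, it is nonzero on a tubular neighbourhood, which gives pointwise independence.

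Commutativity uses Theorem~\ref{thm:coordinate-free-prop34-lem35}. Since $d_{\tnabla}(Y_{(a)}\circ)=0$, each flow $u_{t_a}=Y_{(a)}\circ u_x$ commutes with $u_t=X\circ u_x$. For mutual commutation I apply the criterion of Lemma 3.3 and Proposition 3.4 of \cite{LPVG1}, recalled in that proof, with $A=C_{Y_{(a)}}$, $B=C_{Y_{(b)}}$ and $\nabla=\tnabla$. The endomorphisms commute because $\circ$ is commutative and associative. Both sides of \eqref{eq:coordinate-free-prop34} vanish, since $d_{\tnabla}A=d_{\tnabla}B=0$ for this single connection. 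Hence the flows commute. This step matters because $\tnabla$ is defined through $X$, not through the $Y_{(a)}$, so it is essential that the Proposition 3.4 criterion holds for an arbitrary torsionless connection.

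Finally, for an integral curve of $e$, I have $\dot C=e$, and $C_e=\mathrm{Id}$ is invertible, so $C$ is product-non-characteristic. Since $e$ is nowhere vanishing, a nonconstant integral curve is an analytic immersion and is locally embedded after shrinking. The curve must lie in $U_{\mathrm{cyc}}$, which holds by starting it there and shrinking. I expect no real obstacle here; the only delicate point is the commutation step just described.
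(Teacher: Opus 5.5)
Your proposal is correct and follows essentially the same route as the paper. You invoke Theorem~\ref{thm:main-restated} for existence, uniqueness and the parameter count, get independence from the nonvanishing determinant along $C$ plus shrinking, and apply the commutation criterion \eqref{eq:AB-BA}--\eqref{eq:coordinate-free-prop34} for the single torsionless connection $\tnabla$; the paper handles commutation with $X$ in the same step by setting $Y_{(0)}:=X$, where you cite Theorem~\ref{thm:coordinate-free-prop34-lem35} instead. Your check that an integral curve of $e$ is product-non-characteristic because $C_e=\mathrm{Id}$ fills in a point the paper states but does not prove.
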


\begin{proof}
The first part follows directly from Theorem~\ref{thm:noncharacteristic-CK-symmetries}.

Let $E_{(1)},\ldots,E_{(n)}$ be an analytic frame of $TM|_C$, and let $Y_{(1)},\ldots,Y_{(n)}$ be the corresponding analytic extensions given by Theorem~\ref{thm:noncharacteristic-CK-symmetries}.  Since $Y_{(a)}|_C=E_{(a)}$, the determinant of the $n$ vector fields $Y_{(1)},\ldots,Y_{(n)}$ in any analytic local frame is non-zero along $C$.  After shrinking the interval and the neighbourhood of $C$, this determinant remains non-zero, so the extended fields are pointwise linearly independent.

It remains to check commutativity of the corresponding hydrodynamic flows.  We use the general commutation criterion for arbitrary $(1,1)$-tensors recalled in the proof of Theorem~\ref{thm:coordinate-free-prop34-lem35}, namely \eqref{eq:AB-BA}--\eqref{eq:coordinate-free-prop34}.  Put
\[
        Y_{(0)}:=X,
        \qquad
        A_{(a)}:=C_{Y_{(a)}}=Y_{(a)}\circ,
        \qquad a=0,1,\ldots,n.
\]
For $a=0$ we have $d_{\tnabla}A_{(0)}=d_{\tnabla}(X\circ)=0$ by the defining property of $\tnabla$, while for $a\ge1$ we have $d_{\tnabla}A_{(a)}=d_{\tnabla}(Y_{(a)}\circ)=0$ by construction.  For any $a,b\in\{0,1,\ldots,n\}$,
\[
        A_{(a)}A_{(b)}(U)
        =Y_{(a)}\circ\bigl(Y_{(b)}\circ U\bigr)
        =Y_{(b)}\circ\bigl(Y_{(a)}\circ U\bigr)
        =A_{(b)}A_{(a)}(U)
\]
for every vector field $U$, by commutativity and associativity of $\circ$.  Thus \eqref{eq:AB-BA} holds.  Since both exterior covariant derivatives vanish, \eqref{eq:coordinate-free-prop34} holds identically for the pair $A_{(a)},A_{(b)}$.  The general criterion therefore implies that the flows $u_{t_a}=Y_{(a)}\circ u_x$, $a=1,\ldots,n$, pairwise commute; taking one index equal to $0$ gives that each of them commutes with the original F-system $u_t=X\circ u_x$.
\end{proof}

\begin{remark}\label{coframe}
The linear system for densities of conservation laws 
\begin{equation}\label{dcl}
\tnabla_j\partial_ih=c^s_{ij}\tnabla_e\partial_sh,
	 \end{equation} 
can be studied
 in this much more general setting, using Cauchy--Kowalevski theorem combined with formal compatibility conditions, obtaining similar results. In particular, in the analytic setting, it is possible to prove the existence of a coframe field defined by the differentials of a set of functionally independent densities of conservation laws.
 \end{remark}

\subsection{Comparison with Darboux-Tsarev regular case}\label{SubsectDarbouxTsarevComparison} In the non-semisimple setting the situation becomes much more involved since the system for the symmetries of an F-system in general does not have the form studied by Darboux even in the regular case (see \cite{LPVG2}). 
According to the generalised hodograph method, any symmetry defines a solution in implicit form. In Tsarev's framework also the converse statement is true. Indeed, given the initial data, Tsarev proved that, under some transversality conditions, it is possible to determine the symmetry defining the solution of the Cauchy problem (see Proposition 1 and Theorem 10 in \cite{Tsarev}). In the non-semisimple case, in order to fully extend Tsarev's theory one needs some further assumptions leading, in the regular case, to the notion of \emph{Darboux-Tsarev} F-system (see \cite{LPVG2,LPVG3}). It is easy to see that as soon as the Darboux-Tsarev conditions are not satisfied, higher-order linear PDEs (like the heat equation or similar equations \cite{KK,KO,XF,LPVG3}) come into play and the application of the generalised hodograph method becomes more involved. For instance, it is possible to write the general solution of the heat equation in terms of two arbitrary functions of a single variable, but the solution formula for analytic initial data is a series containing the derivatives of these functions of arbitrary order.

\subsection{The hodograph method in the analytic setting}\label{SubsectIntrinsicHodograph}\footnote{All results of this subsection admit holomorphic counterparts. In the holomorphic category, replace real analytic manifolds, vector fields, curves, functions, and solutions by holomorphic ones; replace intervals by discs; and regard $x,t$ as complex variables. Product-invertibility is understood over $\mathbb{C}.$ The proofs are unchanged: Theorem~\ref{thm:intrinsic-hodograph-safe} uses the holomorphic implicit function theorem, Theorem~\ref{thm:main-restated} uses holomorphic Cauchy--Kowalevski with uniqueness in a product-non-characteristic normal form, and Theorem~\ref{thm:intrinsic-hodograph-completeness} combines these two ingredients with holomorphic Cauchy--Kowalevski uniqueness for the evolutionary F-system.} We now present the hodograph method in this generalized context which does not depend on semisimple coordinates, Darboux axes, or David--Hertling block coordinates.  

We shall use the product-invertibility and product-non-characteristic terminology introduced before Theorem~\ref{thm:main-restated}.

The point is to separate two logically distinct statements.  First, a single symmetry $Y$ produces a solution through an implicit algebraic equation (Theorem~\ref{thm:intrinsic-hodograph-safe}).  Second, a product-non-characteristic analytic Cauchy problem for the F-system is solved by a uniquely determined symmetry and the same intrinsic hodograph formula, and the map so obtained is the unique real analytic solution of the Cauchy problem (Theorem~\ref{thm:intrinsic-hodograph-completeness}).  The construction of this symmetry uses the non-characteristic Cauchy theorem for the symmetry equation, Theorem~\ref{thm:main-restated}. 

The following result was proved in  \cite{LPVG2}.  We recast it here in the analytic setting,  for its use in the results that follow below.

\begin{theorem}[Hodograph formula from one analytic symmetry]\label{thm:intrinsic-hodograph-safe}
Let $(M,\circ,e)$, $X$ and $\tnabla$ satisfy the geometric hypotheses of Theorem~\ref{thm:main-restated} on an analytic open set, and let $Y$ be a local analytic symmetry,
\[
        d_{\tnabla}(Y\circ)=0.
\]
Consider the intrinsic hodograph equation
\begin{equation}\label{eq:intrinsic-hodograph-formula}
        Y(u)=x\,e(u)+t\,X(u),
\end{equation}
for an unknown map $u=u(x,t)$ with values in $M$.  Suppose that at a point $(u_*,x_*,t_*)$ equation \eqref{eq:intrinsic-hodograph-formula} holds and that multiplication by
\begin{equation}\label{eq:hodograph-Z}
        Z_*:=\tnabla_eY\big|_{u_*}-t_*\,\tnabla_eX\big|_{u_*}
\end{equation}
is invertible on $T_{u_*}M$.  Then, after shrinking neighbourhoods, \eqref{eq:intrinsic-hodograph-formula} determines a unique analytic map $u(x,t)$ with $u(x_*,t_*)=u_*$, and this map solves the F-system
\[
        u_t=X\circ u_x .
\]
\end{theorem}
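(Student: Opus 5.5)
The plan is to apply the analytic implicit function theorem to $F(u,x,t):=Y(u)-x\,e(u)-t\,X(u)$, and then to differentiate the identity $F(u(x,t),x,t)=0$ to check the F-system.

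\textbf{Step 1: solvability.} Consider the map $F:M\times\mathbb{K}^2\to TM$. To make this a well-posed equation for $u$, view it as an equation $\Phi(u,x,t)=0$ in a fixed analytic chart near $u_*$; identifying $T_uM$ with $\mathbb{K}^n$ via the coordinate frame, its components are analytic. The $u$-derivative at $(u_*,x_*,t_*)$ is not just the coordinate Jacobian. Because $F(u_*,x_*,t_*)=0$, the zeroth-order term drops out, and we may compute it covariantly: for $U\in T_{u_*}M$,
\[
D_uF\,(U)=\tnabla_UY-x_*\tnabla_Ue-t_*\tnabla_UX .
\]
We have $\tnabla e=0$. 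By Lemma~\ref{lem:symmetry-equiv-PDE}, applied both to $Y$ and to $X$ (the latter symmetry holds since $d_{\tnabla}(X\circ)=0$, i.e.\ Lemma~\ref{lem:nablaX}), this becomes
\[
D_uF\,(U)=U\circ\bigl(\tnabla_eY-t_*\tnabla_eX\bigr)=U\circ Z_* .
\]
It is therefore invertible precisely by the hypothesis on $Z_*$. The analytic implicit function theorem then gives a unique analytic $u(x,t)$ with $u(x_*,t_*)=u_*$. Shrinking the neighbourhoods, $Z(x,t):=\tnabla_eY-t\,\tnabla_eX$ evaluated at $u(x,t)$ stays product-invertible.

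\textbf{Step 2: the PDE.} Differentiate $Y(u)-x\,e(u)-t\,X(u)=0$ covariantly along the map $u$, using the pulled-back connection. This is legitimate because the identity is the vanishing of a section of $u^*TM$. In the $x$-direction we get
\[
\tnabla_{u_x}Y-e-t\,\tnabla_{u_x}X=0, \qquad\text{so}\qquad u_x\circ Z=e .
\]
In the $t$-direction the same formulas give
\[
u_t\circ Z=X .
\]
Multiplying the first relation by $X$ gives $X\circ u_x\circ Z=X=u_t\circ Z$. Invertibility of multiplication by $Z$ then yields $u_t=X\circ u_x$.

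\textbf{Main point.} The only delicate step is Step~1: justifying that the linearization of $F$ at a zero is intrinsically $U\mapsto U\circ Z_*$. This holds because at a zero of a section the derivative is independent of the connection chosen. I would present the computation with $\tnabla$ and remark on this independence explicitly.
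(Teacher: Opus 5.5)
Your proposal is correct and follows the paper's proof essentially step by step. You apply the analytic implicit function theorem to $F=Y-x\,e-t\,X$, whose $u$-linearization at the zero is $U\mapsto U\circ Z_*$ by $\tnabla e=0$ and Lemma~\ref{lem:symmetry-equiv-PDE} applied to $Y$ and to $X$. You then differentiate $F(u(x,t),x,t)=0$ to get $u_x\circ Z=e$ and $u_t\circ Z=X$, and cancel the invertible factor $Z$.

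Your added remarks make explicit what the paper leaves implicit: the linearization at a zero is connection-independent, and the second step is covariant differentiation of a vanishing section of $u^*TM$. One wording point: at the zero the covariant derivative does coincide with the coordinate Jacobian, since the Christoffel term multiplies $F=0$, so ``not just the coordinate Jacobian'' is misleading.
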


\begin{proof}
 Let us recall the main steps of the proof. Let
\[
        F(u,x,t):=Y(u)-x e(u)-tX(u).
\]
Using Lemma~\ref{lem:symmetry-equiv-PDE} it turns out that the $u$-Jacobian of $F$  evaluated at $(u_*,x_*,t_*)$ is
 \begin{equation}\label{eq:hodograph-DuF-multiplication}
        (D_uF)_{(u_*,x_*,t_*)}
        =\bigl(\tnabla_eY-t_*\tnabla_eX\bigr)\big|_{u_*}\circ
        =Z_*\circ.
\end{equation}
By hypothesis this operator is invertible.  The analytic implicit function theorem therefore gives, after shrinking neighbourhoods of $(x_*,t_*)$ and $u_*$, a unique analytic map
\[
        u=u(x,t),\qquad u(x_*,t_*)=u_*,
\]
satisfying the hodograph equation \eqref{eq:intrinsic-hodograph-formula}.  Since invertibility is an open condition, we may shrink once more so that multiplication by
\begin{equation}\label{eq:hodograph-Z-along-solution}
        Z(x,t):=\tnabla_eY\big|_{u(x,t)}-t\,\tnabla_eX\big|_{u(x,t)}
\end{equation}
is invertible for all $(x,t)$ under consideration.

It remains to prove that this implicit solution satisfies the F-system. Differentiating the identity $F=0$ with respect to $x$ and $t$ and 
using again Lemma~\ref{lem:symmetry-equiv-PDE} for $Y$ and the identity $d_{\tnabla}(X\circ)=0$ for $X$, one gets the identities
\begin{equation}\label{eq:hodograph-ux-ut-relations}
        u_x\circ Z=e,\qquad u_t\circ Z=X.
\end{equation}

By associativity,
\[
        (X\circ u_x)\circ Z
        =X\circ(u_x\circ Z)
        =X=u_t\circ Z.
\]
Since multiplication by $Z$ is invertible on the neighbourhood under consideration, we can conclude that the analytic map obtained from the hodograph equation is a solution of the F-system. 
\end{proof}

The non-characteristic Cauchy theorem for symmetries, Theorem~\ref{thm:main-restated}, is the substitute for Darboux completeness in the present generalized context.  In general it does not necessarily provide an explicit closed formula for the required symmetry.  However, when the physical Cauchy curve is sufficiently simple, the same theorem can still be effective and can lead to closed-form solutions beyond the currently available Darboux--Tsarev methods; see Example~\ref{ex:DH-hodograph-not-GT} below.  The product-invertibility of the physical tangent is precisely the intrinsic version of the non-characteristic assumptions appearing in Tsarev's diagonal theory, where one requires $\partial_x u^i(x,t)|_{(x_0,t_0)}\ne0$, and in the David--Hertling regular setting of \cite{LPVG2}, where one requires $\partial_xu^{1(\alpha)}(x,t)|_{(x_0,t_0)}\ne0$ for every block.

\begin{theorem}[Completeness of the hodograph representation]\label{thm:intrinsic-hodograph-completeness}
Assume the geometric hypotheses of Theorem~\ref{thm:main-restated}.  Let $t_0\in\mathbb R$, let $I\subset\mathbb R$ be an interval containing $x_0$, and let
\[
        \varphi:I\longrightarrow U_{\mathrm{cyc}}
\]
be a real analytic initial datum.  Consider the Cauchy problem
\begin{equation}\label{eq:intrinsic-Cauchy-problem}
        \begin{cases}
        u_t=X(u)\circ u_x,\\[2mm]
        u(x,t_0)=\varphi(x).
        \end{cases}
\end{equation}
Set
\[
        u_0:=\varphi(x_0),
\]
and assume that the initial tangent
\[
        u_x(x_0,t_0):=\varphi'(x_0)\in T_{u_0}M
\]
is product-invertible.

Then, after possibly shrinking $I$ around $x_0$, there exists a unique local real analytic symmetry $Y$, satisfying
\[
        d_{\tnabla}(Y\circ)=0,
\]
which is characterized by the matching condition along the physical Cauchy curve $C(x):=\varphi(x)$:
\begin{equation}\label{eq:hodograph-matching-condition-cauchy}
        Y(C(x))=x\,e(C(x))+t_0X(C(x)).
\end{equation}
For this symmetry the hodograph non-degeneracy condition at $(u_0,x_0,t_0)$ is automatically satisfied: multiplication by
\begin{equation}\label{eq:hodograph-Z-Cauchy}
        Z_0:=\tnabla_eY\big|_{u_0}-t_0\,\tnabla_eX\big|_{u_0}
\end{equation}
is invertible on $T_{u_0}M$.  More precisely,
\[
        Z_0=\varphi'(x_0)^{-1}
\]
with respect to the product $\circ$.

Consequently, after shrinking neighbourhoods of $x_0$, $t_0$, and $u_0$, the intrinsic hodograph equation
\begin{equation}\label{eq:complete-hodograph-formula}
        Y(u(x,t))=x\,e(u(x,t))+t\,X(u(x,t))
\end{equation}
determines a real analytic map $u(x,t)$ with $u(x_0,t_0)=u_0$.  This map is a real analytic solution of the Cauchy problem \eqref{eq:intrinsic-Cauchy-problem} near $(x_0,t_0)$.

Moreover, $u$ is the unique real analytic solution of the Cauchy problem \eqref{eq:intrinsic-Cauchy-problem} near $(x_0,t_0)$: any real analytic solution of \eqref{eq:intrinsic-Cauchy-problem}, defined on a connected open neighbourhood of $(x_0,t_0)$, coincides with $u$ on a neighbourhood of $(x_0,t_0)$.  
\end{theorem}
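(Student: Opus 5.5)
The plan is to first construct $Y$ by applying Theorem~\ref{thm:main-restated} to the physical Cauchy curve $C(x)=\varphi(x)$. After shrinking $I$, $C$ is an analytic embedded curve, since $\varphi'(x)$ is product-invertible and hence nonzero. Its tangent $\dot C=\varphi'$ is product-invertible near $x_0$, because this is an open condition, so $C$ is product-non-characteristic. The Cauchy datum $Y_C(x):=x\,e(C(x))+t_0X(C(x))$ is analytic along $C$. Theorem~\ref{thm:main-restated} then gives existence and uniqueness of an analytic $Y$ with $d_{\tnabla}(Y\circ)=0$ satisfying \eqref{eq:hodograph-matching-condition-cauchy}.

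Next I compute $Z_0$. I differentiate the matching condition along $C$ with respect to $x$, using $\tnabla$ along the curve:
\[
\tnabla_{\varphi'}Y=e+x\,\tnabla_{\varphi'}e+t_0\tnabla_{\varphi'}X=e+t_0\,\varphi'\circ\tnabla_eX,
\]
where I used $\tnabla e=0$ and Lemma~\ref{lem:nablaX}. By Lemma~\ref{lem:symmetry-equiv-PDE}, the left side equals $\varphi'\circ\tnabla_eY$. Hence $\varphi'\circ(\tnabla_eY-t_0\tnabla_eX)=e$ along $C$, and evaluating at $x_0$ gives $Z_0=\varphi'(x_0)^{-1}$. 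So $Z_0$ is invertible. The hodograph equation \eqref{eq:complete-hodograph-formula} holds at $(u_0,x_0,t_0)$ by the matching condition. Theorem~\ref{thm:intrinsic-hodograph-safe} therefore produces a unique analytic $u(x,t)$ near $(x_0,t_0)$ solving $u_t=X\circ u_x$.

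To verify the initial condition, I note that $x\mapsto\varphi(x)$ satisfies \eqref{eq:complete-hodograph-formula} at $t=t_0$, by the matching condition. The implicit function theorem's uniqueness near $u_0$ then forces $u(x,t_0)=\varphi(x)$ for $x$ near $x_0$.

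For uniqueness among analytic solutions, let $v$ be another analytic solution on a connected neighbourhood. Since $\varphi'(x_0)$ is product-invertible, $X\circ$ composed with the inverse of $C_{\varphi'}$ is not needed. Instead, I use that the F-system is already in Cauchy--Kowalevski normal form $u_t=V(u)u_x$, with $t$ as the evolution variable and the line $t=t_0$ non-characteristic. The CK uniqueness theorem for analytic solutions gives $v=u$ near $(x_0,t_0)$; alternatively, both solutions have identical Taylor coefficients at every point of the initial line, computed recursively from the equation. The main subtle step is the derivative computation identifying $Z_0$: it requires keeping track of covariant differentiation along the curve and of the role of the $x\,\tnabla e$ term. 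I expect $\tnabla e=0$ to kill that term cleanly.
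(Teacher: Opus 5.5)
Your proposal is correct and follows the same route as the paper. You construct $Y$ from Theorem~\ref{thm:main-restated} along the physical curve, and differentiate the matching condition using $\tnabla e=0$ and Lemma~\ref{lem:symmetry-equiv-PDE} to get $\varphi'\circ(\tnabla_eY-t_0\tnabla_eX)=e$. You then invoke Theorem~\ref{thm:intrinsic-hodograph-safe}, recover the initial datum from implicit-function uniqueness, and obtain uniqueness from Cauchy--Kowalevski for the evolutionary F-system.

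The sentence about ``$X\circ$ composed with the inverse of $C_{\varphi'}$'' is superfluous and can be deleted. As the paper notes, the uniqueness step does not use product-invertibility at all.
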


\begin{proof}
Since product-invertibility is open and $\varphi'(x_0)$ is product-invertible, after shrinking $I$ around $x_0$ the tangent vector
\[
        C'(x)=\varphi'(x)
\]
is product-invertible for every $x\in I$, where $C(x):=\varphi(x)$.  Since $C'(x_0)\neq 0$, we may shrink once more and assume that $C$ is embedded.

Define an analytic vector field along $C$ by
\begin{equation}\label{eq:hodograph-matching-data}
        Y_C(C(x)):=x\,e(C(x))+t_0X(C(x)).
\end{equation}
By Theorem~\ref{thm:main-restated}, there exists, after shrinking a neighbourhood of $C(I)$ in $U_{\mathrm{cyc}}$, a unique local analytic vector field $Y$ such that
\[
        Y|_C=Y_C,
        \qquad
        d_{\tnabla}(Y\circ)=0.
\]
This proves the asserted construction and uniqueness of the matching symmetry.

We now verify that the non-degeneracy condition required in Theorem~\ref{thm:intrinsic-hodograph-safe} is not an additional assumption here.  Differentiating \eqref{eq:hodograph-matching-data} with respect to $x$ and using $\tnabla e=0$ gives
\[
        \tnabla_{C'}Y
        =e+t_0\tnabla_{C'}X.
\]
By Lemma~\ref{lem:symmetry-equiv-PDE}, applied to $Y$ and also to $X$ since $d_{\tnabla}(X\circ)=0$, this becomes
\[
        C'\circ\tnabla_eY
        =e+t_0C'\circ\tnabla_eX.
\]
Equivalently,
\begin{equation}\label{eq:initial-inverse-Z}
        C'\circ(\tnabla_eY-t_0\tnabla_eX)=e.
\end{equation}
Thus, along the initial curve,
\[
        \tnabla_eY\big|_{C(x)}-t_0\tnabla_eX\big|_{C(x)}
        =C'(x)^{-1}
\]
with respect to the product $\circ$.  In particular, at $x=x_0$ the vector $Z_0$ defined in \eqref{eq:hodograph-Z-Cauchy} is product-invertible, and the hodograph non-degeneracy condition is automatically satisfied.

The hypotheses of Theorem~\ref{thm:intrinsic-hodograph-safe} are therefore satisfied at $(u_0,x_0,t_0)$.  Hence, after shrinking neighbourhoods, the implicit equation
\[
        Y(\widetilde u)=x\,e(\widetilde u)+t\,X(\widetilde u)
\]
defines a real analytic map $\widetilde u(x,t)$ with $\widetilde u(x_0,t_0)=u_0$, and this map solves
\[
        \widetilde u_t=X(\widetilde u)\circ \widetilde u_x.
\]

It remains to check the Cauchy datum.  At $t=t_0$, the curve $C(x)=\varphi(x)$ satisfies the same implicit equation by construction:
\[
        Y(C(x))=x\,e(C(x))+t_0X(C(x)).
\]
For $x$ sufficiently close to $x_0$, the point $C(x)$ lies in the neighbourhood where the implicit-function branch is unique.  Hence
\[
        \widetilde u(x,t_0)=C(x)=\varphi(x)
\]
near $x_0$.  Therefore $\widetilde u$ is a real analytic solution of the Cauchy problem \eqref{eq:intrinsic-Cauchy-problem}.

It remains to prove uniqueness within the real analytic class.  This is the uniqueness part of the Cauchy--Kowalevski theorem applied to the F-system itself, which is an analytic evolutionary system in Cauchy--Kowalevski normal form \cite{CK-CourantHilbert,CK-John,CK-Hormander}.  After shrinking, every solution close to $u_0$ takes its values in a fixed analytic coordinate chart around $u_0$, in which the F-system reads
\begin{equation}\label{eq:F-system-CK-normal-form}
        \partial_tu^i=V^i_k(u)\,\partial_xu^k,
        \qquad
        V^i_k(u):=c^i_{jk}(u)\,X^j(u),
\end{equation}
with analytic coefficients $V^i_k$: the system is solved for the $t$-derivative, so the initial line $\{t=t_0\}$ is non-characteristic for it.  The standard uniqueness clause of the Cauchy-Kowalevski theorem gives the uniqueness of the solution. 
Note that this last part of the argument uses only the evolutionary form \eqref{eq:F-system-CK-normal-form} of the F-system, and not the product-invertibility of $\varphi'$.

\end{proof}

Due to Theorem~\ref{thm:intrinsic-hodograph-completeness}  every product-non-characteristic real analytic solution of the F-system is locally representable in intrinsic hodograph form.  The proof above replaces the Darboux completeness theorem used in the Darboux--Tsarev setting by the non-characteristic Cauchy--Kowalevski Theorem~\ref{thm:main-restated}. As we already remarked,  Theorem~\ref{thm:main-restated} does not always produce a closed formula for $Y$.  Nevertheless, in some cases it allows one to go beyond the set-up of the Darboux--Tsarev class, as the following example clearly shows.

\begin{example}[A David--Hertling example outside the Darboux--Tsarev completeness class]\label{ex:DH-hodograph-not-GT}
\normalfont In this example,  David--Hertling coordinates exist globally, but the additional Darboux--Tsarev completeness assumptions used in \cite{LPVG2} do not apply.  The intrinsic hodograph theorem nevertheless applies directly.

Let $M=\mathbb R^2$ with coordinates $(u^1,u^2)$ and frame
\[
        e_1:=\partial_{u^1},
        \qquad
        e_2:=\partial_{u^2}.
\]
Consider the one-block David--Hertling product of size two,
\[
        e_1\circ e_1=e_1,
        \qquad
        e_1\circ e_2=e_2,
        \qquad
        e_2\circ e_2=0.
\]
The unit is $e=e_1$.  This is the algebra of dual numbers in every tangent space, written in the standard David--Hertling coordinates.  Take
\begin{equation}\label{eq:DH-nonGT-X}
        X:=u^2e_1+e_2 .
\end{equation}
Then
\[
        e=e_1,
        \qquad
        X=u^2e_1+e_2
\]
are linearly independent everywhere, hence $X$ is cyclic.  The associated multiplication operator is
\[
        C_X=
        \begin{pmatrix}
        u^2 & 0\\
        1 & u^2
        \end{pmatrix}
\]
in the frame $(e_1,e_2)$.

The corresponding F-system
\[
        u_t=X(u)\circ u_x,
        \qquad
        u=(u^1,u^2),
\]
is
\begin{equation}\label{eq:DH-nonGT-hydro-system}
        u^1_t=u^2u^1_x,
        \qquad
        u^2_t=u^1_x+u^2u^2_x .
\end{equation}

This example is not in the Darboux--Tsarev completeness class singled out in \cite{LPVG2}.  For a single Jordan block of size two, the triangular dependence condition used there requires the first component of the velocity to depend only on the first coordinate, namely $X^1=X^1(u^1)$.  
Thus the Darboux completeness theorem of \cite{LPVG2} cannot be invoked, even though the coordinates are David--Hertling coordinates and $X$ is regularly non-degenerate in the sense of Proposition~\ref{prop:DH-cyclicity-equivalence}.

It is easy to check that for the natural connection $\tnabla$ in the current cooordinate system
\begin{equation}\label{eq:DH-nonGT-connection}
        \Gamma^1_{22}=-1,
\end{equation}
and all other Christoffel symbols vanish.  It is immediate to check that this connection is flat.  In particular the  identity \eqref{eq:3RC-Zinside} (or equivalently \eqref{eq:3RC-outside}) required in Theorem~\ref{thm:main-restated} holds.

Let
\[
        Y=A(u^1,u^2)e_1+B(u^1,u^2)e_2.
\]
By Lemma~\ref{lem:symmetry-equiv-PDE}, the symmetry equation $d_{\tnabla}(Y\circ)=0$ is equivalent to
\[
        \tnabla_UY=U\circ\tnabla_eY
        \qquad\text{for every }U.
\]
The equation for $U=e_2$ is the only nontrivial one.  Since
\[
        \tnabla_{e_1}Y=A_{u^1}e_1+B_{u^1}e_2
\]
and
\[
        e_2\circ\tnabla_{e_1}Y=A_{u^1}e_2,
\]
while, using \eqref{eq:DH-nonGT-connection},
\[
        \tnabla_{e_2}Y=(A_{u^2}-B)e_1+B_{u^2}e_2,
\]
we obtain the symmetry system
\begin{equation}\label{eq:DH-nonGT-symmetry-system}
        A_{u^2}=B,
        \qquad
        B_{u^2}=A_{u^1}.
\end{equation}
This system is not in the Darboux form used in the Darboux--Tsarev completeness argument: the second equation contains the tangential derivative $A_{u^1}$.

We now use the intrinsic hodograph method to solve a two-parameter family of Cauchy problems with linear-in-$x$ initial data for
\eqref{eq:DH-nonGT-hydro-system}.  Let
\begin{equation}\label{eq:DH-nonGT-cauchy-data}
        u^1(x,0)=\alpha x,
        \qquad
        u^2(x,0)=\beta x,
        \qquad
        (\alpha,\beta)\ne(0,0).
\end{equation}
The physical initial curve in the target manifold is
\[
        C_{\alpha,\beta}(x)=(\alpha x,\beta x).
\]
Since \(t_0=0\), the matching condition in Theorem~\ref{thm:intrinsic-hodograph-completeness} is
\begin{equation}\label{eq:DH-nonGT-matching-for-param-curve}
        Y(C_{\alpha,\beta}(x))=x e(C_{\alpha,\beta}(x)).
\end{equation}
Writing \(Y=Ae_1+Be_2\), and recalling that \(e=e_1\), this becomes
\begin{equation}\label{eq:DH-nonGT-matching-components}
        A(\alpha x,\beta x)=x,
        \qquad
        B(\alpha x,\beta x)=0.
\end{equation}
The non-characteristic condition is also transparent.  We have
\[
        C'_{\alpha,\beta}(x)=\alpha e_1+\beta e_2.
\]
In the dual-number algebra an element \(a e_1+b e_2\) is product-invertible if and only if \(a\ne0\), and then
\[
        (a e_1+b e_2)^{-1}=\frac1a e_1-\frac{b}{a^2}e_2.
\]
Thus \(C'_{\alpha,\beta}\) is product-invertible exactly when
\[
        \alpha\ne0.
\]
We first treat this non-characteristic case, where the intrinsic Cauchy--Kowalevski/hodograph theorem applies directly.  The characteristic case \(\alpha=0,\beta\ne0\) is discussed at the end, in order to show why the non-characteristic hypothesis is necessary.

\medskip
\noindent\textbf{The matching symmetry for \(\alpha\ne0\).}
Since \(B=A_{u^2}\), the symmetry system \eqref{eq:DH-nonGT-symmetry-system} is equivalent to
\begin{equation}\label{eq:DH-nonGT-heat-equation-A}
        A_{u^2u^2}=A_{u^1},
        \qquad
        B=A_{u^2}.
\end{equation}
Introduce coordinates adapted to the initial curve:
\[
        \lambda:=\frac{u^1}{\alpha},
        \qquad
        \eta:=u^2-\frac{\beta}{\alpha}u^1.
\]
Equivalently,
\[
        u^1=\alpha\lambda,
        \qquad
        u^2=\beta\lambda+\eta.
\]
Then the curve \(C_{\alpha,\beta}\) is given by \(\eta=0\) and \(\lambda=x\).  Set
\[
        \mathcal A(\lambda,\eta):=A(\alpha\lambda,\beta\lambda+\eta),
        \qquad
        \mathcal B(\lambda,\eta):=B(\alpha\lambda,\beta\lambda+\eta).
\]
Since
\[
        \partial_{u^2}=\partial_\eta,
        \qquad
        \partial_{u^1}=\frac1{\alpha}\partial_\lambda
        -\frac{\beta}{\alpha}\partial_\eta,
\]
the symmetry system becomes
\begin{equation}\label{eq:DH-nonGT-adapted-symmetry-system}
        \mathcal A_\eta=\mathcal B,
        \qquad
        \mathcal B_\eta=\frac1{\alpha}\mathcal A_\lambda
        -\frac{\beta}{\alpha}\mathcal B.
\end{equation}
The matching conditions become
\begin{equation}\label{eq:DH-nonGT-adapted-cauchy-data}
        \mathcal A(\lambda,0)=\lambda,
        \qquad
        \mathcal B(\lambda,0)=0.
\end{equation}
Equivalently, using \(\mathcal B=\mathcal A_\eta\), one has
\[
        \mathcal A(\lambda,0)=\lambda,
        \qquad
        \mathcal A_\eta(\lambda,0)=0.
\]
Because the Cauchy datum is affine in \(\lambda\), we look for
\[
        \mathcal A(\lambda,\eta)=\lambda+f(\eta).
\]
Then \(\mathcal B=f'(\eta)\), and the second equation in
\eqref{eq:DH-nonGT-adapted-symmetry-system} gives
\begin{equation}\label{eq:DH-nonGT-f-ode}
        f''(\eta)+\frac{\beta}{\alpha}f'(\eta)=\frac1{\alpha},
        \qquad
        f(0)=f'(0)=0.
\end{equation}

If \(\beta\ne0\), the solution of \eqref{eq:DH-nonGT-f-ode} is
\begin{equation}\label{eq:DH-nonGT-f-beta-nonzero}
        f'(\eta)=\frac1{\beta}
        \left(1-e^{-\frac{\beta}{\alpha}\eta}\right),
        \qquad
        f(\eta)=\frac{\eta}{\beta}
        +\frac{\alpha}{\beta^2}
        \left(e^{-\frac{\beta}{\alpha}\eta}-1\right).
\end{equation}
Therefore, putting
\[
        \eta=u^2-\frac{\beta}{\alpha}u^1,
\]
the matching symmetry is
\begin{equation}\label{eq:DH-nonGT-symmetry-beta-nonzero}
        A(u^1,u^2)=\frac{u^1}{\alpha}
        +\frac{\eta}{\beta}
        +\frac{\alpha}{\beta^2}
        \left(e^{-\frac{\beta}{\alpha}\eta}-1\right),
        \qquad
        B(u^1,u^2)=\frac1{\beta}
        \left(1-e^{-\frac{\beta}{\alpha}\eta}\right).
\end{equation}
Indeed, \(A_{u^2}=B\), and
\[
        B_{u^2}=\frac1{\alpha}e^{-\frac{\beta}{\alpha}\eta}
        =\frac1{\alpha}-\frac{\beta}{\alpha}B=A_{u^1},
\]
so \eqref{eq:DH-nonGT-symmetry-system} holds.  Along the physical initial curve, \(\eta=0\) and \(u^1/\alpha=x\), hence
\[
        A(\alpha x,\beta x)=x,
        \qquad
        B(\alpha x,\beta x)=0,
\]
as required.

If \(\beta=0\), equation \eqref{eq:DH-nonGT-f-ode} becomes \(f''=1/\alpha\). Thus
\begin{equation}\label{eq:DH-nonGT-symmetry-beta-zero}
        f(\eta)=\frac{\eta^2}{2\alpha},
        \qquad
        f'(\eta)=\frac{\eta}{\alpha},
\end{equation}
and therefore
\begin{equation}\label{eq:DH-nonGT-Y-beta-zero}
        A(u^1,u^2)=\frac{u^1}{\alpha}+\frac{(u^2)^2}{2\alpha},
        \qquad
        B(u^1,u^2)=\frac{u^2}{\alpha}.
\end{equation}
Again \(A_{u^2}=B\), \(B_{u^2}=A_{u^1}=1/\alpha\), and the matching conditions along \(u^1=\alpha x,u^2=0\) are satisfied.

\medskip
\noindent\textbf{Hodograph solution for \(\alpha\ne0,\beta\ne0\).}
The intrinsic hodograph formula is
\[
        Y(u)=x e(u)+tX(u).
\]
Since \(e=e_1\) and \(X=u^2e_1+e_2\), this is the scalar system
\begin{equation}\label{eq:DH-nonGT-hodograph-equations-param}
        A(u^1,u^2)=x+t u^2,
        \qquad
        B(u^1,u^2)=t.
\end{equation}
Using \eqref{eq:DH-nonGT-symmetry-beta-nonzero}, the second equation gives
\[
        t=\frac1{\beta}\left(1-e^{-\frac{\beta}{\alpha}\eta}\right).
\]
Set
\[
        L:=1-\beta t.
\]
Then
\[
        e^{-\frac{\beta}{\alpha}\eta}=L,
        \qquad
        \eta=-\frac{\alpha}{\beta}\log L.
\]
Write \(u^1=\alpha\lambda\), \(u^2=\beta\lambda+\eta\).  The first hodograph equation is
\[
        \lambda+f(\eta)=x+t(\beta\lambda+\eta),
\]
or
\[
        L\lambda=x+t\eta-f(\eta).
\]
Since, by \eqref{eq:DH-nonGT-f-beta-nonzero},
\[
        f(\eta)=\frac{\eta}{\beta}+\frac{\alpha}{\beta^2}(L-1)
        =\frac{\eta}{\beta}-\frac{\alpha}{\beta}t,
\]
we obtain
\[
        \lambda=\frac{x}{L}+\frac{\alpha}{\beta^2}\log L
        +\frac{\alpha t}{\beta L}.
\]
Consequently
\begin{equation}\label{eq:DH-nonGT-solution-beta-nonzero}
        u^2(x,t)=\frac{\beta x+\alpha t}{1-\beta t},
        \qquad
        u^1(x,t)=
        \frac{\alpha(\beta x+\alpha t)}{\beta(1-\beta t)}
        +\frac{\alpha^2}{\beta^2}\log(1-\beta t).
\end{equation}
This is the local branch near \(t=0\), with \(\log(1)=0\).  The initial data are immediate:
\[
        u^2(x,0)=\beta x,
        \qquad
        u^1(x,0)=\alpha x.
\]
One can easily check that the equations are verified directly. 

\medskip
\noindent\textbf{The limiting non-characteristic case \(\beta=0\).}
Using \eqref{eq:DH-nonGT-Y-beta-zero}, the hodograph equations are
\[
        \frac{u^1}{\alpha}+\frac{(u^2)^2}{2\alpha}=x+t u^2,
        \qquad
        \frac{u^2}{\alpha}=t.
\]
Hence
\begin{equation}\label{eq:DH-nonGT-solution-beta-zero}
        u^2(x,t)=\alpha t,
        \qquad
        u^1(x,t)=\alpha x+\frac{\alpha^2}{2}t^2.
\end{equation}
This is also the limit of \eqref{eq:DH-nonGT-solution-beta-nonzero} as \(\beta\to0\).  A direct check gives that these satisfy the system.

\medskip
\noindent\textbf{Nondegeneracy and the unit-curve data.}
For \(\alpha\ne0\), the physical initial tangent is
\[
        C'_{\alpha,\beta}(x)=\alpha e_1+\beta e_2,
\]
and its product inverse is
\begin{equation}\label{eq:DH-nonGT-physical-tangent-inverse}
        C'_{\alpha,\beta}(x)^{-1}=\frac1{\alpha}e_1-\frac{\beta}{\alpha^2}e_2.
\end{equation}
For the matching symmetries above, \(\tnabla_eX=\tnabla_{e_1}X=0\).  Along the initial curve, one computes
\[
        \tnabla_eY|_{C_{\alpha,\beta}(x)}
        =\frac1{\alpha}e_1-\frac{\beta}{\alpha^2}e_2
        =C'_{\alpha,\beta}(x)^{-1},
\]
in agreement with the identity in the proof of Theorem~\ref{thm:intrinsic-hodograph-completeness}.  For \(\beta\ne0\), along the full hodograph solution,
\[
        Z=\tnabla_eY-t\tnabla_eX
        =\frac{1-\beta t}{\alpha}
        \left(e_1-\frac{\beta}{\alpha}e_2\right),
\]
which is product-invertible for \(t\) sufficiently close to zero.  For \(\beta=0\), one simply has \(Z=(1/\alpha)e_1\).  Hence the intrinsic hodograph nondegeneracy condition is satisfied in all non-characteristic cases \(\alpha\ne0\).

\noindent\textbf{The characteristic case \(\alpha=0,\beta\ne0\).}
Here
\[
        C_{0,\beta}'(x)=\beta e_2,
\]
which is not product-invertible.  The stronger intrinsic hodograph theorem does not apply.  In fact, the matching condition is incompatible with the symmetry equation.  It would require
\[
        A(0,\beta x)=x,
        \qquad
        B(0,\beta x)=0.
\]
Differentiating the first equation with respect to \(x\) gives
\[
        \beta A_{u^2}(0,\beta x)=1.
\]
Since \(A_{u^2}=B\), this says \(B(0,\beta x)=1/\beta\), contradicting \(B(0,\beta x)=0\).  Hence no symmetry can satisfy the hodograph matching condition along this characteristic curve.

The hydrodynamic Cauchy problem itself is still elementary in this characteristic case: one has
\[
        u^1(x,t)=0,
        \qquad
        u^2(x,t)=\frac{\beta x}{1-\beta t}.
\]
Indeed, since \(u^1\equiv 0\) implies \(u^1_x=0\), the second equation \(u^2_t=u^1_x+u^2u^2_x\) of the system reduces to \(u^2_t=u^2u^2_x\), which is satisfied by the formula above, and \(u^2(x,0)=\beta x\).  However, it is not obtained from the nondegenerate intrinsic hodograph completeness theorem, precisely because the product-non-characteristic hypothesis fails.

This example shows that the intrinsic Cauchy--Kowalevski/hodograph method can solve a whole family of non-characteristic Cauchy problems in David--Hertling coordinates even when the Darboux--Tsarev completeness assumptions of \cite{LPVG2} fail.  The obstruction to the method of \cite{LPVG2} is the lack of triangular dependence \(X^1=X^1(u^1)\), whereas the intrinsic method uses the product-non-characteristic physical curve to construct the matching symmetry and then applies the coordinate-free hodograph formula.

One can easily construct an example where the product is semisimple on an open dense subset and it degenerates to a non-semisimple product on a closed hypersurface where David-Hertling coordinates do not exist, and still shows that the method described in this section can be carried over.  We leave the details to the reader. 

\end{example}

\section{Generalized Novikov Conjecture: Hamiltonian F-systems are integrable}

\subsection{From F-manifolds with compatible connection to Riemannian F-manifolds} Consider a manifold $M$ equipped with a commutative associative product $\circ$ on sections of $\T$. By a metric $g$ on $M$ we mean a symmetric non-degenerate bilinear pairing on sections of $\T$; in particular we do not require this pairing to be positive definite.  In this subsection,  $X$ stands for any local vector field, not neccessarily a cyclic one. 

\begin{definition}
We have the following:
\begin{itemize}
\item A metric $g$ on $M$ is called {\em invariant} or {\em compatible} with $\circ$ if\begin{equation}\label{gccomp.eq} g(X\circ Y, Z)=g(X, Y\circ Z),\end{equation}
for all local sections $X,Y, Z$ of $\T$.  It is immediate to see that such metrics are in bijective correspondence with sections $\theta$ of $\T^*$ of the form $\theta=e^{\flat}=:g(e, \cdot),$ via the formula $g(X,Y)=\theta(X\circ Y)$,  
\item A torsionless connection $\nabla$ on $\T$ is called {\em compatible} with $\circ$ if $(\nabla_X \circ)(Y,Z)$ is symmetric in all its entries, i.e. 
\begin{equation}\label{nablaccomp.eq}(\nabla_X \circ)(Y, Z)=(\nabla_Y \circ)(X,Z).\end{equation}
This is equivalent to requiring that the connection $\nabla^{\lambda}$ given by 
\[
\nabla^{\lambda}_X Y:=\nabla_X Y+\lambda X\circ Y,  \quad X, Y \in \T
\]
has Riemann tensor that does not depend on $\lambda$, i.e.\ $R^{\nabla^\lambda}=R^{\na}$. By a result of Hertling \cite{HertlingBook}, this automatically implies the condition \eqref{HMeq1free}, thus $(M, \circ)$ is an F-manifold in this case. 
\end{itemize}
\end{definition}

\begin{definition}\label{Fmancon.def}
Let $(M, \circ, e,  \na)$ be the datum of a manifold,  with a commutative associative product $\circ$ on local sections of $\T$,  with identity $e$ and a torsionless connection $\na$ on $\T$.  Then $(M, \circ, e, \na)$ is an {\em F-manifold with compatible connection} $\na$ if 
\begin{itemize}
\item $\nabla$ is compatible with $\circ$ in the sense of \eqref{nablaccomp.eq}.
\item the Riemann tensor $R^{\na}$ of $\nabla$ satisfies the following condition
\begin{equation}\label{Rc1.eq}
R^{\na}(X,Y)(Z\circ W)+R^{\na}(Y,Z)(X\circ W)+R^{\na}(Z,X)(Y\circ W)=0,
\end{equation}
where $X,Y,Z,W$ are arbitrary local sections of $\T$. 
\end{itemize}

It is an F-manifold with compatible connection and flat unit if additionally
\beq\label{eq: flat unit}\nabla e=0.\eeq 

\end{definition}
 
 \begin{remark}
 Given an invariant metric $g$, it is not true in general that the unique torsionless metric connection $\nabla^g$ associated to it is also compatible with the product. Vice versa, given a torsionless connection $\nabla$ compatible with the product, it is not true in general that one can find an invariant metric $g$ among all the metrics that are compatible with $\nabla$ (i.e.\ for which $\nabla g=0$).
 \end{remark}
 
 \begin{remark}\label{Fmnfwcompatconn_flatunit}
More concretely an  F-manifold with compatible connection and flat unit can be described as follows.  
Let  $c^i_{jk}$ be the structure constants of the product $\circ$,   let  
$\Gamma^i_{jk}$ be  the Christoffel symbols  of the connection $\nabla$, then 
\begin{itemize}
\item[(i)] the one-parameter family of  connections $\{\nabla^{\lambda}\}_{\lambda}$ defined by
$$\nabla^{\lambda}=\nabla+\lambda\circ,$$
gives a torsionless connection for any choice of $\lambda$ and moreover the Riemann tensor satisfies
\beq\label{curvlambda}
R^{\nabla^{\lambda}}(X,Y)(Z)=R^{\nabla}(X,Y)(Z),
\eeq
and fulfills the condition \eqref{rc-intri};
\item[(ii)] $e$ is the unit of the product;
\item[(iii)] $e$ is flat: $\nabla e=0$.
\end{itemize}
\hspace{-1em}For a given $\lambda$, the torsion and curvature are respectively given by
\begin{eqnarray*}
T^{(\lambda)k}_{ij}&=&\Gamma^k_{ij}-\Gamma^k_{ji}+\lambda(c^k_{ij}-c^k_{ji}),\\
R^{(\lambda)k}_{ijl}&=&R^k_{ijl}+\lambda(\nabla_i c^k_{jl}-\nabla_j c^k_{il})+\lambda^2(c^k_{im}c^m_{jl}-c^k_{jm}c^m_{il}),
\end{eqnarray*}
where $R^k_{ijl}$ is the Riemann tensor of $\nabla$. Thus, condition $\mathrm{(i)}$ without \eqref{rc-intri} is equivalent to
\begin{enumerate}
\item the vanishing of the torsion of $\nabla$;
\item the commutativity of the product  $\circ$;
\item the symmetry  in the lower indices of the tensor field $\nabla_l c^k_{ij}$;
\item the associativity of the product $\circ$.
\end{enumerate}

 \end{remark}
 
\begin{definition}\label{FmanR.def}(\cite{ABLR})
A Riemannian F-manifold is the datum of an F-manifold $(M, \circ, e)$ with a metric $g$ such that:
\begin{itemize}
\item the metric $g$ is invariant according to \eqref{gccomp.eq};
\item The following relation holds:
\begin{equation}\label{Rcg.eq}
R^{g}(X,Y)(Z\circ W)+R^{g}(Y,Z)(X\circ W)+R^{g}(Z,X)(Y\circ W)=0,
\end{equation}
where $R^{g}$ is the Riemann tensor of $\na^g$,  the Levi-Civita connection of $g$ and $X,Y,Z,W$ are arbitrary local sections of $\T$.
\end{itemize}
Furthermore, if the condition 
\begin{equation}\label{killing1.eq}
\mathcal{L}_e g=0
\end{equation}
holds, where $\mathcal{L}_e$ is the Lie derivative with respect to $e$, then $(M, \circ, e, g)$ is called a Riemannian F-manifold with Killing unit vector field. 
\end{definition}

Our next result shows that it is possible to reconstruct a family of Riemannian F-manifolds (not a unique one in general) starting from an F-manifold with compatible connection provided that the system of PDEs \eqref{gfromnabla.eq} below admit as a solution a metric $g$ invariant for $\circ$.  This generalizes Theorem 1.13  from \cite{ABLR} in which a family of Riemannian F-manifolds was constructed starting from a flat F-manifold. 

\begin{theorem}\label{Riemannfromcompatibleth}
Let $(M,  \circ,  e, \nabla)$ be an F-manifold with compatible connection $\na$  
and let $g$ be an invariant metric for $\circ$ such that $g$ satisfies the following system of PDEs:
\begin{equation}\label{gfromnabla.eq}
(\na_X g)(Y,Z)=\frac{1}{2}d\theta(X\circ Y, Z)+\frac{1}{2}d\theta(X\circ Z,Y)+\frac{1}{2}(\mathcal{L}_eg)(X\circ Y,Z)+\frac{1}{2}(\mathcal{L}_eg)(X\circ Z,Y),
\end{equation}
where $\theta$ is the counit and $X, Y, Z$ are local sections of $\T$.  
Then $(M, \circ, e,  g)$ defines a Riemannian F-manifold.
\end{theorem}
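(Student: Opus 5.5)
The plan is to compare the Levi-Civita connection $\nabla^g$ with the given compatible connection $\nabla$, and to transfer the curvature identity \eqref{Rc1.eq} from $\nabla$ to $\nabla^g$. This follows the strategy of Theorem 1.13 of \cite{ABLR}, where $\nabla$ was flat.

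\textbf{Step 1: the difference tensor.} Write $\nabla^g=\nabla+K$ with $K$ symmetric. The Koszul formula for a torsionless connection gives
\[
2g(K(X,Y),Z)=(\nabla_Xg)(Y,Z)+(\nabla_Yg)(X,Z)-(\nabla_Zg)(X,Y).
\]
Substituting \eqref{gfromnabla.eq}, each $\nabla g$ term becomes a combination of $\Omega(X\circ Y,Z)$-type terms, where $\Omega:=\frac12 d\theta+\frac12\mathcal L_eg$. I would split $\Omega$ into its skew part $\frac12 d\theta$ and its symmetric part $\frac12\mathcal L_e g$. Using invariance of $g$ ($g(X,Y)=\theta(X\circ Y)$) and commutativity, I expect the skew contributions to cancel partially and the result to collapse to
\[
K(X,Y)=X\circ Y\circ W_0+X\circ L(Y)+Y\circ L(X)-(\ldots),
\]
or more likely $K(X,Y)=X\circ N(Y)+Y\circ N(X)$ plus a term $X\circ Y\circ V$, for suitable $(1,1)$-tensor $N$ and vector $V$ defined through $g^{-1}$ of $d\theta$ and $\mathcal L_eg$. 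The key structural point I aim to establish is that $K$ is built entirely out of products. Concretely, $K(X,Y)$ should be a sum of terms of the form $X\circ(\cdots)$ and $Y\circ(\cdots)$ with a $\circ$-linear dependence.

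\textbf{Step 2: transfer of curvature.} Using the standard formula
\[
R^g(X,Y)Z=R^\nabla(X,Y)Z+(\nabla_XK)(Y,Z)-(\nabla_YK)(X,Z)+K(X,K(Y,Z))-K(Y,K(X,Z)),
\]
substitute $Z\mapsto Z\circ W$ and take the cyclic sum over $(X,Y,Z)$ demanded by \eqref{Rcg.eq}. The $R^\nabla$ part vanishes by \eqref{Rc1.eq}. For the $\nabla K$ terms I would use the symmetry of $\nabla\circ$ (compatibility \eqref{nablaccomp.eq}) to move derivatives of $\circ$ freely, so that terms like $(\nabla_X\circ)(Y,Z\circ W)$ become totally symmetric and cancel under the cyclic sum. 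The quadratic $K\cdot K$ terms should cancel by associativity and commutativity, as in the computation of $R^{(\lambda)}$ in Remark \ref{Fmnfwcompatconn_flatunit}. There the $\lambda^2$ term dies for the same reason.

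\textbf{Main obstacle.} The remaining terms are those involving $\nabla N$ (respectively $\nabla V$), i.e.\ second derivatives of $g$ and $\theta$. For these I would use the integrability of \eqref{gfromnabla.eq} itself. Since $g$ is assumed to solve it, $d(d\theta)=0$ and the Ricci-type identity for $\nabla\nabla g$ are available. In particular, $\nabla\theta$ can be expressed through $\nabla g$ applied to $e$, together with $\nabla e$ (compatibility plus the unit give $(\nabla_X\circ)(e,\cdot)=0$ relations). This should reduce the leftover cyclic sum to an expression of the form $\sum_{\rm cyc}(\ldots)\circ W$ that is antisymmetric in two slots while symmetric by construction, hence zero. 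I expect this bookkeeping of the $\mathcal L_e g$ contribution, which is absent in the flat case of \cite{ABLR}, to be the delicate point. If it resists, I would first verify the identity when $\mathcal L_eg=0$ and then treat the symmetric part separately, using $\mathcal L_e\circ$-compatibility derived from the Hertling–Manin identity.
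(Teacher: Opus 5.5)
Your strategy (write $\nabla^g=\nabla+K$ and transfer the cyclic curvature identity) is the paper's strategy. There is, however, a genuine gap at Step 1, and it propagates into Step 2.

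\textbf{Step 1: the form of $K$.} You never carry out the Koszul substitution, and the forms you guess for $K$ (for example $X\circ N(Y)+Y\circ N(X)+X\circ Y\circ V$) are not what comes out. Put $\Omega(A,B):=\frac12 d\theta(A,B)+\frac12(\mathcal L_eg)(A,B)$, so that \eqref{gfromnabla.eq} reads $(\nabla_Xg)(Y,Z)=\Omega(X\circ Y,Z)+\Omega(X\circ Z,Y)$. Your own Koszul formula then gives
\begin{align*}
2g(K(X,Y),Z)&=\Omega(X\circ Y,Z)+\Omega(X\circ Z,Y)+\Omega(Y\circ X,Z)\\
&\quad+\Omega(Y\circ Z,X)-\Omega(Z\circ X,Y)-\Omega(Z\circ Y,X)=2\,\Omega(X\circ Y,Z).
\end{align*}
This uses commutativity alone. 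You need neither a split into skew and symmetric parts nor the invariance of $g$. Hence $K(X,Y)=\widetilde W(X\circ Y)$ with $\widetilde W(A):=(\Omega(A,\cdot))^\sharp$: the difference tensor depends on $X,Y$ only through $X\circ Y$. This is exactly the paper's Part 1, which checks that $\nabla_XY+\widetilde W(X\circ Y)$ is torsionless and $g$-metric, hence equals $\nabla^g$.

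\textbf{Step 2: why your main obstacle disappears.} With this factorization the transfer closes immediately.
\begin{itemize}
\item In the curvature formula, $(\nabla_XK)(Y,V)-(\nabla_YK)(X,V)=(\nabla_X\widetilde W)(Y\circ V)-(\nabla_Y\widetilde W)(X\circ V)$. The terms $\widetilde W\bigl((\nabla_X\circ)(Y,V)\bigr)$ cancel by compatibility. Setting $V=Z\circ W$ and summing cyclically over $(X,Y,Z)$, the six remaining terms cancel in pairs by commutativity and associativity.
\item The quadratic terms $\widetilde W\bigl(X\circ\widetilde W(Y\circ Z\circ W)\bigr)-\widetilde W\bigl(Y\circ\widetilde W(X\circ Z\circ W)\bigr)$ cancel in the cyclic sum in the same way.
\end{itemize}
This is Proposition 1.12 of \cite{ABLR}, which the paper invokes in Part 2. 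You need no second derivatives of $g$ or $\theta$, no integrability conditions of \eqref{gfromnabla.eq}, and no separate treatment of $\mathcal L_eg$. That term is absorbed into $\widetilde W$ exactly like $d\theta$.

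\textbf{Why your proposed workaround does not close the gap.} With your ansatz $K(X,Y)=X\circ N(Y)+Y\circ N(X)$, neither the $\nabla K$ terms nor the $K\cdot K$ terms need cancel under the cyclic sum. That is why you were pushed toward the Ricci-identity argument. As written (``should reduce'', ``if it resists''), that argument is not a proof. It also relies on relations $(\nabla_X\circ)(e,\cdot)=0$. These hold only when $\nabla e=0$, since $(\nabla_X\circ)(e,Z)=-(\nabla_Xe)\circ Z$, and a flat unit is not assumed in this theorem.
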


\begin{proof}

{\it Part 1}
Define a connection $\nabla^{\mathrm{def}}$ by 
\begin{equation}\label{gconn.eq}
\na^{\mathrm{def}}_XY:=\na_XY +\frac{1}{2}(\iota_{X\circ Y}d\theta)^{\sharp}+\frac{1}{2}((\mathcal{L}_e g)(X\circ Y,\cdot))^{\sharp},
\end{equation}
where for any local section $\alpha \in \T^*$,  $\alpha^{\sharp}$ is the local section of $\T$ obtained via the isomorphism $\sharp: \T^* \rightarrow \T$ induced by the cometric $g^*$. We claim that $\nabla^{\mathrm{def}}$ is torsionless and metric with respect to $g$; by the uniqueness of the Levi-Civita connection of $g$, it then coincides with $\nabla^g$, the Levi-Civita connection of $g$, justifying its name. The fact that $\na^{\mathrm{def}}$ is torsionless follows from \eqref{gconn.eq}, since $\na^{\mathrm{def}}_XY-\na^{\mathrm{def}}_YX=\na_XY-\na_YX=[X,Y]$, due to the fact that the additional terms in \eqref{gconn.eq} are symmetric under the exchange of $X$ and $Y$.  As for the metric property:
\begin{equation} 
\begin{split}
(\nabla^{\mathrm{def}}_X g)(Y,Z)& = X(g(Y,Z))-g(\na^{\mathrm{def}}_X Y, Z) -g(Y, \na^{\mathrm{def}}_X Z)=\\
 & = (\na_X g)(Y,Z)+g(\na_X Y, Z)+g(Y, \na_X Z)-g(\na^{\mathrm{def}}_X Y, Z) -g(Y, \na^{\mathrm{def}}_X Z)\stackrel{\eqref{gconn.eq}}{=}\\
 &=(\na_Xg)(Y,Z)-\frac{1}{2}d\theta(X\circ Y, Z)\\
 &-\frac{1}{2}d\theta(X\circ Z,Y)-\frac{1}{2}(\mathcal{L}_eg)(X\circ Y,Z)-\frac{1}{2}(\mathcal{L}_eg)(X\circ Z,Y)\stackrel{\eqref{gfromnabla.eq}}{=}0.
\end{split}
\end{equation}
Hence $\nabla^{\mathrm{def}}=\nabla^g$, and \eqref{gconn.eq} shows that $\nabla$ and $\nabla^g$ are related by an explicit $(1,1)$-tensor correction.

{\it Part 2} The connection $\nabla$ is compatible with the product and $\nabla$ and $\nabla^g$ are related via \eqref{gconn.eq},  which, notice,  can be written as  
\[ \nabla_XY =\nabla^g_X Y +W(X\circ Y),\]
for a suitable $(1,1)$-tensor field $W$.  Proposition 1.12 of \cite{ABLR} proves that in this case
\[R^{\nabla}(Y, Z)(X\circ W)+R^{\nabla}(X,Y)(Z\circ W)+R^{\nabla}(Z, X)(Y\circ W)=\]
\[R^{g}(Y, Z)(X\circ W)+R^{g}(X,Y)(Z\circ W)+R^{g}(Z, X)(Y\circ W),\]
where $R^{\nabla}$ is the curvature operator associated to $\nabla$ and $R^g$ is the one associated to $\nabla^g$.  The result follows immediately. 
\end{proof}

\begin{remark}
Theorem \ref{Riemannfromcompatibleth} depends on the existence of nontrivial solutions to equation \eqref{gfromnabla.eq}. In the case in which $\circ$ is semisimple, equation \eqref{gfromnabla.eq} always admits a solution, while in the nonsemisimple case, this is not always the case. In particular, if there is a solution to \eqref{gfromnabla.eq}, in general there are many, which is why from an F-manifold with compatible connection one reconstructs a family of Riemannian F-manifolds.
\end{remark}

The following Theorem generalizes Theorem 1.8 of  \cite{ABLR}.  It constructs an F-manifold with compatible connection (not flat in general) starting from a Riemannian F-manifold (not necessarily with Killing unit vector field in general).  In particular,  there is no obstruction to build an F-manifold with compatible connection from a Riemannian F-manifold,  unlike the reverse direction.  Furthermore,  the process provides only one F-manifold with compatible connection,  starting from a Riemannian F-manifold.

\begin{theorem}\label{Compatiblefromriemann.th}
Let $(M, \circ, e, g)$ be a Riemannian F-manifold.  Consider the connection given by 
\begin{equation}\label{conncomp1.eq}
  \na_XY:= \na^g_XY -\frac{1}{2}(\iota_{X\circ Y}d\theta)^{\sharp}-\frac{1}{2}((\mathcal{L}_e g)(X\circ Y,\cdot))^{\sharp},
\end{equation}
where $X, Y$ are arbitrary sections of $\T$, $\sharp$ is the isomorphism from $\T^*$ to $\T$ induced by the inverse metric and $\nabla^g$ is the Levi-Civita connection of $g$.
Then the data $(M,  \circ, e, \nabla)$ define an F-manifold with compatible connection.  Furthermore,  the connection $\nabla$ and the metric $g$ fulfill the system of PDEs \eqref{gfromnabla.eq}.
\end{theorem}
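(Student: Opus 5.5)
Write $\nabla_XY=\nabla^g_XY-K(X\circ Y)$, where $K$ is the $(1,1)$-tensor
\[
K(V):=\tfrac12\bigl(\iota_V d\theta\bigr)^\sharp+\tfrac12\bigl((\mathcal L_eg)(V,\cdot)\bigr)^\sharp .
\]
The plan has three steps, taken in order: first the metric PDE \eqref{gfromnabla.eq}, then compatibility \eqref{nablaccomp.eq} of $\nabla$ with $\circ$, and finally the curvature condition \eqref{Rc1.eq}.

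\emph{Torsion and the metric PDE.} Torsion-freeness of $\nabla$ is immediate, since the correction term $K(X\circ Y)$ is symmetric in $X,Y$ by commutativity of $\circ$. For \eqref{gfromnabla.eq} we simply reverse the computation of Part 1 in the proof of Theorem~\ref{Riemannfromcompatibleth}. Because $\nabla^g g=0$,
\[
(\nabla_Xg)(Y,Z)=g(K(X\circ Y),Z)+g(Y,K(X\circ Z)).
\]
By definition of $\sharp$, this equals $\tfrac12 d\theta(X\circ Y,Z)+\tfrac12(\mathcal L_eg)(X\circ Y,Z)$ plus the same expression with $Y$ and $Z$ exchanged, which is exactly \eqref{gfromnabla.eq}.

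\emph{Compatibility.} We must show that $(\nabla_X\circ)(Y,Z)$ is symmetric in $X,Y$. Writing $\nabla=\nabla^g-K(\cdot\circ\cdot)$, we get
\[
(\nabla_X\circ)(Y,Z)=(\nabla^g_X\circ)(Y,Z)-K(X\circ Y\circ Z)+K(X\circ Y)\circ Z+Y\circ K(X\circ Z).
\]
The term $-K(X\circ Y\circ Z)$ is totally symmetric. So compatibility reduces to the identity
\[
(\nabla^g_X\circ)(Y,Z)-(\nabla^g_Y\circ)(X,Z)=Y\circ K(X\circ Z)-X\circ K(Y\circ Z)+\bigl(K(Y\circ X)-K(X\circ Y)\bigr)\circ Z,
\]
where the last bracket vanishes. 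To verify it, lower indices with $g$ and pair with an arbitrary $W$. Invariance of $g$ gives $g((\nabla^g_X\circ)(Y,Z),W)=(\nabla^g_Xc)(Y,Z,W)$, where $c(Y,Z,W)=g(Y\circ Z,W)=\theta(Y\circ Z\circ W)$ is totally symmetric. Now expand $\nabla^g c$ using $c=\theta\circ(\text{triple product})$ and the Koszul formula. For a metric with $g(X,Y)=\theta(X\circ Y)$, the antisymmetric part $(\nabla^g_Xc)(Y,Z,W)-(\nabla^g_Yc)(X,Z,W)$ should be expressible through $d\theta$ and $\mathcal L_e g$ evaluated on products. Here we use that $\mathcal L_e\circ=0$ on an F-manifold, which gives $(\mathcal L_eg)(X,Y)=(\mathcal L_e\theta)(X\circ Y)$. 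The Hertling–Manin identity \eqref{HMeq1free} is what controls the Lie-bracket terms in Koszul's formula. This is the analogue of the computation in Theorem 1.8 of \cite{ABLR}, now with the additional $\mathcal L_eg$ term. I expect this tensor identity to be the main obstacle. I would first attempt it in the case $\mathcal L_eg=0$, where it should reproduce \cite{ABLR}, and then track the extra terms, which should match $\tfrac12(\mathcal L_eg)$ by the same pattern.

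\emph{Curvature.} Once compatibility holds, $\nabla$ and $\nabla^g$ differ by $W(X\circ Y)$ with $W=-K$, and both are torsionless. Proposition 1.12 of \cite{ABLR}, exactly as used in Part 2 of the proof of Theorem~\ref{Riemannfromcompatibleth}, gives that the cyclic expression \eqref{Rc1.eq} for $R^\nabla$ coincides with the one for $R^g$. The latter vanishes by \eqref{Rcg.eq}, so $(M,\circ,e,\nabla)$ is an F-manifold with compatible connection. Note that flatness of $e$ is not claimed, and indeed it is not expected to hold in general.
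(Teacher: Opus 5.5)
There is a genuine gap, located where the substance of the theorem lies: the compatibility of $\nabla$ with $\circ$. You reduce it to an identity and then only say what you expect to find. Two things in that reduction are wrong.

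First, a sign. From your (correct) expansion of $(\nabla_X\circ)(Y,Z)$, compatibility is equivalent to
\[
(\nabla^g_X\circ)(Y,Z)-(\nabla^g_Y\circ)(X,Z)=X\circ K(Y\circ Z)-Y\circ K(X\circ Z),
\]
which is the opposite of what you wrote. Your version is false in general. Take the semisimple product $\partial_i\circ\partial_j=\delta_{ij}\partial_i$ on $\mathbb R^2$ with $g=g_1(du^1)^2+g_2(du^2)^2$, and set $X=Z=\partial_1$, $Y=W=\partial_2$. The two sides of your identity, paired with $W$, are $\pm\frac12(\partial_2g_1-\partial_1g_2)$.

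Second, your picture of the $\mathcal L_eg$-terms is mistaken: there is nothing for them to match. Pair the right-hand side with $W$ and use invariance of $g$. The $\mathcal L_eg$-part becomes $\frac12\bigl[(\mathcal L_eg)(Y\circ Z,X\circ W)-(\mathcal L_eg)(X\circ Z,Y\circ W)\bigr]$. This vanishes identically by the formula $(\mathcal L_eg)(U,V)=(\mathcal L_e\theta)(U\circ V)$ that you derived from $\mathcal L_e\circ=0$, since that formula says $\mathcal L_eg$ is itself invariant. So the $\mathcal L_eg$-correction drops out of the compatibility condition on its own. Dropping the Killing hypothesis produces no extra terms in $\nabla^gc$.

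What remains, and what you actually have to prove, is the following identity for an arbitrary invariant metric on an F-manifold:
\[
(\nabla^g_Xc)(Y,Z,W)-(\nabla^g_Yc)(X,Z,W)=-\tfrac12\,d\theta(X\circ Z,Y\circ W)+\tfrac12\,d\theta(Y\circ Z,X\circ W).
\]
Your plan of doing the Killing case first and then tracking corrections assumes \cite{ABLR} uses $\mathcal L_eg=0$ essentially. In fact the identity holds verbatim without it, but this must be checked, because Killing is a standing assumption in \cite{ABLR}.

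The paper proves the identity directly, for pairwise commuting fields:
\begin{enumerate}
\item Write $2c(Y,W,\nabla^g_XZ)=2g(\nabla^g_XZ,Y\circ W)$ via the Koszul formula.
\item Antisymmetrize in $X,Y$, and add the same expression with $Z\leftrightarrow W$. The derivatives of $c$ cancel.
\item The $\theta$-terms assemble into the two $d\theta$-terms plus a residue $\theta([Y\circ W,X\circ Z])+\theta([Y\circ Z,X\circ W])+\Sigma$, where $\Sigma$ collects the terms of the form $g(\cdot,(\mathcal L_\bullet\circ)(\cdot,\cdot))$.
\item Expand the brackets using the Hertling--Manin identity $\mathcal L_{U\circ V}\circ=U\circ\mathcal L_V\circ+V\circ\mathcal L_U\circ$. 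This cancels the residue term by term.
\end{enumerate}

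Your torsion, metric-PDE and curvature steps are fine and agree with the paper.

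Your closing remark is incorrect. The Koszul formula gives $2g(\nabla^g_Xe,Y)=d\theta(X,Y)+(\mathcal L_eg)(X,Y)$, i.e.\ $\nabla^g_Xe=K(X)$. Hence $\nabla_Xe=\nabla^g_Xe-K(X\circ e)=0$, so this connection always has flat unit. This is verified in the proof of Theorem~\ref{thm: key}.
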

\proof
It is clear that $\nabla$ is torsionless since $\na_XY-\nabla_Y X=\na^g_XY-\na^g_YX=[X,Y]$.  It is also clear that $\na$ and $g$ satisfies the system of PDEs \eqref{gfromnabla.eq},  using the same proof in part 1 of Theorem \ref{Riemannfromcompatibleth}  and observing that in this case $\nabla^g g=0$ by assumption. 

Now we prove that $\na$ is compatible with $\circ$.  We compute

\begin{equation}\label{Delta-full-expansion.eq}
\begin{split}
&(\nabla_X \circ)(Y,Z)-(\nabla_Y\circ)(X,Z)=\\
&\nabla_X(Y\circ Z)-\nabla_X Y\circ Z-Y\circ \nabla_X Z-\nabla_Y(X\circ Z)+\nabla_YX\circ Z+ X\circ \nabla_Y Z\\
&\stackrel{\eqref{conncomp1.eq}}{=}\nabla_X^g(Y\circ Z)-\cancel{\frac{1}{2}(\iota_{X\circ Y\circ Z} d\theta)^{\sharp}}-\bcancel{\frac{1}{2}\left((\mathcal{L}_e g)(X\circ Y\circ Z, \cdot)\right)^{\sharp}}-[X,Y]\circ Z\\
&-Y\circ \nabla^g_X Z+Y\circ \frac{1}{2}(\iota_{X\circ Z}d\theta)^{\sharp}+Y\circ \frac{1}{2}\left((\mathcal{L}_e g)(X\circ Z, \cdot) \right)^{\sharp}-\nabla^g_Y(X\circ Z)\\
&+\cancel{\frac{1}{2}(\iota_{Y\circ X\circ Z} d\theta)^{\sharp}}+\bcancel{\frac{1}{2}\left((\mathcal{L}_e g)(Y\circ X\circ Z, \cdot) \right)^{\sharp}}\\
&+X\circ \nabla^g_Y Z-X\circ \frac{1}{2}(\iota_{Y\circ Z} d\theta)^{\sharp}-X\circ \frac{1}{2}\left((\mathcal{L}_e g)(Y\circ Z, \cdot) \right)^{\sharp},
\end{split}
\end{equation}
where the crossed out terms cancel in pairs due to commutativity and associativity of $\circ$.  .

For the convenience of the reader, we now spell out in full detail the steps that lead to the cancellation of the terms that do not contain $\mathcal{L}_eg$. Denote by $\nabla^0$ the connection obtained from \eqref{conncomp1.eq} by dropping the $\mathcal{L}_eg$-correction, namely
\begin{equation}\label{nablazerocomp.eq}
\nabla^0_XY:=\nabla^g_XY-\frac{1}{2}(\iota_{X\circ Y}d\theta)^\sharp .
\end{equation}
First, we expand $(\nabla^0_X\circ)(Y,Z)=\nabla^0_X(Y\circ Z)-(\nabla^0_XY)\circ Z-Y\circ(\nabla^0_XZ)$, substituting \eqref{nablazerocomp.eq} in each of the three terms:
\[
(\nabla^0_X\circ)(Y,Z)=(\nabla^g_X\circ)(Y,Z)
-\frac{1}{2}\bigl(\iota_{X\circ Y\circ Z}\,d\theta\bigr)^\sharp
+\frac{1}{2}\bigl(\iota_{X\circ Y}\,d\theta\bigr)^\sharp\circ Z
+\frac{1}{2}\,Y\circ\bigl(\iota_{X\circ Z}\,d\theta\bigr)^\sharp .
\]
Subtracting from this the same expression with $X$ and $Y$ exchanged, the second summands on the right-hand side cancel,   by commutativity and associativity of $\circ$, and the third summands cancel, because $X\circ Y=Y\circ X$.  Hence the $\mathcal{L}_eg$-free part of \eqref{Delta-full-expansion.eq} is precisely
\[
\begin{split}
\Delta_0(X,Y,Z):=\;&(\nabla^0_X\circ)(Y,Z)-(\nabla^0_Y\circ)(X,Z)\\
=\;&(\nabla^g_X\circ)(Y,Z)-(\nabla^g_Y\circ)(X,Z)\\
&+\frac{1}{2}\,Y\circ\bigl(\iota_{X\circ Z}\,d\theta\bigr)^\sharp
-\frac{1}{2}\,X\circ\bigl(\iota_{Y\circ Z}\,d\theta\bigr)^\sharp .
\end{split}
\]
Now we prove that $\Delta_0=0$.  Since $g$ is non-degenerate, $\Delta_0=0$ if and only if
\[
g(\Delta_0(X,Y,Z),W)=0
\]
for every local vector field $W$.  By the definition of the isomorphism $\sharp$, the vector field $(\iota_{X\circ Z}d\theta)^\sharp$ is characterised by the property
\[
g\bigl((\iota_{X\circ Z}d\theta)^\sharp,\,U\bigr)=d\theta(X\circ Z,U)
\qquad\text{for every local vector field $U$.}
\]
Using first the invariance of $g$ and then this property with $U=Y\circ W$, we get
\[
g\bigl(Y\circ(\iota_{X\circ Z}d\theta)^\sharp,\,W\bigr)
=g\bigl((\iota_{X\circ Z}d\theta)^\sharp,\,Y\circ W\bigr)
=d\theta(X\circ Z,Y\circ W),
\]
and, exchanging $X$ and $Y$,
\[
g\bigl(X\circ(\iota_{Y\circ Z}d\theta)^\sharp,\,W\bigr)=d\theta(Y\circ Z,X\circ W).
\]
Pairing the above expression for $\Delta_0(X,Y,Z)$ with $W$ therefore yields
\begin{equation}\label{DeltazeroABLR.eq}
\begin{split}
g(\Delta_0(X,Y,Z),W)=&\,g((\nabla^g_X\circ)(Y,Z)-(\nabla^g_Y\circ)(X,Z),W)\\
&+\frac{1}{2}d\theta(X\circ Z,Y\circ W)-\frac{1}{2}d\theta(Y\circ Z,X\circ W).
\end{split}
\end{equation}
We now rewrite the first summand on the right-hand side of \eqref{DeltazeroABLR.eq}.  Let
\[
\mathcal{C}(U,V,S):=g(U\circ V,S)=\theta(U\circ V\circ S),
\]
where the second equality follows from the unit property and the invariance of $g$:
\[
g(U\circ V,S)=g\bigl((U\circ V)\circ e,\,S\bigr)=g\bigl(e,\,(U\circ V)\circ S\bigr)=\theta(U\circ V\circ S).
\]
In particular $\mathcal{C}$ is completely symmetric.  Moreover, since $\nabla^gg=0$,
\[
X\,\mathcal{C}(Y,Z,W)=X\,g(Y\circ Z,W)
=g\bigl(\nabla^g_X(Y\circ Z),W\bigr)+g\bigl(Y\circ Z,\nabla^g_XW\bigr),
\]
so that, expanding the covariant derivative of the tensor $\mathcal{C}$ and using $\mathcal{C}(Y,Z,\nabla^g_XW)=g(Y\circ Z,\nabla^g_XW)$,
\[
\begin{split}
(\nabla^g_X\mathcal{C})(Y,Z,W)
&=X\,\mathcal{C}(Y,Z,W)-\mathcal{C}(\nabla^g_XY,Z,W)-\mathcal{C}(Y,\nabla^g_XZ,W)-\mathcal{C}(Y,Z,\nabla^g_XW)\\
&=g\bigl(\nabla^g_X(Y\circ Z)-(\nabla^g_XY)\circ Z-Y\circ(\nabla^g_XZ),\,W\bigr)\\
&=g\bigl((\nabla^g_X\circ)(Y,Z),\,W\bigr).
\end{split}
\]
Thus the desired cancellation is equivalent to the intrinsic identity
\begin{equation}\label{ABLRkeyidentity.eq}
(\nabla^g_X\mathcal{C})(Y,Z,W)-(\nabla^g_Y\mathcal{C})(X,Z,W)
=-\frac{1}{2}d\theta(X\circ Z,Y\circ W)+\frac{1}{2}d\theta(Y\circ Z,X\circ W).
\end{equation}
This identity is established in the compatibility part of Theorem 1.8 of \cite{ABLR}, where, however, the unit vector field is Killing as part of the standing assumptions.  Since here no such assumption is made, we give a short intrinsic proof of \eqref{ABLRkeyidentity.eq}; it is the coordinate-free counterpart of the computation carried out in \cite{ABLR}, and it makes apparent that the Killing condition $\mathcal{L}_eg=0$ is never used.

Both sides of \eqref{ABLRkeyidentity.eq} are tensorial in $X,Y,Z,W$, so it is enough to prove the identity for pairwise commuting vector fields, for instance for coordinate vector fields.  Throughout, commutativity and associativity of $\circ$ are used through the complete symmetry of $\mathcal{C}$, while the invariance of $g$ and the unit property are used through
\[
g(U,V)=\theta(U\circ V),\qquad g(U,V\circ S)=\mathcal{C}(U,V,S)=\theta(U\circ V\circ S).
\]
Recall that $({\mathcal L}_V\circ)(U,S)=[V,U\circ S]-[V,U]\circ S-U\circ[V,S]$, so that, when $U$ and $S$ commute with $V$,
\begin{equation}\label{LVbracket.eq}
[V,U\circ S]=({\mathcal L}_V\circ)(U,S),
\qquad
[U\circ S,V]=-({\mathcal L}_V\circ)(U,S).
\end{equation}

Step 1: Koszul formula.  For commuting vector fields $X,Z$ the Koszul formula for $\nabla^g$ reads
\[
2g(\nabla^g_XZ,V)=Xg(Z,V)+Zg(X,V)-Vg(X,Z)-g([Z,V],X)+g([V,X],Z).
\]
Choosing $V=Y\circ W$ with $Y$ and $W$ commuting with both $X$ and $Z$ and using \eqref{LVbracket.eq}, we obtain
\begin{equation}\label{KoszulC.eq}
\begin{split}
2\,\mathcal{C}(Y,W,\nabla^g_XZ)=\;&X\,\mathcal{C}(Y,Z,W)+Z\,\mathcal{C}(X,Y,W)-(Y\circ W)\,\theta(X\circ Z)\\
&-g\bigl(X,({\mathcal L}_Z\circ)(Y,W)\bigr)-g\bigl(Z,({\mathcal L}_X\circ)(Y,W)\bigr).
\end{split}
\end{equation}
Exchanging $X$ and $Y$ in \eqref{KoszulC.eq} and subtracting, the terms $Z\,\mathcal{C}(X,Y,W)$ cancel by the symmetry of $\mathcal{C}$, and we get:
\begin{equation}\label{KoszulCdiff.eq}
\begin{split}
2\bigl[\mathcal{C}(Y,W,\nabla^g_XZ)-\mathcal{C}(X,W,\nabla^g_YZ)\bigr]
=\;&X\,\mathcal{C}(Y,Z,W)-Y\,\mathcal{C}(X,Z,W)\\
&-(Y\circ W)\,\theta(X\circ Z)+(X\circ W)\,\theta(Y\circ Z)\\
&-g\bigl(X,({\mathcal L}_Z\circ)(Y,W)\bigr)+g\bigl(Y,({\mathcal L}_Z\circ)(X,W)\bigr)\\
&-g\bigl(Z,({\mathcal L}_X\circ)(Y,W)-({\mathcal L}_Y\circ)(X,W)\bigr).
\end{split}
\end{equation}

Step 2: splitting off the $d\theta$-terms.  Since $[X,Y]=0$ and $\nabla^g$ is torsionless, $\nabla^g_XY=\nabla^g_YX$, so that, expanding the covariant derivatives of $\mathcal{C}$,
\[
\begin{split}
(\nabla^g_X\mathcal{C})(Y,Z,W)-(\nabla^g_Y\mathcal{C})(X,Z,W)
=\;&X\,\mathcal{C}(Y,Z,W)-Y\,\mathcal{C}(X,Z,W)\\
&-\bigl[\mathcal{C}(Y,W,\nabla^g_XZ)-\mathcal{C}(X,W,\nabla^g_YZ)\bigr]\\
&-\bigl[\mathcal{C}(Y,Z,\nabla^g_XW)-\mathcal{C}(X,Z,\nabla^g_YW)\bigr].
\end{split}
\]
Substituting \eqref{KoszulCdiff.eq} twice, the second time with $Z$ and $W$ exchanged, the derivatives of $\mathcal{C}$ cancel, and we are left with
\begin{equation}\label{intrinsicsplit.eq}
\begin{split}
2\bigl[(\nabla^g_X\mathcal{C})(Y,Z,W)-(\nabla^g_Y\mathcal{C})(X,Z,W)\bigr]
=\;&(Y\circ W)\,\theta(X\circ Z)-(X\circ Z)\,\theta(Y\circ W)\\
&+(Y\circ Z)\,\theta(X\circ W)-(X\circ W)\,\theta(Y\circ Z)\\
&+\Sigma(X,Y,Z,W),
\end{split}
\end{equation}
where
\begin{equation}\label{Sigmadef.eq}
\begin{split}
\Sigma(X,Y,Z,W):=\;
&g\bigl(X,({\mathcal L}_Z\circ)(Y,W)\bigr)-g\bigl(Y,({\mathcal L}_Z\circ)(X,W)\bigr)\\
+\;&g\bigl(X,({\mathcal L}_W\circ)(Y,Z)\bigr)-g\bigl(Y,({\mathcal L}_W\circ)(X,Z)\bigr)\\
+\;&g\bigl(Z,({\mathcal L}_X\circ)(Y,W)\bigr)-g\bigl(Z,({\mathcal L}_Y\circ)(X,W)\bigr)\\
+\;&g\bigl(W,({\mathcal L}_X\circ)(Y,Z)\bigr)-g\bigl(W,({\mathcal L}_Y\circ)(X,Z)\bigr).
\end{split}
\end{equation}
Since $d\theta(U,V)=U\theta(V)-V\theta(U)-\theta([U,V])$ for arbitrary vector fields $U$, $V$, the first four terms on the right-hand side of \eqref{intrinsicsplit.eq} equal
\[
d\theta(Y\circ W,X\circ Z)+d\theta(Y\circ Z,X\circ W)
+\theta\bigl([Y\circ W,X\circ Z]\bigr)+\theta\bigl([Y\circ Z,X\circ W]\bigr),
\]
and, by the antisymmetry of $d\theta$, the sum $d\theta(Y\circ W,X\circ Z)+d\theta(Y\circ Z,X\circ W)$ is exactly twice the right-hand side of \eqref{ABLRkeyidentity.eq}.  Hence \eqref{ABLRkeyidentity.eq} is equivalent to the vanishing of the residue
\begin{equation}\label{intrinsicresidue.eq}
\theta\bigl([Y\circ W,X\circ Z]\bigr)+\theta\bigl([Y\circ Z,X\circ W]\bigr)+\Sigma(X,Y,Z,W)=0.
\end{equation}

Step 3: the Hertling--Manin condition kills the residue.  This is the only step where the Hertling--Manin condition enters, in the form ${\mathcal L}_{U\circ V}\circ=U\circ{\mathcal L}_V\circ+V\circ{\mathcal L}_U\circ$ of \eqref{HMeq1free}.  Expanding the bracket by the Leibniz rule for ${\mathcal L}_{Y\circ W}$ and using \eqref{LVbracket.eq}, we get
\[
[Y\circ W,X\circ Z]
=({\mathcal L}_{Y\circ W}\circ)(X,Z)-({\mathcal L}_X\circ)(Y,W)\circ Z-X\circ({\mathcal L}_Z\circ)(Y,W).
\]
Applying $\theta$ and using $\theta(U\circ V)=g(U,V)$ together with the Hertling--Manin condition, which gives
\[
\theta\bigl(({\mathcal L}_{Y\circ W}\circ)(X,Z)\bigr)
=g\bigl(Y,({\mathcal L}_W\circ)(X,Z)\bigr)+g\bigl(W,({\mathcal L}_Y\circ)(X,Z)\bigr),
\]
we obtain
\begin{equation}\label{thetabracket.eq}
\begin{split}
\theta\bigl([Y\circ W,X\circ Z]\bigr)
=\;&g\bigl(Y,({\mathcal L}_W\circ)(X,Z)\bigr)+g\bigl(W,({\mathcal L}_Y\circ)(X,Z)\bigr)\\
&-g\bigl(Z,({\mathcal L}_X\circ)(Y,W)\bigr)-g\bigl(X,({\mathcal L}_Z\circ)(Y,W)\bigr),
\end{split}
\end{equation}
and, exchanging $Z$ and $W$,
\[
\begin{split}
\theta\bigl([Y\circ Z,X\circ W]\bigr)
=\;&g\bigl(Y,({\mathcal L}_Z\circ)(X,W)\bigr)+g\bigl(Z,({\mathcal L}_Y\circ)(X,W)\bigr)\\
&-g\bigl(W,({\mathcal L}_X\circ)(Y,Z)\bigr)-g\bigl(X,({\mathcal L}_W\circ)(Y,Z)\bigr).
\end{split}
\]
Substituting these two expressions into \eqref{intrinsicresidue.eq}, every term of $\Sigma(X,Y,Z,W)$ in \eqref{Sigmadef.eq} cancels against a term of the expanded brackets, term by term.  This proves \eqref{intrinsicresidue.eq}, hence \eqref{ABLRkeyidentity.eq} and, equivalently, $\Delta_0(X,Y,Z)=0$.

The point of the preceding computation is that it uses only the invariance of $g$ with respect to $\circ$, the fact that $e$ is the unit of $\circ$, associativity and commutativity of $\circ$, and the Hertling-Manin condition \eqref{HMeq1free}.  It does not use $\mathcal{L}_eg=0$.  Consequently, in the expression above all terms independent of $\mathcal{L}_eg$ cancel.  Therefore, to prove that $\nabla$ is compatible with $\circ$ it remains only to show that the terms containing $\mathcal{L}_eg$ cancel out.  For this, using the fact that $g$ is non-degenerate it is enough to show that for each local section $W$ of $\T$
\[g\left( Y\circ \frac{1}{2}\left((\mathcal{L}_e g)(X\circ Z, \cdot) \right)^{\sharp}-X\circ \frac{1}{2}\left((\mathcal{L}_e g)(Y\circ Z, \cdot) \right)^{\sharp}, W\right)= 0. \]
Using the invariance of $g$ with respect to $\circ$ and the fact that $\sharp$ is the isomorphism induced by the inverse metric between $\T^*$ and $\T$, we have that this is equivalent to showing that 
\begin{equation}\label{keycomp.eq}(\mathcal{L}_e g)(X\circ Z, Y\circ W)=(\mathcal{L}_e g)(Y\circ Z, X\circ W).\end{equation}
Now observe that because of the Hertling-Manin condition ${\mathcal L}_{X\circ Y}\circ =X\circ {\mathcal L}_Y \circ +Y\circ {\mathcal L}_X \circ$ with $X=Y=e$ one has ${\mathcal L}_e \circ=0$. 
Due to invariance one has $g(X\circ Z, Y\circ W)=g(Y\circ Z, X\circ W)$ and from this we get $e(g(X\circ Z, Y\circ W))=e(g(Y\circ Z, X\circ W)).$  Expanding both sides and using ${\mathcal L}_e \circ=0$ we get:
\begin{equation}
\begin{split}
\cancel{g(\Le X \circ Z, Y\circ W)}+\bcancel{g(X\circ \Le Z, Y\circ W)}+\xcancel{g(X\circ Z, \Le Y \circ W)}\\+\cancel{g(X\circ Z, Y\circ \Le W)}+(\Le g)(X\circ Z, Y\circ W)=\\
\xcancel{g(\Le Y\circ Z, X\circ W)}+\bcancel{g(Y \circ \Le Z, X\circ W)}+\cancel{g(Y\circ Z, \Le X\circ W)}\\+ \cancel{g(Y\circ Z, X\circ \Le W)}+(\Le g)(Y\circ Z, X\circ W).
\end{split}
\end{equation}
where the terms cancel in pairs due to invariance of $g$ and commutativity and associativity of $\circ.$ From this,  one gets \eqref{keycomp.eq}.

Finally,  using the second part of the proof of Theorem \ref{Riemannfromcompatibleth} (equivalently Proposition 1.12 in \cite{ABLR}), since  $\nabla$ is compatible with $\circ$,  we get condition \eqref{Rc1.eq} for the curvature $R^{\nabla}$ associated to $\nabla$,  since \eqref{Rc1.eq} holds by assumption for $R^g$,  the curvature of the metric $g$. 
\endproof

\subsection{The generalized Novikov's conjecture and its proof}\hfill\\

\noindent
{\bf Standing assumption: } We consider systems of the form \eqref{SHS2} on an F-manifold $(M,\circ,e)$ and assume that the vector field $X$ defining the system is cyclic, namely $e,X,X^{\circ 2},$
$\ldots,X^{\circ(n-1)}$ is a local frame.  When David--Hertling coordinates exist, this is equivalent to the usual regular non-degeneracy condition by Proposition \ref{prop:DH-cyclicity-equivalence}.  

\medskip 

{\bf Generalized Novikov's conjecture}: \emph{If a system of the form \eqref{SHS2} is Hamiltonian with respect to a local Hamiltonian structure, or with respect to a non-local Hamiltonian structure of hydrodynamic type with admissible Hamiltonian density (see below), then it satisfies the integrability condition \eqref{Rc1.eq}}.

\medskip 

The goal of this Section is to prove this statement.  The proof is based on the previous Section and on the following steps.  

First, from the assumption that system \eqref{SHS2} is Hamiltonian with respect to a local or non-local Hamiltonian structure of hydrodynamic type with admissible density,  we associate to it a canonical Riemannian F-manifold $(M, g, e, \circ)$ satisfying the additional condition $d_{\nabla^g}(X\circ)=0$,  where $X$ is the vector field defining \eqref{SHS2}.  In this Riemannian F-manifold, the unit vector field is not necessarily a Killing vector field. 

Leveraging in part the results of the previous Section, we will associate to this Riemannian F-manifold a canonical F-manifold with compatible connection $(M, \nabla, e, \circ)$ satisfying the further condition that $d_{\nabla}(X\circ)=0$ and such that $\nabla e=0$.
Finally, using the cyclic uniqueness theorem, Theorem \ref{thm: existence and uniqueness of cyclic connections}, we can identify the connection $\tnabla$ originating from the system \eqref{SHS2} with the connection $\nabla$ of the F-manifold with compatible connection, whenever both are characterized by $\nabla e=0$ and $d_\nabla(X\circ)=0$.  This will give the equation \eqref{Rc1.eq} for $\tnabla$. 

\medskip 

We first introduce general Hamiltonian operators following \cite{F} (see also \cite{CLV}).

\begin{theorem}\label{thm: Hamiltonian operator}\cite{F} Suppose $\det(g)\neq 0$,  where $g$ is a pseudo-Riemannian metric.  
Consider the operator 
\begin{equation}\label{eq: Hamiltonian operator}
P^{ij}:=g^{ij}\frac{d}{dx}-g^{is}\Gamma^j_{sk}u^k_x+\sum_{\alpha=1}^N \epsilon_{\alpha}(W_{(\alpha)})^i_ku^k_x  \left(\frac{d}{dx}\right)^{-1}(W_{(\alpha)})^j_h u^h_x, \quad \epsilon_{\alpha}=\pm 1,\end{equation}
where $\Gamma^i_{sk}$ are the Christoffel symbols of the Levi-Civita connection $\nabla^g$ associated to $g.$ Then  $P^{ij}$ is a Hamiltonian operator if and only if the $(1,1)$-tensor fields $W_{(\alpha)}$ satisfy the following conditions:
\begin{enumerate}
\item \begin{equation}\label{eq: 2ndDN} d_{\nabla^g}(W_{(\alpha)})=0;
\end{equation}
\item \begin{equation}\label{eq: 1stDN} g_{ik}(W_{(\alpha)})^k_j=g_{jk} (W_{(\alpha)})^k_i;
\end{equation}
\item \begin{equation}\label{eq: commuting affinor} [W_{(\alpha)}, W_{(\alpha')}]=0,  \quad \forall \alpha, \alpha'
\end{equation}
where $[\cdot, \cdot]$ is the commutator;
\item \begin{equation}\label{eq: R quadratic affinor} R^{ij}_{kh}=\sum_{\alpha=1}^N \epsilon_{\alpha}\left((W_{(\alpha)})^j_k (W_{(\alpha)})^i_h- (W_{(\alpha)})^i_k (W_{(\alpha)})^j_h\right),
\end{equation}
where $ R^{ij}_{kh}:=g^{is}R^j_{skh}$ and $R^j_{skh}$ are the components of the Riemann tensor of $g$.
\end{enumerate}
\end{theorem}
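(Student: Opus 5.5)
The plan is to verify directly the two defining properties of a Hamiltonian operator, skew-adjointness and the Jacobi identity, for the weakly non-local operator \eqref{eq: Hamiltonian operator}. At each stage I would read off which geometric conditions on $(g,W_{(\alpha)})$ are forced. Conversely, I would show that these conditions are sufficient.

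\emph{Skew-adjointness.} Write $P=P_{\mathrm{loc}}+P_{\mathrm{nl}}$.
\begin{itemize}
\item For the local part, the leading coefficient must be a symmetric matrix, which holds since $g$ is a metric.
\item The zero-order part must satisfy $g^{is}\Gamma^j_{sk}+g^{js}\Gamma^i_{sk}=\partial_kg^{ij}$. This is exactly $\nabla^g g=0$, so it holds for the Levi-Civita connection.
\item For the non-local part, $\left(\frac{d}{dx}\right)^{-1}$ is formally skew-adjoint. Each summand has the form $w^i\left(\frac{d}{dx}\right)^{-1}w^j$ with $w^i=(W_{(\alpha)})^i_ku^k_x$, so it is automatically skew.
\end{itemize}
Hence skew-adjointness imposes nothing beyond the choice of $\nabla^g$.

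\emph{Jacobi identity.} I would compute the Schouten bracket $[P,P]$ using the formalism of Mokhov--Ferapontov for weakly non-local operators, as in \cite{F}. This amounts to checking the Jacobi identity on the Poisson brackets $\{u^i(x),u^j(y)\}$, distributionally, with $\delta'$, $\delta$ and $\nu(x-y)=\frac12\mathrm{sgn}(x-y)$ kernels. The resulting trivector decomposes into coefficients of independent structures: purely local terms in $\delta^{(k)}$, terms with one factor $\nu$, and terms with products of $\nu$'s. I would show that their vanishing splits as follows.
\begin{enumerate}
\item The local terms in $u_{xx}$ and in $u_x$ give torsion-freeness and metricity, which are already satisfied. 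The quadratic local terms in $u_x$ give the Gauss-type equation \eqref{eq: R quadratic affinor}. The local contributions come from $R^{\nabla^g}$ on one side and from the pairs $W_{(\alpha)}\otimes W_{(\alpha)}$ produced when the non-local tails are differentiated on the other.
\item The terms with a single $\nu$, linear in the $W$'s, give the Codazzi-type conditions. These are the symmetry of $W_{(\alpha)}$ with respect to $g$ in \eqref{eq: 1stDN}, and $d_{\nabla^g}W_{(\alpha)}=0$ in \eqref{eq: 2ndDN}.
\item The terms with products of two $\nu$'s, or with $\nu$ multiplied by quadratic expressions in distinct $W$'s, give the commutativity \eqref{eq: commuting affinor}. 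This is the Ricci-type equation for a flat normal bundle.
\end{enumerate}
For sufficiency, I would observe that under (1)--(4) every coefficient vanishes, so the computation is reversible.

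As a cross-check and alternative route, I would use Ferapontov's geometric interpretation. Conditions (1)--(4) are exactly the Gauss--Codazzi--Ricci equations of an $n$-dimensional submanifold with flat normal bundle in a pseudo-Euclidean space of dimension $n+N$. Its Weingarten operators are the $W_{(\alpha)}$ and the signature of the normal bundle is given by the $\epsilon_\alpha$. In that picture $P$ is the reduction of a constant Dubrovin--Novikov bracket. This makes sufficiency transparent once the local embedding is produced by the Bonnet theorem.

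The main obstacle I expect is step (2)--(3): bookkeeping of the non-local terms in the Jacobi identity. One must use identities such as $\nu'=\delta$ and $\nu(x-y)\nu(y-z)+\text{cyclic}=-\frac14$ to rewrite products of non-local kernels. Only after that can one isolate a linearly independent basis of distributions and read off conditions (1)--(4) without spurious constraints. I would first carry out this reduction carefully for a single tail ($N=1$) and then note that cross terms between distinct $\alpha,\alpha'$ produce only the commutator condition.
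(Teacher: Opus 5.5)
The paper gives no proof of this statement. It is quoted from Ferapontov \cite{F}, so your outline has to stand on its own. Your strategy is the standard one: skew-adjointness, then sorting the Jacobi identity by type of distribution, with the Gauss--Codazzi--Ricci/Dirac-reduction picture as a route to sufficiency. However, the necessity half has a genuine gap, and it cannot succeed for the statement exactly as written. When you read off (1) and (2) for each $\alpha$ from the single-$\nu$ terms, and (3) from the $\nu\nu$ terms, you separate the coefficient of $\nu(x-y)\,w^j_{(\alpha)}(y)$ from those with other $\alpha$, and the coefficients of the products $w^j_{(\alpha)}(y)\,w^k_{(\beta)}(z)$ for different pairs. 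This is legitimate only if $W_{(1)},\dots,W_{(N)}$ are linearly independent over constants (equivalently, if the tails $W_{(\alpha)}u_x$ are). Without that hypothesis the ``only if'' direction is false. Take $N=2$, $W_{(1)}=W_{(2)}=W$ arbitrary, and $\epsilon_1=-\epsilon_2$. The non-local tail cancels identically, so $P$ is a local Dubrovin--Novikov operator, Hamiltonian as soon as $g$ is flat. Conditions (3) and (4) then hold trivially, while (1) and (2) fail for generic $W$. You must therefore either add the independence assumption explicitly, or first reduce to it by a constant linear change of the tails diagonalising the quadratic form they induce, and state (1)--(4) for the reduced family. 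The same point is hidden in your closing remark that cross terms between distinct $\alpha,\alpha'$ produce only the commutator condition.

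Second, the split you describe is less clean than claimed, which matters for sufficiency. In the bracket of two tails, $\partial_x\nu=\delta$ produces single-$\nu$ terms $\nu(x-y)\,\delta(x-z)$ whose coefficient is cubic in the affinors. It is of the type $\sum_\beta\epsilon_\beta\,(W_{(\alpha)}W_{(\beta)}u_x)\wedge(W_{(\beta)}u_x)$ (antisymmetrised in the free indices), multiplying $\epsilon_\alpha w^j_{(\alpha)}(y)$, so it lands in the same coefficient as your linear Codazzi-type terms. The $\delta''$- and $\delta'$-components of that coefficient do give (2) and then, using (2), (1). Its $\delta$-component, however, mixes curvature contracted with $W_{(\alpha)}$, derivatives of $W_{(\alpha)}$, and these cubic terms; it vanishes only after combining (1) with the Gauss equation (4) and the commutativity (3). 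Likewise, the $\nu\nu$ coefficient is the commutator $w'_{(\alpha)}[w_{(\beta)}]-w'_{(\beta)}[w_{(\alpha)}]$ of the flows $u_{t_\alpha}=W_{(\alpha)}u_x$. Its $u_{xx}$-part is (3), but its part quadratic in $u_x$ vanishes only in view of (1). So ``under (1)--(4) every coefficient vanishes, the computation is reversible'' is exactly the nontrivial verification still to be done. The Bonnet/Dirac-reduction argument is a valid alternative for sufficiency, but only once you check that the Dirac restriction of the flat bracket to the submanifold with flat normal bundle is exactly $P$; it gives nothing for necessity. (Minor: the skew-adjointness condition should read $g^{is}\Gamma^j_{sk}+g^{js}\Gamma^i_{sk}=-\partial_kg^{ij}$.)
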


\begin{remark}	
The $(1,1)$-tensor fields $W_{(\alpha)}$ can be identified with the Weingarten operators of a submanifold (of codimension $N$) with flat normal bundle in a pseudo-Euclidean space. The metric $g$ coincides with the restriction of the pseudo-Euclidean metric on the submanifold (see \cite{F}).
\end{remark}	

\medskip

\noindent
{\bf Admissibility convention for Hamiltonian densities.}
In the standard formal calculus of non-local differential functions, the operator $D_x^{-1}$ applied to a hydrodynamic density $\eta_i(u) u^i_x$ produces an element of the algebra $\mathcal A_{\rm loc}(U)$ of local differential functions if and only if the one-form $\eta = \eta_i(u)\,du^i$ is exact:
\begin{equation}\label{eq:hydrodynamic-density-exactness}
        \eta_i(u) u^i_x\in D_x\mathcal A_{\rm loc}(U)
        \quad\Longleftrightarrow\quad
        \eta\in d\mathcal O(U),
\end{equation}
where $\mathcal O(U)$ denotes the chosen local category of functions (smooth, analytic, or holomorphic).  Indeed, suppose $D_xF=\eta_i(u) u^i_x$ for some $F\in\mathcal A_{\rm loc}(U)$.  If $F$ depended on a jet variable $u^i_{(m)}$ with $m\ge1$, then $D_xF$ would contain the independent variable $u^i_{(m+1)}$, a contradiction.  Hence $F=F(u)$ and $\eta=dF$; the converse is immediate.

Applying \eqref{eq:hydrodynamic-density-exactness} to each non-local term in $P^{ij}\partial_jh$, locality of the right-hand side of $u^i_t = P^{ij}\partial_jh$ separately for each $\alpha$ amounts to requiring that the density $(W_{(\alpha)})^j_l u^l_x \partial_jh$ is a $D_x$-derivative of a local function, equivalently that the one-form $W_{(\alpha)}^* dh$ is exact: there exists $F_\alpha\in\mathcal O(U)$ with
\begin{equation}\label{eq:admissibility-exact}
        W_{(\alpha)}^*\,dh = dF_\alpha,
        \qquad\text{i.e.,}\qquad
        (W_{(\alpha)})^j_l\,\partial_jh = \partial_l F_\alpha .
\end{equation}
A density $h$ satisfying \eqref{eq:admissibility-exact} for every $\alpha=1,\ldots,N$ will be called \emph{$P$-admissible} (or simply \emph{admissible} when $P$ is fixed).  This is a locality convention introduced here: it ensures that each non-local summand of $P^{ij}\partial_jh$ reduces to a local expression.  The need for such a restriction is already present in Ferapontov's formalism: for a general non-local tail, not every local functional produces a local Hamiltonian flow, and in the one-tail case the locality condition is precisely the exactness of the corresponding one-form $W^*dh$; see \cite{F}.   The local case $N=0$ is trivially admissible.

A system of the form \eqref{SHS2} is called  \emph{Hamiltonian} with respect to a local or a non-local Poisson structure of the form \eqref{eq: Hamiltonian operator} if it can be written as
\begin{equation}\label{eq: Hamiltonian system}
u^i_t=P^{ij}\frac{\delta H[u]}{\delta u^j},
\end{equation}
where $H[u]=\int h(u)\,dx$ is a local functional of hydrodynamic type with admissible density $h$, so that $\frac{\delta H}{\delta u^j}=\partial_jh$ and each non-local term in $P^{ij}\partial_jh$ is local.

\begin{rmk}\label{rmk:curvature-after-W-Yprod}
In the non-local case the automatic curvature identity for the Levi-Civita connection of $g$ should be understood after substituting the product form for the non-local affinors. Namely, Lemma~\ref{lem:W-is-Yprod} below gives $W_{(\alpha)}=Y_{(\alpha)}\circ$; substituting this into the quadratic curvature formula \eqref{eq: R quadratic affinor}, one obtains
 \beq\label{shc-LC}
R^s_{lmi}c^j_{ks}+R^s_{lik}c^j_{ms}+R^s_{lkm}c^j_{is}=0,\qquad 
R^j_{skl}c^s_{mi}+R^j_{smk}c^s_{li}+R^j_{slm}c^s_{ki}=0
\eeq
by associativity and commutativity of $\circ$.  Thus the point requiring proof is precisely the product form of the non-local affinors.
\end{rmk}

\medskip
\noindent
Recall, in addition, that the metric defining local (see \cite{DN84}) and non-local (\cite{F}) Hamiltonian operators for a system of hydrodynamic type 
\begin{equation}
u^i_t=V^i_j(u)u^j_x,
\end{equation}
satisfies the system  (see \cite{ts86,DN})
\begin{eqnarray}
\label{DN1}
g_{ik}V^k_j&=&g_{jk}V^k_i,\\
\label{DN2}
d_{\nabla^g}V&=&0.
\end{eqnarray}

The following lemma establishes the structural form of the non-local tails $W_{(\alpha)}$ for a Hamiltonian F-system and identifies the vector fields $Y_{(\alpha)}$ behind them as symmetries.

\begin{lemma}[Form of the non-local tail of a Hamiltonian F-system]\label{lem:W-is-Yprod}
Under the standing assumptions, suppose the F-system \eqref{SHS2} is Hamiltonian with respect to an operator of the form \eqref{eq: Hamiltonian operator}, with admissible Hamiltonian density $h$, Hamiltonian functional $H=\int h(u)\,dx$, and metric $g$.  Assume that the affinors $\{W_{(\alpha)}\}_{\alpha=1}^N$ satisfy conditions \eqref{eq: 2ndDN}--\eqref{eq: R quadratic affinor}. Then:
\begin{enumerate}
\item[(i)] $[C_X,W_{(\alpha)}]=0$ for every $\alpha=1,\dots,N$;
\item[(ii)] there exist (unique on the cyclic locus) vector fields $Y_{(\alpha)}\in\Gamma(\T)$ such that
\[
Y_{(\alpha)}\circ=C_{Y_{(\alpha)}}=W_{(\alpha)};
\]
\end{enumerate}
\end{lemma}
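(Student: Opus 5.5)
The plan is to read off the $(1,1)$-tensor $V=C_X$ directly from the Hamiltonian representation \eqref{eq: Hamiltonian system}, and to show that each summand commutes with every $W_{(\beta)}$. Part (ii) will then follow from (i) by the cyclic linear algebra already used in the proof of Theorem~\ref{thm:coordinate-free-prop34-lem35}.

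\emph{Step 1: the velocity matrix.} Compute $P^{ij}\partial_jh$ using \eqref{eq: Hamiltonian operator}. The local part gives
\[
g^{ij}D_x(\partial_jh)-g^{is}\Gamma^j_{sk}u^k_x\partial_jh=g^{ij}\bigl(\partial_k\partial_jh-\Gamma^s_{kj}\partial_sh\bigr)u^k_x=(\nabla^g{}^i\nabla^g_kh)\,u^k_x .
\]
By admissibility \eqref{eq:admissibility-exact}, $(W_{(\alpha)})^j_lu^l_x\partial_jh=D_xF_\alpha$. Hence $D_x^{-1}$ returns $F_\alpha(u)$, up to an additive constant $c_\alpha$ that will play no role. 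Therefore
\[
X\circ=V=\mathcal H+\sum_{\alpha}\epsilon_\alpha (F_\alpha+c_\alpha)W_{(\alpha)},\qquad \mathcal H^i_k:=g^{ij}\nabla^g_j\nabla^g_kh .
\]
This step requires some care with the formal calculus of $D_x^{-1}$. The resulting identity of $(1,1)$-tensors is forced, since both sides multiply $u_x$ and the coefficients depend only on $u$.

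\emph{Step 2: $[\mathcal H,W_{(\beta)}]=0$.} This is the key point, and I expect it to be the main obstacle. Differentiate the admissibility relation $(W_{(\beta)})^j_l\partial_jh=\partial_lF_\beta$ covariantly with $\nabla^g$:
\[
(\nabla^g_kW_{(\beta)})^j_l\,\partial_jh+(W_{(\beta)})^j_l\nabla^g_k\nabla^g_jh=\nabla^g_k\nabla^g_lF_\beta .
\]
The right-hand side is symmetric in $k,l$ because $\nabla^g$ is torsionless. The first term on the left is symmetric in $k,l$ by \eqref{eq: 2ndDN}, since $d_{\nabla^g}W=0$ means exactly that $\nabla^g_kW^j_l$ is symmetric in $k,l$. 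Hence, writing $\mathrm{Hess}\,h$ for the symmetric covariant Hessian, the tensor $(\mathrm{Hess}\,h)_{kj}(W_{(\beta)})^j_l$ is symmetric in $k,l$. In matrix form this reads $\mathrm{Hess}\,h\,W=W^T\,\mathrm{Hess}\,h$.

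Now \eqref{eq: 1stDN} says $gW=W^Tg$, i.e. $W^T=gWg^{-1}$. Substituting gives $\mathrm{Hess}\,h\,W=gWg^{-1}\mathrm{Hess}\,h$, and multiplying on the left by $g^{-1}$ yields $\mathcal HW_{(\beta)}=W_{(\beta)}\mathcal H$.

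\emph{Step 3: conclusion.} The tails commute among themselves by \eqref{eq: commuting affinor}, and the scalar factors $F_\alpha+c_\alpha$ are functions. Together with Step 2, the expression for $V$ from Step 1 therefore gives $[C_X,W_{(\beta)}]=0$ for every $\beta$, which is (i).

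For (ii), on the cyclic locus $C_X$ is a cyclic endomorphism at each point, so its commutant is $\mathcal O_M[C_X]$. Thus $W_{(\alpha)}=\sum_k a_kC_X^k=C_{Y_{(\alpha)}}$ with $Y_{(\alpha)}:=W_{(\alpha)}e$. Uniqueness follows by evaluating on $e$: $C_Y=C_{Y'}$ implies $Y=C_Ye=C_{Y'}e=Y'$.
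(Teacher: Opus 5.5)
Your proposal is correct and follows the paper's proof essentially step by step. You use the same decomposition $C_X=g^{-1}\mathrm{Hess}\,h+\sum_\alpha\epsilon_\alpha F_\alpha W_{(\alpha)}$ obtained from admissibility, the same covariant differentiation of $W_{(\beta)}^*\dd h=\dd F_\beta$ combined with $d_{\nabla^g}W_{(\beta)}=0$ and $g$-self-adjointness to get $[g^{-1}\mathrm{Hess}\,h,W_{(\beta)}]=0$, and the same cyclic-commutant argument for (ii), where the additive constants $c_\alpha$ you track are harmless since $[W_{(\alpha)},W_{(\beta)}]=0$.
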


\proof
\emph{(i)} Let $h(u)$ be the Hamiltonian density, so that $H=\int h(u)\,dx$. By the admissibility convention adopted above, for every $\alpha=1,\ldots,N$ there exists a smooth function $F_\alpha(u)$ such that
\begin{equation}\label{eq:locality-W-h}
(W_{(\alpha)})^j_l\,\partial_j h = \partial_l F_\alpha,
\end{equation}
or equivalently $W_{(\alpha)}^*dh=dF_\alpha$ (this is \eqref{eq:admissibility-exact}).  Hamiltonicity of the F-system with respect to $P$ reads
\[
V^i_j u^j_x = P^{ij}\,\frac{\partial h}{\partial u^j},
\]
with $V:=C_X$. Expanding \eqref{eq: Hamiltonian operator}, using $\left(\frac{d}{dx}\right)^{-1}\!(\partial_x F_\alpha) = F_\alpha$ on each non-local summand, and matching coefficients of $u^l_x$, we get
\begin{equation}\label{eq:V-decomposition}
V^i_l = g^{ij}\,(\nabla^g_l \partial_j h) + \sum_\alpha \epsilon_\alpha\, F_\alpha\, (W_{(\alpha)})^i_l,
\end{equation}
where we have absorbed the connection term into the covariant Hessian $H_{jl} := \nabla^g_l \partial_j h = \partial_l\partial_j h - \Gamma^k_{lj}\partial_k h$, which is symmetric in $(j,l)$ by torsion-freeness of $\nabla^g$. Setting $H_g := g^{-1}H$, with components $(H_g)^i_l := g^{ij} H_{jl}$, equation \eqref{eq:V-decomposition} reads as the matrix relation
$V = H_g + \sum_\alpha \epsilon_\alpha F_\alpha W_{(\alpha)}$. We need to show that $[V, W_{(\alpha)}]=0$
for every $\alpha$.

By the commutation relations \eqref{eq: commuting affinor}, $[W_{(\alpha)}, W_{(\beta)}] = 0$ for every $\alpha$, hence
\begin{equation}\label{eq:V-Wbeta-commutator}
[V, W_{(\beta)}] = [H_g, W_{(\beta)}].
\end{equation}
It remains to show $[H_g, W_{(\beta)}] = 0$. Differentiate \eqref{eq:locality-W-h} covariantly with $\nabla^g_k$:
\[
(\nabla^g_k (W_{(\beta)})^j_l)\, \partial_j h + (W_{(\beta)})^j_l\, H_{jk} = \nabla^g_k \partial_l F_\beta.
\]
The right-hand side is the symmetric covariant Hessian of $F_\beta$, hence symmetric in $(k,l)$. By \eqref{eq: 2ndDN}, $d_{\nabla^g} W_{(\beta)} = 0$, equivalently $\nabla^g_k (W_{(\beta)})^j_l = \nabla^g_l (W_{(\beta)})^j_k$, so the first summand on the left is also symmetric in $(k,l)$. Consequently the second summand is symmetric in $(k,l)$:
\begin{equation}\label{eq:WbetaT-H-symmetric}
(W_{(\beta)})^j_l\, H_{jk} = (W_{(\beta)})^j_k\, H_{jl}.
\end{equation}
Let us spell out the matrix notation in \eqref{eq:WbetaT-H-symmetric}. We regard $W_{(\beta)}$ as the matrix whose $(i,j)$-entry is $(W_{(\beta)})^i_j$, while $H$ is the matrix with entries $H_{ij}$. Then
\[
\bigl(W_{(\beta)}^T H\bigr)_{lk}
= (W_{(\beta)})^j_l H_{jk},
\qquad
\bigl(W_{(\beta)}^T H\bigr)_{kl}
= (W_{(\beta)})^j_k H_{jl}.
\]
Thus \eqref{eq:WbetaT-H-symmetric} is precisely the equality of the $(l,k)$- and $(k,l)$-entries of $W_{(\beta)}^T H$, i.e. it says that $W_{(\beta)}^T H$ is symmetric. The transpose appears because in \eqref{eq:WbetaT-H-symmetric} the index of $W_{(\beta)}$ contracted with the first covariant index of $H$ is its upper index. Equivalently, $W_{(\beta)}$ is acting on a covariant slot, or on the one-form $dh$ in \eqref{eq:locality-W-h}, and hence through the dual map $W_{(\beta)}^*$, whose matrix is $W_{(\beta)}^T$. 
Since $H$ is itself symmetric, the symmetry of $W_{(\beta)}^T H$ is equivalent to $W_{(\beta)}^T H = H W_{(\beta)}$. The $g$-self-adjointness \eqref{eq: 1stDN}, in matrix form $W_{(\beta)}^T g = g W_{(\beta)}$, can be rewritten as $W_{(\beta)}^T = g\, W_{(\beta)}\, g^{-1}$. Substituting,
\[
g\, W_{(\beta)}\, g^{-1}\, H = H\, W_{(\beta)}
\quad\Longleftrightarrow\quad
W_{(\beta)}\, (g^{-1} H) = (g^{-1} H)\, W_{(\beta)},
\]
that is, $[H_g, W_{(\beta)}] = 0$. Combined with \eqref{eq:V-Wbeta-commutator}, this gives $[C_X, W_{(\beta)}] = 0$.

\emph{(ii)} This follows directly from the proof of Theorem \ref{thm:coordinate-free-prop34-lem35} (the same argument in the proof). 

\endproof

In the case of F-systems the condition  \eqref{DN1} reads:
\begin{equation}\label{eq: partial invariance}
g_{ij}c^j_{kl}X^l=g_{kj}c^j_{il}X^l
\end{equation}
where $X$ is the vector field defining the system \eqref{SHS2}.  Equations \eqref{eq: partial invariance} provide a \emph{partial invariance} of the metric with respect to the product, see equation \eqref{gccomp.eq}.  Under the cyclicity assumption on $X$, this partial invariance is enough to obtain the full invariance of $g$ without passing through David--Hertling coordinates.

\begin{lemma}
\label{lem:cyclic-self-adjointness}
Let $(M,\circ,e)$ be a manifold with a commutative associative product with unit $e$.  Let $X$ be a cyclic vector field and set
\[
        A:=C_X=X\circ.
\]

Let $g$ be a symmetric bilinear form on $TM$.  Assume that $A$ is self-adjoint with respect to $g$, namely
\begin{equation}\label{eq:CX-self-adjoint-g}
        g(AY,Z)=g(Y,AZ)
        \qquad \forall\,Y,Z.
\end{equation}
Then $g$ is invariant with respect to $\circ$:
\begin{equation}\label{eq:cyclic-g-invariance}
        g(W \circ Y,Z)=g(Y,W\circ Z)
        \qquad \forall\,Y,W,Z.
\end{equation}
\end{lemma}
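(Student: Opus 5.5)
The plan is to reduce invariance with respect to an arbitrary $W$ to invariance with respect to the powers of $A$, using the cyclic frame and the fact that the commutant of a cyclic endomorphism consists of polynomials in it.

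\emph{Step 1 (powers of $A$).} Iterating \eqref{eq:CX-self-adjoint-g} gives
\[
g(A^kY,Z)=g(A^{k-1}Y,AZ)=\cdots=g(Y,A^kZ)\qquad\text{for every }k\ge0,
\]
so every polynomial $p(A)$ with function coefficients is $g$-self-adjoint. The same holds for $\O$-linear combinations of such operators, since $g$ is $\O$-bilinear.

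\emph{Step 2 (every $C_W$ is such a polynomial).} Since $e,X,\dots,X^{\circ(n-1)}$ is a local frame, I will write
\[
W=\sum_{k=0}^{n-1}a_kX^{\circ k}
\]
with local functions $a_k$. By associativity, $C_{X^{\circ k}}=C_X^k=A^k$, where $X^{\circ0}=e$ and $C_e=\mathrm{Id}$. Linearity of $Z\mapsto C_Z$ then gives
\[
C_W=\sum_{k=0}^{n-1}a_kA^k.
\]
This is the same observation used in the proof of Theorem~\ref{thm:coordinate-free-prop34-lem35}.

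\emph{Step 3 (conclusion).} Combining the two steps,
\[
g(W\circ Y,Z)=\sum_k a_k\,g(A^kY,Z)=\sum_k a_k\,g(Y,A^kZ)=g(Y,W\circ Z),
\]
which is \eqref{eq:cyclic-g-invariance}.

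There is essentially no obstacle here. The only points to check are that $C_{X^{\circ k}}=A^k$ follows from associativity, via $X\circ(X^{\circ(k-1)}\circ Y)=X^{\circ k}\circ Y$ by induction, and that the expansion coefficients $a_k$ exist locally on the cyclic locus. Non-degeneracy of $g$ is not needed. As a side remark, the resulting invariant $g$ satisfies $g(Y,Z)=g(e,Y\circ Z)=\theta(Y\circ Z)$, consistent with the description of invariant metrics after \eqref{gccomp.eq}.
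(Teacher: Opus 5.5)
Your proof is correct and follows the paper's argument essentially verbatim: you expand $W$ in the cyclic frame $e,X,\dots,X^{\circ(n-1)}$ and use associativity to get $C_W=\sum_k a_kA^k$. You then conclude from the $g$-self-adjointness of the powers of $A$, exactly as the paper does.
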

\proof
Since $A=C_X$, associativity gives
\[
        A^k e=X^{\circ k}.
\]
By cyclicity, every local vector field $W$ can be written uniquely as
\[
        W=\sum_{k=0}^{n-1} a_k A^k e
        =\sum_{k=0}^{n-1} a_k X^{\circ k},
        \qquad a_k\in C^\infty(M).
\]
Consequently, the multiplication operator $C_W$ is
\begin{equation}\label{eq:CW-polynomial-in-A}
        C_W=\sum_{k=0}^{n-1} a_k C_{X^{\circ k}}=\sum_{k=0}^{n-1} a_k A^k.
\end{equation}

Since $A$ is self-adjoint with respect to $g$, every power $A^k$ is self-adjoint with respect to $g$.  
Thus every polynomial in $A$ with functional coefficients is self-adjoint with respect to $g$.  Applying this to \eqref{eq:CW-polynomial-in-A}, we obtain
\[
        g(Y\circ W,Z)=g(Y,W\circ Z).
\]
\endproof

\begin{proposition}\label{thm: system to Riemannian}
Under the standing assumptions,  suppose the system \eqref{SHS2} is Hamiltonian with respect to an operator of the form \eqref{eq: Hamiltonian operator}, with admissible Hamiltonian density $h$.  Associated to these data,  there is a canonical\footnote{The result is canonical relative to a chosen Hamiltonian operator and a chosen admissible density, not canonical for the F-system alone.} Riemannian F-manifold $(M, g, e, \circ)$ satisfying the additional condition \beq\label{eq: dnablagXprod} d_{\nabla^g}(X\circ)=0,\eeq
where $X$ is the vector field defining the system \eqref{SHS2}.
In general this Riemannian F-manifold is not equipped with a unit Killing vector field. 
\end{proposition}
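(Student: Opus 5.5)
The metric $g$ of the Hamiltonian operator \eqref{eq: Hamiltonian operator} is the natural candidate, and four things have to be checked: $g$ is invariant, $d_{\nabla^g}(X\circ)=0$, the curvature identity \eqref{Rcg.eq} holds, and $(M,\circ,e)$ is an F-manifold. The last point is part of the standing assumptions. The construction is canonical once $P$ and $h$ are fixed, because $g$ is read off directly from the leading coefficient $g^{ij}$ of $P$.

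\emph{Invariance and the differential condition.} Since the system is Hamiltonian, $V=C_X$ satisfies the Dubrovin--Novikov/Tsarev conditions \eqref{DN1}--\eqref{DN2}. Condition \eqref{DN1} is the partial invariance \eqref{eq: partial invariance}, which says that $A=C_X$ is $g$-self-adjoint. Because $X$ is cyclic, Lemma~\ref{lem:cyclic-self-adjointness} then gives full invariance \eqref{gccomp.eq}. Condition \eqref{DN2} is literally $d_{\nabla^g}(X\circ)=0$, which is \eqref{eq: dnablagXprod}. I would justify \eqref{DN1}--\eqref{DN2} in the non-local case as well. The plan is to expand $u_t=P\,dh$ as in the proof of Lemma~\ref{lem:W-is-Yprod}, obtaining $V=g^{-1}\mathrm{Hess}^g h+\sum_\alpha\epsilon_\alpha F_\alpha W_{(\alpha)}$. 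The first summand is $g$-self-adjoint because the covariant Hessian is symmetric, and each $W_{(\alpha)}$ is $g$-self-adjoint by \eqref{eq: 1stDN}; together these give \eqref{DN1}. For \eqref{DN2}, I would compute $d_{\nabla^g}$ of this expression. The terms $dF_\alpha\wedge W_{(\alpha)}$ and $F_\alpha\, d_{\nabla^g}W_{(\alpha)}$ appear; the latter vanishes by \eqref{eq: 2ndDN}. The Hessian term contributes a curvature term $R(\cdot,\cdot)\nabla h$. Substituting \eqref{eq: R quadratic affinor} and $dF_\alpha=W_{(\alpha)}^*dh$, this curvature term should cancel exactly against $\sum\epsilon_\alpha\, dF_\alpha\wedge W_{(\alpha)}$. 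This is the classical computation of Ferapontov, and I would cite it where possible rather than redo it.

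\emph{Curvature condition.} In the local case $N=0$, so $g$ is flat and \eqref{Rcg.eq} holds trivially. In the non-local case, Lemma~\ref{lem:W-is-Yprod} gives $W_{(\alpha)}=Y_{(\alpha)}\circ$. Substituting this into \eqref{eq: R quadratic affinor} yields $R^g(U,V)Z=\sum_\alpha\epsilon_\alpha\bigl(g(Y_{(\alpha)}\circ V,Z)\,Y_{(\alpha)}\circ U-g(Y_{(\alpha)}\circ U,Z)\,Y_{(\alpha)}\circ V\bigr)$, up to the index convention. Inserting $Z\mapsto X\circ W$ etc.\ into the cyclic sum of \eqref{Rcg.eq}, invariance of $g$ together with commutativity and associativity of $\circ$ should make the six terms cancel in pairs, as Remark~\ref{rmk:curvature-after-W-Yprod} indicates.

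\emph{Non-Killing remark and main obstacle.} The claim that $e$ need not be Killing needs only a remark or example: $\mathcal L_e g$ is not forced to vanish, since $g$ is only constrained by invariance and the Hamiltonian conditions. I expect the main obstacle to be the bookkeeping in the cancellation for \eqref{Rcg.eq}, namely getting the sign and index conventions of $R^{ij}_{kh}$ to match. I would first carry out this check in components for a single tail, $N=1$, and then sum over $\alpha$.
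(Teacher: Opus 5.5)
Your proposal is correct and follows essentially the same route as the paper. \eqref{DN1} together with cyclicity (Lemma~\ref{lem:cyclic-self-adjointness}) gives invariance of $g$, and \eqref{DN2} is literally \eqref{eq: dnablagXprod}. The curvature condition \eqref{Rcg.eq} follows from flatness when $N=0$, and when $N>0$ from substituting $W_{(\alpha)}=Y_{(\alpha)}\circ$ (Lemma~\ref{lem:W-is-Yprod}) into \eqref{eq: R quadratic affinor}. Your two extra details are not in the paper, which simply cites \cite{F,ts86,DN} for \eqref{DN1}--\eqref{DN2}, and both check out: the derivation of \eqref{DN1}--\eqref{DN2} from $V=g^{-1}\mathrm{Hess}^g h+\sum_\alpha\epsilon_\alpha F_\alpha W_{(\alpha)}$ in the non-local case, and the observation that invariance of $g$, not only associativity and commutativity, is what cancels the cyclic curvature sum.
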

\proof Firstly, Hamiltonianity gives
\[
        g_{ik}(X\circ)^k_j=g_{jk}(X\circ)^k_i
\]
(see \eqref{DN1}).  This is precisely the self-adjointness of $C_X=X\circ$ with respect to $g$.  Since $X$ is cyclic by the standing assumption,  Lemma ~\ref{lem:cyclic-self-adjointness} gives the full invariance of $g$ with respect to $\circ$:
\[
        g(Y\circ W,Z)=g(Y,W\circ Z)
        \qquad \forall\,Y,W,Z.
\]
Secondly, again Hamiltonianity gives $d_{\nabla^g}(X\circ)=0$ (see \eqref{DN2}).
Finally, in the local case $N=0$ the quadratic curvature formula \eqref{eq: R quadratic affinor} gives $R^g=0$, hence \eqref{Rcg.eq}.  In the non-local case, Lemma~\ref{lem:W-is-Yprod} gives $W_{(\alpha)}=Y_{(\alpha)}\circ$ for every non-local affinor; substituting this product form into \eqref{eq: R quadratic affinor} and using associativity and commutativity of $\circ$ gives \eqref{Rcg.eq}.
\endproof

\begin{theorem}\label{thm: key}
Under the standing assumptions, suppose the system \eqref{SHS2} is Hamiltonian with respect to an operator of the form \eqref{eq: Hamiltonian operator}, with admissible Hamiltonian density $h$. Associated to these data, there is a canonical (in the sense of the footnote attached to Proposition \ref{thm: system to Riemannian}) F-manifold with compatible connection $(M, \nabla, e, \circ)$, with flat unit $\nabla e=0$, satisfying the additional condition \beq\label{eq: dnablaXprod} d_{\nabla}(X\circ)=0,\eeq
where $X$ is the vector field defining \eqref{SHS2}.
\end{theorem}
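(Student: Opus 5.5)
The plan is to chain Proposition~\ref{thm: system to Riemannian} with Theorem~\ref{Compatiblefromriemann.th}. Proposition~\ref{thm: system to Riemannian} gives a canonical Riemannian F-manifold $(M,g,e,\circ)$ with $d_{\nabla^g}(X\circ)=0$; its unit need not be Killing. Applying Theorem~\ref{Compatiblefromriemann.th} to it, I take
\[
\nabla_YZ:=\nabla^g_YZ-\tfrac12(\iota_{Y\circ Z}d\theta)^\sharp-\tfrac12\bigl((\mathcal L_eg)(Y\circ Z,\cdot)\bigr)^\sharp .
\]
This is torsionless, compatible with $\circ$, and satisfies \eqref{Rc1.eq}, so $(M,\circ,e,\nabla)$ is an F-manifold with compatible connection. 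It is canonical relative to the chosen Hamiltonian operator and density, because both $g$ and the formula are. Two properties remain to be checked: $d_\nabla(X\circ)=0$ and $\nabla e=0$.

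\textbf{The condition $d_\nabla(X\circ)=0$.} Write $\nabla=\nabla^g+S$ with $S(Y,Z)=K(Y\circ Z)$, where $K$ is the $(1,1)$-tensor determined by $g(K(U),V)=-\tfrac12 d\theta(U,V)-\tfrac12(\mathcal L_eg)(U,V)$. By \eqref{eq: prop1.1},
\[
d_\nabla A=d_{\nabla^g}A+\mathcal M_AS,\qquad A=C_X,
\]
so it suffices to show $\mathcal M_AS=0$. Since
\[
(\mathcal M_AS)(Y,Z)=K(Y\circ X\circ Z)-K(Z\circ X\circ Y),
\]
this vanishes by commutativity and associativity of $\circ$. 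So the condition is automatic.

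\textbf{The condition $\nabla e=0$ (main obstacle).} We have
\[
\nabla_Ye=\nabla^g_Ye+K(Y),
\]
so the claim is $g(\nabla^g_Ye,V)=\tfrac12 d\theta(Y,V)+\tfrac12(\mathcal L_eg)(Y,V)$. I would decompose $\nabla^g e$ into its $g$-symmetric and $g$-antisymmetric parts:
\[
g(\nabla^g_Ye,V)+g(\nabla^g_Ve,Y)=(\mathcal L_eg)(Y,V),\qquad g(\nabla^g_Ye,V)-g(\nabla^g_Ve,Y)=d(e^\flat)(Y,V).
\]
The second identity uses $e^\flat=g(e,\cdot)=\theta$, which holds by invariance. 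Adding the two identities and halving gives exactly the required expression, up to the sign convention for $d\theta$, which must be tracked against \eqref{conncomp1.eq}.

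I expect this sign and convention check to be the only delicate point. If it fails as written, I would instead verify $\nabla e=0$ directly through the Koszul formula with $Z=e$, using $\mathcal L_e\circ=0$ and $g(Y,V)=\theta(Y\circ V)$.

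Once both properties hold, the uniqueness in Proposition~\ref{prop:flat-unit} identifies $\nabla$ with $\tnabla$ on the cyclic locus. This identification is what will be used afterwards to conclude \eqref{Rc1.eq} for $\tnabla$.
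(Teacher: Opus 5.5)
Your proposal is correct and follows essentially the paper's route. You chain Proposition~\ref{thm: system to Riemannian} with Theorem~\ref{Compatiblefromriemann.th}. Your $\mathcal M_AS=0$ is exactly the paper's cancellation of the $Y\circ(X\circ Z)$ and $Z\circ(X\circ Y)$ correction terms, and your split $g(\nabla^g_Ye,V)=\tfrac12(\mathcal L_eg)(Y,V)+\tfrac12 d\theta(Y,V)$ is the Koszul formula with $e$ inserted, which is what the paper uses. The sign check you flag does go through with the paper's conventions $d\theta(U,V)=U\theta(V)-V\theta(U)-\theta([U,V])$ and $(\iota_Ud\theta)(V)=d\theta(U,V)$, so $g(\nabla_Ye,V)=0$ exactly.
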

\proof By Proposition \ref{thm: system to Riemannian} associated to \eqref{SHS2} there is a Riemannian F-manifold satisfying the additional condition \eqref{eq: dnablagXprod}.  By Theorem \ref{Compatiblefromriemann.th}, there is a unique F-manifold associated to this Riemannian F-manifold, and we just need to check that it satisfies \eqref{eq: dnablaXprod}.
This holds because:
\[d_{\nabla}(X\circ)(Y,Z)=\nabla_Y(X\circ Z)-\nabla_Z(X\circ Y)-X\circ[Y,Z]=\]
\[\cancel{\nabla^g_{Y}(X\circ Z)-\nabla^g_{Z}(X\circ Y)-X\circ [Y, Z]}\]
\[-\frac{1}{2}\left( i_{Y\circ (X\circ Z)}d\theta\right)^{\sharp}-\frac{1}{2}\left((\mathcal{L}_eg)(Y\circ (X\circ Z), \cdot) \right)^{\sharp}\]
\[ +\frac{1}{2}\left( i_{Z\circ (X\circ Y)}d\theta\right)^{\sharp}+\frac{1}{2}\left((\mathcal{L}_eg)(Z\circ (X\circ Y), \cdot) \right)^{\sharp}=0 \]
where the last two rows sum to zero since $\circ$ is associative and the first row is zero because \eqref{eq: dnablagXprod} holds. 

\noindent To prove $\nabla e=0,$ observe that for all sections $X, Y \in \T$
\[ g(Y, \nabla_Xe)=g(Y, \nabla^g_X e)-\frac{1}{2}g(Y, (\iota_{X}d\theta)^{\sharp})-\frac{1}{2}g(Y, ((\mathcal{L}_e g)(X,\cdot))^{\sharp})=\]
\[=g(Y, \nabla^g_X e)-\frac{1}{2}(d\theta)(X,Y)-\frac{1}{2}(\mathcal{L}_e g)(X,Y).\]
Using the Koszul formula for the Levi-Civita connection
\[ g(Y,  \nabla^g_Xe)=\frac{1}{2}\left[X(\theta(Y))+e(g(X,Y))-Y\theta(X)+g([X,e],Y)-g([e, Y], X)-\theta([X,Y])\right]\]
and the formula for $d\theta$ and substituting in the previous expression,  all the terms involving $\theta$ cancel out and we are left with
\[ g(Y, \nabla_Xe)=\cancel{\frac{1}{2}\left[e(g(X,Y))+g([X,e],Y)-g([e,Y],X)\right]}-\cancel{\frac{1}{2}(\mathcal{L}_e g)(X,Y)}=0\]
since 
\[ e(g(X,Y))=\mathcal{L}_e(g(X,Y))=(\mathcal{L}_e g)(X,Y)+g([e,X],Y)+g(X,[e,Y]).\]
Since $g$ is non-degenerate and $X,Y$ are arbitrary we get the claim. 
\endproof

\begin{theorem}
Under the standing assumptions,  if the system \eqref{SHS2} is Hamiltonian with respect to an operator of the form \eqref{eq: Hamiltonian operator}, with admissible Hamiltonian density, then it is integrable. Moreover, each vector field $Y_{(\alpha)}$ appearing in the non-local part of the operator defines a symmetry of \eqref{SHS2}: $d_{\tnabla}(Y_{(\alpha)}\circ)=0.$
\end{theorem}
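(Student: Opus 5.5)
The plan is to combine Theorem~\ref{thm: key} with the uniqueness part of Theorem~\ref{thm: existence and uniqueness of cyclic connections}. Theorem~\ref{thm: key} provides, from the Hamiltonian data, an F-manifold with compatible connection $(M,\nabla,e,\circ)$ with $\nabla e=0$ and $d_{\nabla}(X\circ)=0$. In particular $\nabla$ is torsionless, since it was built in Theorem~\ref{Compatiblefromriemann.th} from $\nabla^g$ by a symmetric correction. By Proposition~\ref{prop:flat-unit}, on the cyclic locus of $X$ there is a unique torsionless connection with flat unit satisfying $d_{\nabla}(X\circ)=0$. Hence $\nabla=\tnabla$ locally.

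Since $(M,\circ,e,\nabla)$ is an F-manifold with compatible connection, its curvature satisfies \eqref{Rc1.eq}. Therefore $R^{\tnabla}$ satisfies \eqref{Rc1.eq}, which is \eqref{eq:3RC-Zinside}, equivalently \eqref{rc-intri} by the footnote to Theorem~\ref{thm:main-restated}. This is exactly the integrability condition for the F-system. If desired, I would add that Theorem~\ref{thm:main-restated} and Corollary~\ref{cor:product-noncharacteristic-symmetries} then supply, in the analytic setting, the full family of commuting symmetries.

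For the second claim, Lemma~\ref{lem:W-is-Yprod} gives $W_{(\alpha)}=Y_{(\alpha)}\circ$ with $[C_X,W_{(\alpha)}]=0$. Condition \eqref{eq: 2ndDN} gives $d_{\nabla^g}(Y_{(\alpha)}\circ)=0$. I would then transfer this to $\nabla=\tnabla$ exactly as in the proof of Theorem~\ref{thm: key}. Writing $\nabla_UV=\nabla^g_UV+K(U\circ V)$ for the $(1,1)$-tensor $K$ read off from \eqref{conncomp1.eq}, one gets
\[
d_\nabla(Y\circ)(U,V)=d_{\nabla^g}(Y\circ)(U,V)+K(U\circ Y\circ V)-K(V\circ Y\circ U).
\]
The last two terms cancel by commutativity and associativity. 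Hence $d_{\tnabla}(Y_{(\alpha)}\circ)=d_{\nabla}(Y_{(\alpha)}\circ)=0$, so each $Y_{(\alpha)}$ is a symmetry. By Theorem~\ref{thm:coordinate-free-prop34-lem35}, the flow $u_\tau=Y_{(\alpha)}\circ u_x$ therefore commutes with \eqref{SHS2}.

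The main point needing care is the identification $\nabla=\tnabla$. I must make sure $\nabla$ is genuinely torsionless and that everything is restricted to the open cyclic locus where uniqueness holds; both are already secured by the cited results. The local case $N=0$ is included, with no tails to treat. Beyond that, the argument is a direct assembly of earlier results.
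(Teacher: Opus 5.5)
Your proposal is correct and follows essentially the same route as the paper. You build the compatible connection $\nabla$ of Theorem~\ref{thm: key}, identify it with $\tnabla$ by uniqueness so that $\tnabla$ inherits \eqref{Rc1.eq}, and then transfer \eqref{eq: 2ndDN} via $d_{\nabla}(Y\circ)=d_{\nabla^g}(Y\circ)$, with the same cancellation $K(U\circ Y\circ V)-K(V\circ Y\circ U)=0$. The only minor differences are that you invoke the sharper uniqueness of Proposition~\ref{prop:flat-unit}, which does not need the product symmetry, and that you write out the $d_\nabla(Y\circ)$ identity explicitly, whereas the paper only states it.
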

\begin{proof}
Via Theorem \ref{thm: key}, associated to the system of hydrodynamic type \eqref{SHS2} there is an F-manifold with compatible connection $(M,\nabla, e, \circ)$, with the torsionless connection $\nabla$ satisfying the additional conditions $d_\nabla(X\circ)=0$ and $\nabla e=0$, where $X$ is the vector field defining \eqref{SHS2}. Since $\nabla$ is compatible with $\circ$ (Definition \ref{Fmancon.def}), it also satisfies the symmetry $(\nabla_W\circ)(Y,Z)=(\nabla_Y\circ)(W,Z)$. According to Theorem \ref{thm: existence and uniqueness of cyclic connections}, associated to \eqref{SHS2} there is a unique torsionless connection $\tnabla$ satisfying $d_{\tnabla}(X\circ)=0$, $(\tnabla_W\circ)(Y,Z)=(\tnabla_Y\circ)(W,Z)$, and $\tnabla e=0$. Therefore, the connection of the F-manifold $(M,\nabla, e, \circ)$ is equal to $\tnabla$. Since $\nabla$ satisfies the additional condition \eqref{Rc1.eq}, then so does $\tnabla$. But this is exactly the integrability condition for the system \eqref{SHS2}. The last statement of the theorem follows from \eqref{eq: 2ndDN} and from the identity
\[d_{\nabla^g}(Y\circ)=d_{\nabla}(Y\circ),\qquad \forall Y,\]
that follows immediately from the definition of $\nabla$ in terms of $\nabla^g$.
\end{proof}

\section{Conservative F-systems are integrable and integrable F-systems are conservative}

\begin{definition}\label{def:conservative}
The F-system 
\beq\label{Fsys1}
u^i_t=(X_0\circ u_x)^i=(C_{X_0} u_x)^i=c^i_{jk}X^j_0u^k_x,\qquad i\inw,
\eeq
is called \emph{locally conservative} if there exist
$n$ local density-flux pairs of conservation laws
\[
        (h^1,F^1),\ldots,(h^n,F^n),
\]
with independent differentials
\[
        \dd h^1\wedge\cdots\wedge \dd h^n\neq 0,
\]
such that every smooth solution of \eqref{Fsys1} satisfies
\begin{equation}\label{eq:conservation-laws}
        \partial_t h^a(u)=\partial_x F^a(u),
        \qquad a=1,\ldots,n.
\end{equation}
In particular,  and this is the form used below, the one-forms
$C_{X_0}^*\dd h^a$ are exact:
\begin{equation}\label{eq:flux-differential}
        C_{X_0}^*\dd h^a=\dd F^a,
        \qquad a=1,\ldots,n.
\end{equation}
\end{definition}

\subsection{Two elementary lemmas}
Assuming that $X_0$ is cyclic, we prove two elementary lemmas.
The first one says that a conservation law produces, by covariant differentiation with respect to the connection associated with $X_0$, a symmetric bilinear form which is invariant with respect to the F-manifold product (think of $\omega$ as $dh^a$ for some $a=1,\dots, n$ where $h^a$ is one of the densities of conservation laws).

\begin{lemma}\label{lem:invariant-Hessian}
Let $\omega$ be a closed one-form such that $C_{X_0}^*\omega$ is closed and $\tnabla$ is the connection associated with the cyclic vector field $X_0$. Put
\begin{equation}\label{eq:Hessian}
        \calH(X,Y):=(\tnabla_X\omega)(Y).
\end{equation}
Then $\calH$ is symmetric and invariant with respect to $\circ$, namely
\begin{equation}\label{eq:H-invariant}
        \calH(X\circ Y,Z)=\calH(X,Y\circ Z)
\end{equation}
for all local vector fields $X,Y,Z$.
\end{lemma}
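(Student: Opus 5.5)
Symmetry of $\calH$ follows at once from torsion-freeness of $\tnabla$ and closedness of $\omega$. For a torsionless connection,
\[
\dd\omega(X,Y)=(\tnabla_X\omega)(Y)-(\tnabla_Y\omega)(X),
\]
so $\dd\omega=0$ gives $\calH(X,Y)=\calH(Y,X)$.

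The substantive part is invariance, and here I would use the second hypothesis together with cyclicity, exactly as in Lemma~\ref{lem:cyclic-self-adjointness}. Closedness of $C_{X_0}^*\omega$ (with $\tnabla$ torsionless) says that $(X,Y)\mapsto(\tnabla_X(C_{X_0}^*\omega))(Y)$ is symmetric. I would expand this by the Leibniz rule:
\[
(\tnabla_X(C_{X_0}^*\omega))(Y)=\calH(X,X_0\circ Y)+\omega\bigl((\tnabla_X C_{X_0})(Y)\bigr).
\]
The condition $d_{\tnabla}(X_0\circ)=0$ says precisely that $(\tnabla_X C_{X_0})(Y)=(\tnabla_Y C_{X_0})(X)$. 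Hence the second term is symmetric in $(X,Y)$. It follows that $\calH(X,X_0\circ Y)=\calH(Y,X_0\circ X)$. Combined with the symmetry of $\calH$, this says that $A=C_{X_0}$ is self-adjoint with respect to $\calH$:
\[
\calH(AX,Y)=\calH(X,AY).
\]

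Now $\calH$ is a symmetric bilinear form for which the cyclic operator $A$ is self-adjoint. Lemma~\ref{lem:cyclic-self-adjointness} does not need non-degeneracy, so it applies verbatim. Writing any $W=\sum a_kX_0^{\circ k}$ gives $C_W=\sum a_kA^k$, which is self-adjoint, so
\[
\calH(W\circ Y,Z)=\calH(Y,W\circ Z).
\]
Setting $W=X$ and using commutativity, this reads $\calH(X\circ Y,Z)=\calH(Y,X\circ Z)=\calH(X,Y\circ Z)$. The last equality holds because $\calH(Y,X\circ Z)=\calH(X\circ Z,Y)=\calH(Z,X\circ Y)$ by the same invariance with $W=X$, and then symmetry and commutativity rearrange the arguments. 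This is exactly \eqref{eq:H-invariant}.

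The main point needing care is the Leibniz expansion of $\tnabla(C_{X_0}^*\omega)$ and the identification of the leftover term with $d_{\tnabla}(X_0\circ)$. Getting the transpose/slot placement right is essential there: the term must be $\omega$ applied to $(\tnabla_X C_{X_0})(Y)$, and not some other contraction. Beyond that, everything reduces to results already proved: Theorem~\ref{thm: existence and uniqueness of cyclic connections} supplies $\tnabla$ torsionless with $d_{\tnabla}(X_0\circ)=0$, and Lemma~\ref{lem:cyclic-self-adjointness} supplies the passage from self-adjointness of $A$ to full invariance.
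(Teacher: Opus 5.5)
Your proof is correct and follows the paper's argument step for step: symmetry from torsion-freeness and $\dd\omega=0$, then self-adjointness of $C_{X_0}$ with respect to $\calH$ from the closedness of $C_{X_0}^*\omega$ together with $d_{\tnabla}C_{X_0}=0$ (the paper expands $\dd(C_{X_0}^*\omega)$ directly rather than $\tnabla(C_{X_0}^*\omega)$, which is the same computation), and finally cyclicity to write every $C_W$ as a polynomial in $C_{X_0}$. One small point: your justification of the last equality does not close as written, because symmetry and commutativity alone only take $\calH(Z,X\circ Y)$ back to $\calH(X\circ Y,Z)$; instead, setting $W=Y$ in the invariance you already proved gives $\calH(Y\circ X,Z)=\calH(X,Y\circ Z)$ at once.
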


\begin{proof}
Since $\tnabla$ is torsionless and $\dd\omega=0$, we have
\[
        \calH(X,Y)-\calH(Y,X)
        =\dd\omega(X,Y)=0.
\]
Thus $\calH$ is symmetric.

Now use the closedness of $C_{X_0}^*\omega$ to prove invariance with respect to $\circ$.  For arbitrary vector fields $X,Y$,
\begin{align}
0
&=\dd(C_{X_0}^*\omega)(X,Y) = X((C_{X_0}^*\omega)(Y))-  Y((C_{X_0}^*\omega)(X)) -  (C_{X_0}^*\omega)([X,Y])                                          \\
&=X(\omega(C_{X_0}Y))-  Y(\omega(C_{X_0}X)) -  \omega(C_{X_0}[X,Y]) \\\label{eq:3.5}
&=(\tnabla_X\omega)(C_{X_0}Y)-(\tnabla_Y\omega)(C_{X_0}X)
  +\omega\bigl((\tnabla_X C_{X_0})Y-(\tnabla_Y C_{X_0})X\bigr),
\end{align}
where in the last equality we used $X(\omega(C_{X_0}Y))=(\tnabla_X \omega)(C_{X_0}Y)+\omega((\tnabla_XC_{X_0})Y)+\omega(C_{X_0}\tnabla_XY)$, similarly for $Y(\omega(C_{X_0}X))$ and the fact that $\tnabla$ is torsionless. 
 The tensor $C_{X_0}$ is a vector-valued one-form, and therefore
\[
(d_{\tnabla}C_{X_0})(X,Y)
=\tnabla_X(C_{X_0}Y)-\tnabla_Y(C_{X_0}X)-C_{X_0}[X,Y].
\]
Since $\tnabla$ is torsionless, $[X,Y]=\tnabla_XY-\tnabla_YX$, and hence, expanding and simplifying one gets
\[
(d_{\tnabla}C_{X_0})(X,Y)
=(\tnabla_X C_{X_0})Y-(\tnabla_Y C_{X_0})X.
\]
Thus the last term in  formula \eqref{eq:3.5} is precisely
\[
\omega\bigl((d_{\tnabla}C_{X_0})(X,Y)\bigr),
\]
and it vanishes by $d_{\tnabla}C_{X_0}=0$.  Hence from the first part of \eqref{eq:3.5} and the definition of $\calH$ we get:
\begin{equation}\label{eq:H-CX-symmetric}
        \calH(X,C_{X_0}Y)=\calH(Y,C_{X_0}X).
\end{equation}
Since $\calH$ is symmetric, \eqref{eq:H-CX-symmetric} says that $C_{X_0}$ is self-adjoint with respect to $\calH$:
\begin{equation}\label{eq:CX-self-adjoint}
        \calH(X,C_{X_0}Y)=\calH(C_{X_0}X,Y).
\end{equation}
Therefore every power $C_{X_0}^k$ is self-adjoint.  By the cyclicity assumption, every multiplication operator $C_Y$ is a polynomial in $C_{X_0}$ with functional coefficients.  It follows that every $C_Y$ is self-adjoint with respect to $\calH$:
\begin{equation}\label{eq:CY-self-adjoint}
        \calH(X,Y\circ Z)=\calH(Y\circ X,Z),
\end{equation}
which is precisely \eqref{eq:H-invariant}.
\end{proof}

The second lemma turns an invariant Hessian (that is $\calH$) into the 3RC-curvature identity tested against the corresponding conservation-law one-form.

\begin{lemma}\label{lem:curvature-pairing}
Let $\nabla$ be a torsionless connection compatible with $\circ$ in the sense
\begin{equation}\label{eq:nabla-compatible-general}
        (\nabla_X\circ)(Y,Z)=(\nabla_Y\circ)(X,Z).
\end{equation}
Let $\omega$ be a one-form and put $\calH:=\nabla\omega$.  Assume that $\calH$ is invariant with respect to $\circ$.  Then
\begin{equation}\label{eq:curvature-pairing}
\omega\Bigl(
        R^{\nabla}(X,Y)(Z\circ W)
        +R^{\nabla}(Y,Z)(X\circ W)
        +R^{\nabla}(Z,X)(Y\circ W)
\Bigr)=0
\end{equation}
for all local vector fields $X,Y,Z,W$.
\end{lemma}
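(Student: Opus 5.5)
We compute the second covariant derivative of $\omega$ and use the Ricci identity to turn curvature into antisymmetrised derivatives of $\calH$. Concretely, for any one-form, $(\nabla^2_{X,Y}\omega-\nabla^2_{Y,X}\omega)(V)=-\omega(R^{\nabla}(X,Y)V)$ (with torsion zero), where $\nabla^2_{X,Y}\omega(V)=(\nabla_X\calH)(Y,V)$. So
\[
-\omega\bigl(R^{\nabla}(X,Y)V\bigr)=(\nabla_X\calH)(Y,V)-(\nabla_Y\calH)(X,V).
\]
Apply this with $V=Z\circ W$, $V=X\circ W$, $V=Y\circ W$ respectively and sum cyclically. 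The claim reduces to showing that
\[
\Sigma:=(\nabla_X\calH)(Y,Z\circ W)-(\nabla_Y\calH)(X,Z\circ W)+\text{cyclic}(X,Y,Z)
\]
vanishes.

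The key step is to differentiate the invariance identity $\calH(A\circ B,C)=\calH(A,B\circ C)$ covariantly. Invariance together with symmetry of $\calH$ (note $\calH(A,B)=\calH(A\circ B,e)$, and in the present application $\calH$ is symmetric; in general I would first observe that the statement only uses the invariance relation as written) gives, after applying $\nabla_U$,
\[
(\nabla_U\calH)(A\circ B,C)+\calH\bigl((\nabla_U\circ)(A,B),C\bigr)=(\nabla_U\calH)(A,B\circ C)+\calH\bigl(A,(\nabla_U\circ)(B,C)\bigr).
\]
Writing $T(U,A,B)=(\nabla_U\circ)(A,B)$, which is totally symmetric by \eqref{eq:nabla-compatible-general} and commutativity, I will use this to move the product inside the $\nabla\calH$ terms. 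The aim is to rewrite $(\nabla_X\calH)(Y,Z\circ W)$ as $(\nabla_X\calH)(Y\circ Z,W)$ plus a correction of the form $\calH(T(X,Y,Z),W)-\calH(Y,T(X,Z,W))$.

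Then $\Sigma$ splits into two pieces. The first is
\[
\sum_{\mathrm{cyc}}\bigl[(\nabla_X\calH)(Y\circ Z,W)-(\nabla_Y\calH)(X\circ Z,W)\bigr].
\]
Using the commutativity $Y\circ Z=Z\circ Y$, the six terms should cancel in pairs: $(\nabla_X\calH)(Y\circ Z,W)$ against $-(\nabla_X\calH)(Z\circ Y,W)$, which comes from the cyclic shift. The second piece consists of the correction terms:
\[
\sum_{\mathrm{cyc}}\bigl[\calH(T(X,Y,Z),W)-\calH(T(Y,X,Z),W)\bigr]-\sum_{\mathrm{cyc}}\bigl[\calH(Y,T(X,Z,W))-\calH(X,T(Y,Z,W))\bigr].
\]
The first sum vanishes by total symmetry of $T$. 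In the second sum, $T(X,Z,W)=T(Z,X,W)$, so its cyclic sum telescopes: $\calH(Y,T(X,Z,W))$ cancels with $-\calH(Y,T(Z,X,W))$, which appears in the shifted term.

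The main obstacle is bookkeeping. One must choose correctly which slot of $\calH$ the product gets moved into, so that the cyclic sum cancels exactly. If the naive choice fails, I would instead symmetrise using $\calH(A\circ B,C)=\calH(e,A\circ B\circ C)$. This reduces everything to the single one-form $\beta:=\calH(e,\cdot)$ and the tensor $T$, where total symmetry makes the cancellation transparent.
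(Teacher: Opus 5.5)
Your proposal is correct, and your primary route already closes, so the fallback is not needed. With $U=X$, $A=Y$, $B=Z$, $C=W$, your differentiated invariance identity gives $(\nabla_X\calH)(Y,Z\circ W)=(\nabla_X\calH)(Y\circ Z,W)+\calH(T(X,Y,Z),W)-\calH(Y,T(X,Z,W))$. In the cyclic sum three cancellations then occur:
\begin{itemize}
\item the leading terms cancel in pairs by $Y\circ Z=Z\circ Y$;
\item each difference $\calH(T(X,Y,Z),W)-\calH(T(Y,X,Z),W)$ vanishes by compatibility;
\item $\calH(Y,T(X,Z,W))$ cancels against $-\calH(Y,T(Z,X,W))$ from the shifted summand.
\end{itemize}
Your hedge about the symmetry of $\calH$ is also unnecessary. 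The three lower-order terms that arise when you differentiate $\calH(A\circ B,C)=\calH(A,B\circ C)$ cancel using invariance exactly as written.

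The paper shares your Ricci-identity step but organises the algebra as in your fallback. It sets $\theta:=\calH(e,\cdot)$, so that $\calH(Y,Z)=\theta(Y\circ Z)$ and $(\nabla_X\calH)(Y,Z)=(\nabla_X\theta)(Y\circ Z)+\theta\bigl((\nabla_X\circ)(Y,Z)\bigr)$. Antisymmetrising in $X,Y$ removes the $\theta(\nabla\circ)$ term at once by compatibility. The cyclic sum then reduces to $\nabla\theta$ evaluated on triple products, which cancel by commutativity and associativity.

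The paper's encoding is more compact, since all compatibility terms disappear in one step. Your version avoids the auxiliary one-form, at the cost of tracking two families of correction terms. As a minor bonus, it never uses the unit or the associativity of $\circ$, only commutativity, invariance of $\calH$, compatibility and torsion-freeness.
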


\begin{proof}
We give the details of the two steps used in the proof: first the invariant form
$\calH$ is written through a one-form, and then the skew part of its covariant
derivative is identified with the curvature applied to $\omega$.

Define the one-form
\begin{equation}\label{eq:theta}
        \theta(Y):=\calH(e,Y).
\end{equation}
Since $e$ is the unit and $\calH$ is invariant with respect to the product,
\begin{equation}\label{eq:H-theta}
        \calH(Y,Z)=\theta(Y\circ Z)
\end{equation}
for all local vector fields $Y,Z$.  Indeed,
\[
        \theta(Y\circ Z)
        =\calH(e,Y\circ Z)
        =\calH(e\circ Y,Z)
        =\calH(Y,Z).
\]
Thus the tensor $\calH$ is completely encoded by the one-form $\theta$ and the
multiplication $\circ$.

We now differentiate \eqref{eq:H-theta}.  We get:
\begin{align*}
(\nabla_X\calH)(Y,Z)
&=X\bigl(\calH(Y,Z)\bigr)-\calH(\nabla_XY,Z)-\calH(Y,\nabla_XZ)  \\
&=X\bigl(\theta(Y\circ Z)\bigr)
  -\theta((\nabla_XY)\circ Z)-\theta(Y\circ \nabla_XZ).
\end{align*}
On the other hand,
\[
(\nabla_X\theta)(Y\circ Z)
= X\bigl(\theta(Y\circ Z)\bigr)-\theta\bigl(\nabla_X(Y\circ Z)\bigr).
\]
Solving for $ X\bigl(\theta(Y\circ Z)\bigr)$ in the last formula and substituting in  the previous one gives
\begin{align}
(\nabla_X\calH)(Y,Z)
&=(\nabla_X\theta)(Y\circ Z) 
  +\theta\bigl(\nabla_X(Y\circ Z)-(\nabla_XY)\circ Z-Y\circ \nabla_XZ\bigr) \nonumber\\
&=(\nabla_X\theta)(Y\circ Z)+\theta\bigl((\nabla_X\circ)(Y,Z)\bigr).
\label{eq:cov-der-H-detailed}
\end{align}
This identity is only the covariant derivative of \eqref{eq:H-theta}; no further
assumption is being made.

Define
\begin{equation}\label{eq:K-definition}
        K(X,Y,Z):=(\nabla_X\calH)(Y,Z)-(\nabla_Y\calH)(X,Z).
\end{equation}
Using \eqref{eq:cov-der-H-detailed} twice, we obtain
\begin{align*}
K(X,Y,Z)
&=(\nabla_X\theta)(Y\circ Z)-(\nabla_Y\theta)(X\circ Z) \\
&\quad +\theta\bigl((\nabla_X\circ)(Y,Z)-(\nabla_Y\circ)(X,Z)\bigr).
\end{align*}
The last term vanishes by the compatibility condition \eqref{eq:nabla-compatible-general}.  Hence
\begin{equation}\label{eq:K-formula-expanded}
        K(X,Y,Z)
        =(\nabla_X\theta)(Y\circ Z)-(\nabla_Y\theta)(X\circ Z).
\end{equation}

We now apply \eqref{eq:K-formula-expanded} to the three triples
\[
        (X,Y,Z\circ W),\qquad (Y,Z,X\circ W),\qquad (Z,X,Y\circ W).
\]
This gives
\begin{align*}
K(X,Y,Z\circ W)
&=(\nabla_X\theta)(Y\circ Z\circ W)
  -(\nabla_Y\theta)(X\circ Z\circ W),\\
K(Y,Z,X\circ W)
&=(\nabla_Y\theta)(Z\circ X\circ W)
  -(\nabla_Z\theta)(Y\circ X\circ W),\\
K(Z,X,Y\circ W)
&=(\nabla_Z\theta)(X\circ Y\circ W)
  -(\nabla_X\theta)(Z\circ Y\circ W).
\end{align*}
Since the product is commutative and associative,  the six terms above
cancel in pairs, and we obtain
\begin{equation}\label{eq:K-cyclic-expanded}
        K(X,Y,Z\circ W)+K(Y,Z,X\circ W)+K(Z,X,Y\circ W)=0.
\end{equation}

It remains to identify $K$ with curvature evaluated against $\omega$.   Recall that 
\[
        \calH(Y,Z)=(\nabla_Y\omega)(Z),
\]
and it is assumed to be invariant with respect to the product $\circ$
Therefore, using the definition of the covariant derivative,
\begin{align}
(\nabla_X\calH)(Y,Z)
&=X\bigl(\calH(Y,Z)\bigr)-\calH(\nabla_XY,Z)-\calH(Y,\nabla_XZ) \nonumber\\
&=X\bigl((\nabla_Y\omega)(Z)\bigr)
  -(\nabla_{\nabla_XY}\omega)(Z)
  -(\nabla_Y\omega)(\nabla_XZ).      \label{eq:auxiliary1}
\end{align}
On the other hand, applying the definition of the covariant derivative of a one-form to the one-form $\nabla_Y\omega$, we get
\[
        (\nabla_X(\nabla_Y\omega))(Z)
        =X\bigl((\nabla_Y\omega)(Z)\bigr)
        -(\nabla_Y\omega)(\nabla_XZ).
\]
Solving for $(\nabla_Y\omega)(\nabla_XZ)$ in the last equation and substituting in\eqref{eq:auxiliary1} gives
\begin{equation}\label{eq:cov-H-second-omega-X}
        (\nabla_X\calH)(Y,Z)
        =(\nabla_X(\nabla_Y\omega))(Z)
        -(\nabla_{\nabla_XY}\omega)(Z).
\end{equation}
Interchanging $X$ and $Y$ gives similarly
\begin{equation}\label{eq:cov-H-second-omega-Y}
        (\nabla_Y\calH)(X,Z)
        =(\nabla_Y(\nabla_X\omega))(Z)
        -(\nabla_{\nabla_YX}\omega)(Z).
\end{equation}
Subtracting \eqref{eq:cov-H-second-omega-Y} from \eqref{eq:cov-H-second-omega-X}, and using that $\nabla$ is torsionless we obtain
\begin{align}
K(X,Y,Z)
&=(\nabla_X\calH)(Y,Z)-(\nabla_Y\calH)(X,Z) \nonumber\\
&=\bigl(
        \nabla_X\nabla_Y\omega
        -\nabla_Y\nabla_X\omega
        -\nabla_{[X,Y]}\omega
  \bigr)(Z).                         \label{eq:K-curvature-on-forms}
\end{align}

We now identify the expression in \eqref{eq:K-curvature-on-forms} with the curvature on vector fields.  
Applying the curvature commutator to the scalar function $\omega(Z)$, and using the Leibniz rule  gives, after a straightforward computation:
\begin{equation}\label{eq:curvature-pairing-zero}
        0
        =R^\nabla(X,Y)\bigl(\omega(Z)\bigr)
        =\bigl(R^\nabla(X,Y)\omega\bigr)(Z)
         +\omega\bigl(R^\nabla(X,Y)Z\bigr).
\end{equation}
Hence \eqref{eq:curvature-pairing-zero} gives
\begin{equation}\label{eq:one-form-curvature}
\bigl(\nabla_X\nabla_Y\omega-\nabla_Y\nabla_X\omega-\nabla_{[X,Y]}\omega\bigr)(Z)
        =-\omega\bigl(R^\nabla(X,Y)Z\bigr).
\end{equation}

Combining \eqref{eq:K-curvature-on-forms} with \eqref{eq:one-form-curvature}, we get
\begin{equation}\label{eq:Ricci-identity-expanded}
        K(X,Y,Z)=-\omega\bigl(R^\nabla(X,Y)Z\bigr).
\end{equation}
Substituting \eqref{eq:Ricci-identity-expanded} into the cyclic identity
\eqref{eq:K-cyclic-expanded} gives
\[
\omega\Bigl(
        R^{\nabla}(X,Y)(Z\circ W)
        +R^{\nabla}(Y,Z)(X\circ W)
        +R^{\nabla}(Z,X)(Y\circ W)
\Bigr)=0,
\]
which is \eqref{eq:curvature-pairing}.
\end{proof}

\subsection{The conservation-law theorem}
\begin{theorem}\label{thm:conservative-integrable}
Let $(M,\circ,e)$ be an F-manifold and consider the F-system
\beq\label{Fsys2}
u^i_t=(X_0\circ u_x)^i=c^i_{jk}X_0^ju^k_x,\qquad i\inw.
\eeq
Assume that $X_0$ is cyclic, and let $\tnabla$ be the associated torsionless connection.  If the F-system is locally conservative in the sense of Definition \ref{def:conservative}, then $\tnabla$ satisfies
\begin{equation}\label{eq:main-curvature-condition}
        R^{\tnabla}(X,Y)(Z\circ W)
        +R^{\tnabla}(Y,Z)(X\circ W)
        +R^{\tnabla}(Z,X)(Y\circ W)=0
\end{equation}
for all local vector fields $X,Y,Z,W$.  Conversely, in the analytic setting, if $\tnabla$ satisfies \eqref{eq:main-curvature-condition} then the F-system is locally conservative.
\end{theorem}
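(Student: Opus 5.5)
For the direct implication I would combine the two elementary lemmas. Let $(h^a,F^a)$, $a=1,\dots,n$, be the density--flux pairs, and set $\omega^a:=\dd h^a$. These one-forms are closed, and by \eqref{eq:flux-differential} $C_{X_0}^*\omega^a=\dd F^a$ is also closed. Lemma~\ref{lem:invariant-Hessian} then shows that $\calH^a:=\tnabla\omega^a$ is symmetric and $\circ$-invariant. By Theorem~\ref{thm: existence and uniqueness of cyclic connections} (Proposition~\ref{prop:fullsym}), $\tnabla$ is torsionless and compatible with $\circ$ in the sense of \eqref{eq:nabla-compatible-general}. So Lemma~\ref{lem:curvature-pairing} applies to each $\omega^a$ and gives
\[
\dd h^a\bigl(R^{\tnabla}(X,Y)(Z\circ W)+R^{\tnabla}(Y,Z)(X\circ W)+R^{\tnabla}(Z,X)(Y\circ W)\bigr)=0
\]
for every $a$. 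Because $\dd h^1\wedge\cdots\wedge\dd h^n\neq0$, these covectors span $T^*M$ pointwise, and so the vector in parentheses vanishes, which is \eqref{eq:main-curvature-condition}. This direction is routine.

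For the converse, in the analytic setting, I would build $n$ independent conservation laws by a Cauchy--Kowalevski argument parallel to Theorem~\ref{thm:main-restated}, applied to the density system \eqref{dcl} mentioned in Remark~\ref{coframe}. Intrinsically, I look for closed one-forms $\omega$ such that $\tnabla\omega$ is $\circ$-invariant, i.e.
\[
(\tnabla_X\omega)(Y)=\theta(X\circ Y),\qquad \theta:=(\tnabla_e\omega).
\]
Any such $\omega$ yields a conservation law. Indeed, running the computation \eqref{eq:3.5} backwards with $d_{\tnabla}C_{X_0}=0$ gives $\dd(C_{X_0}^*\omega)(X,Y)=\calH(X,C_{X_0}Y)-\calH(Y,C_{X_0}X)=0$, using invariance and symmetry. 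Then, locally, $\omega=\dd h$ and $C_{X_0}^*\dd h=\dd F$.

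The existence step works in the same non-characteristic coordinates $(s,z^\alpha)$ as in Part~1 of Theorem~\ref{thm:main-restated}, with $T=\partial_s$ product-invertible. I solve for the transverse derivatives via
\[
\tnabla_{Z_\alpha}\omega=(\tnabla_T\omega)\circ_* P_\alpha ,
\]
meaning $(\tnabla_{Z_\alpha}\omega)(Y)=(\tnabla_T\omega)(P_\alpha\circ Y)$. This is a linear first-order analytic system in Cauchy--Kowalevski normal form for the components $\omega_k$. I would prescribe $\omega$ along the curve $C$ arbitrarily ($n$ analytic functions), solve iteratively in $z^2,\dots,z^n$, and propagate the earlier equations by uniqueness, exactly as in Part~5.

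The main obstacle is formal compatibility, i.e. showing $[\mathcal D_\alpha,\mathcal D_\beta]\omega=0$ for the dual operators $\mathcal D_\alpha=\tnabla_{Z_\alpha}-C_{P_\alpha}^*\tnabla_T$. The plan is to dualize the commutator expansion \eqref{eq:nonchar-commutator-expansion}. The first-order coefficient should vanish by the purely algebraic identities \eqref{eq:nonchar-product-T-identity} and \eqref{eq:nonchar-PQ-derivative-identity}, which do not depend on whether the operators act on vectors or covectors. The curvature term becomes $\omega$ composed with curvature, via \eqref{eq:one-form-curvature}. After multiplying by $C_T^*$, it should reduce to $\omega$ applied to the form \eqref{eq:3RC-Zinside} of the curvature condition, rather than \eqref{eq:3RC-outside}. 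This is where I expect to need care: the transposed operators place the product inside the curvature argument, so the identity used must be \eqref{eq:main-curvature-condition} itself, with the roles of $Y$ and $W$ matched appropriately.

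Two further points must be checked. First, that the solution is closed: I expect $\dd\omega$ to satisfy a homogeneous propagated system with zero initial data if $\omega|_C$ is chosen compatibly, or else to follow from the symmetry of $\tnabla\omega$ being preserved. Second, that choosing $n$ Cauchy data forming a frame of $T^*M|_C$ gives $\dd h^1\wedge\cdots\wedge\dd h^n\neq0$ after shrinking, exactly as in Corollary~\ref{cor:product-noncharacteristic-symmetries}.
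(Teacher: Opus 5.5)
Your proposal is correct and follows the paper's route. The direct implication is exactly the paper's argument (Lemma~\ref{lem:invariant-Hessian}, then Lemma~\ref{lem:curvature-pairing}, then the coframe $\dd h^1,\dots,\dd h^n$), and for the converse the paper only invokes Remark~\ref{coframe}, i.e.\ the Cauchy--Kowalevski/formal-compatibility analysis of the density system \eqref{dcl} that you sketch in more detail than the paper does. The points you flag do go through:
\begin{itemize}
\item The curvature part of $[\mathcal D_\alpha,\mathcal D_\beta]\omega$, evaluated on $T\circ W$, is $-\omega$ applied to the left-hand side of \eqref{eq:main-curvature-condition} with $(X,Y,Z)=(U,V,T)$, so it is the hypothesis itself.
\item The first-order part is the transpose of the vector-case coefficient, because differentiating $[C_P,C_Q]=0$ gives $[\tnabla_TC_P,C_Q]=[\tnabla_TC_Q,C_P]$.
\item Closedness is automatic and needs no constraint on the Cauchy data: the reduced system forces $(\tnabla_X\omega)(Y)=\theta(X\circ Y)$, which is symmetric in $X,Y$, so $\dd\omega=0$ by torsion-freeness.
\end{itemize}
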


\begin{proof}
The proof is local.  Choose conserved densities and fluxes
\[
        h^1,\ldots,h^n,
        \qquad F^1,\ldots,F^n,
\]
as in Definition \ref{def:conservative}.  Set
\begin{equation}\label{eq:omega-a}
        \omega^a:=\dd h^a,
        \qquad a=1,\ldots,n.
\end{equation}
Since the densities are independent, the one-forms
\[
        \omega^1,\ldots,\omega^n
\]
form a local coframe.  In particular, a vector field $V$ is zero if and only if
$\omega^a(V)=0$ for every $a=1,\ldots,n$.

Each $\omega^a$ is closed, because it is exact.  Moreover, by the conservation-law
identity \eqref{eq:flux-differential},
\[
        C_{X_0}^*\omega^a=\dd F^a.
\]
Therefore $C_{X_0}^*\omega^a$ is closed as well.  Thus, for each $a$, the one-form
$\omega^a$ satisfies precisely the hypotheses of Lemma \ref{lem:invariant-Hessian}
with respect to the associated connection $\tnabla$.

Applying Lemma \ref{lem:invariant-Hessian}, we obtain for each $a$ a symmetric
bilinear form
\begin{equation}\label{eq:Ha-definition}
        \calH^a:=\tnabla\omega^a,
        \qquad
        \calH^a(X,Y)=(\tnabla_X\omega^a)(Y),
\end{equation}
which is invariant with respect to the product:
\begin{equation}\label{eq:Ha-invariant}
        \calH^a(X\circ Y,Z)=\calH^a(X,Y\circ Z).
\end{equation}
Let us record explicitly where the hypotheses enter.  The closedness of
$\omega^a$ gives the symmetry of $\calH^a$ because $\tnabla$ is torsionless.  The
closedness of $C_{X_0}^*\omega^a$, together with $d_{\tnabla}C_{X_0}=0$, gives the
self-adjointness of $C_{X_0}$ with respect to $\calH^a$.  The cyclicity of $X_0$ then implies that multiplication by any vector field is a
polynomial in $C_{X_0}$, and hence all multiplication operators are self-adjoint
with respect to $\calH^a$.  This is exactly the invariance
\eqref{eq:Ha-invariant}.

Now fix arbitrary local vector fields $X,Y,Z,W$ and define the curvature
combination
\begin{equation}\label{eq:curvature-combination-V}
\mathcal V(X,Y,Z,W):=
        R^{\tnabla}(X,Y)(Z\circ W)
        +R^{\tnabla}(Y,Z)(X\circ W)
        +R^{\tnabla}(Z,X)(Y\circ W).
\end{equation}
We shall show that this vector field is annihilated by the coframe
$\omega^1,\ldots,\omega^n$.

The connection $\tnabla$ is torsionless by construction, and it is compatible with
$\circ$ in the sense \eqref{eq:nabla-compatible-general} by the full symmetry of
$\tnabla$ established in Proposition \ref{prop:fullsym}.  Hence all hypotheses of Lemma
\ref{lem:curvature-pairing} are satisfied with
\[
        \nabla=\tnabla,
        \qquad \omega=\omega^a,
        \qquad \calH=\calH^a.
\]
Therefore Lemma \ref{lem:curvature-pairing} gives
\begin{equation}\label{eq:tested-curvature-expanded}
        \omega^a\bigl(\mathcal V(X,Y,Z,W)\bigr)=0,
        \qquad a=1,\ldots,n.
\end{equation}
Since the $\omega^a$ form a coframe, \eqref{eq:tested-curvature-expanded}
implies
\[
        \mathcal V(X,Y,Z,W)=0.
\]
By the definition \eqref{eq:curvature-combination-V} of $\mathcal V$, this is
precisely \eqref{eq:main-curvature-condition}. This proves the first part of the theorem. 
In the analytic setting, the converse statement follows from Remark \ref{coframe}.

\end{proof}

\begin{remark}
Conservative F-systems include as a particular case F-systems which are Hamiltonian  with respect to a local Hamiltonian structure. Thus the above arguments provides an alternative proof of generalised Novikov's conjecture in the local case. 
\end{remark}

\section{Conclusions}
Tsarev's theory has been developed for strictly hyperbolic quasilinear first order systems of PDEs admitting Riemann invariants. Tsarev defined integrability as compatibility of the linear system of PDEs defining the symmetries of the system. The same integrability conditions ensure the compatibility of the linear system for densities of conservation laws and of Tsarev's system providing the metrics involved in the definition of the associated Hamiltonian structures.

One of the key ingredients of Tsarev's theory is the possibility of determining the solutions of the system using its symmetries via the generalised hodograph method. For a given initial value problem, Tsarev showed that the symmetry defining the solution is uniquely determined by the initial datum thanks to a slight extension of a classical theorem of Darboux. The same theorem allows an extension of Tsarev's theory to a class of non-diagonalisable regular F-systems called Darboux-Tsarev systems (see \cite{LPVG2,LPVG3}).

As in Tsarev's framework, one can define the integrability of this class of systems as compatibility of the linear system for the symmetries. This leads to the geometric conditions 
\beq\label{shc-final}
\tilde R^s_{lmi}c^j_{ks}+\tilde R^s_{lik}c^j_{ms}+\tilde R^s_{lkm}c^j_{is}=0,\qquad
\tilde R^j_{skl}c^s_{mi}+\tilde R^j_{smk}c^s_{li}+\tilde R^j_{slm}c^s_{ki}=0,
\eeq
involving the Riemann tensor of a connection uniquely defined by the system and the structure functions of the underlying Hertling-Manin F-product.

In this paper we have significantly extended the theory in  two  directions.
\begin{itemize}
\item Relaxing the regularity assumption and considering a wider class of F-manifolds.  The
 only assumption we did is the ciclicity of the vector field defining the F-system. This is the minimal
   assumption we need to construct the natural connection that controls the integrability of the  system. In this framework, we prove a generalisation of a conjecture of Novikov (proved by Tsarev in the semisimple case)  relating the existence of a Hamiltonian structure and integrability. 
\item Relaxing the  Darboux-Tsarev hypothesis that allows to extend Tsarev's  theory in the regular setting.  This requires  the analyticity assumptions of the Cauchy-Kowalevski-Cartan-K\"ahler setting. Using these assumptions, we show that an F-system is integrable if and only if it is locally conservative and we prove the existence of a family of symmetries providing the unique local analytic solution of the Cauchy problem through the generalised hodograph method. 
\end{itemize}
The  above results clarify that the compatibility conditions \eqref{shc-final} are not only necessary but also sufficient conditions for the integrability of this wide class of systems, which, as far as we know, includes all known examples of quasilinear systems of evolutionary first-order partial differential equations.

\end{document}